\documentclass[11pt]{article}

\usepackage[T1]{fontenc}
\usepackage[english]{babel}
\usepackage[margin=1in]{geometry}
\usepackage{microtype}
\usepackage{cite}
\usepackage{amsmath,amssymb,amsfonts,mathtools}
\usepackage{amsthm}
\usepackage{graphicx}
\usepackage{algorithm}
\usepackage{algpseudocode}
\usepackage{float}
\usepackage{booktabs}
\usepackage{array}
\usepackage{multirow}
\usepackage{longtable}
\usepackage{adjustbox}
\usepackage{pdflscape}
\usepackage{enumitem}
\usepackage{xcolor}
\usepackage[colorlinks=true,allcolors=blue]{hyperref}
\usepackage[nameinlink,noabbrev]{cleveref}
\hypersetup{
  pdftitle={Resource-Aware Intrusion Detection in Infrastructure Networks: A Game-Theoretic Approach -- Technical Report},
  pdfsubject={Technical report},
  pdfkeywords={integrated sensing and communication, intrusion detection, graph security games, Nash equilibrium, Stackelberg equilibrium, resource allocation, edge computing}
}

\makeatletter
\renewcommand{\theHALG@line}{\thealgorithm.\arabic{ALG@line}}
\makeatother

\graphicspath{{figures/}}
\allowdisplaybreaks
\newtheorem{theorem}{Theorem}[section]

\newtheorem{proposition}{Proposition}[section]
\newtheorem{lemma}{Lemma}[section]
\theoremstyle{definition}

\newtheorem{example}{Example}[section]
\theoremstyle{remark}

\newcommand{\BR}{\operatorname{BR}}
\newcommand{\argmax}{\operatorname*{arg\,max}}

\newcommand{\pos}[1]{\left[#1\right]_+}
\newcommand{\supp}{\operatorname{supp}}
\newcommand{\R}{\mathbb{R}}

\title{
Resource-Aware Intrusion Detection in Infrastructure Networks:\\
A Game-Theoretic Approach\\[0.4em]
{\large Technical Report}
}

\author{
Xuanli Lin$^{*}$,
Zhaofeng Zhang$^{\dagger}$,
Zunzheng Zhang$^{*}$,
Kevin S. Chan$^{\ddagger}$,
Guoliang Xue$^{*}$\\[0.5em]
\small $^{*}$Arizona State University, Tempe, AZ,
\small \texttt{\{xlin54, zzhan621, xue\}@asu.edu}\\
\small $^{\dagger}$Delaware State University, Dover, DE,
\texttt{zzhang@desu.edu}\\
\small $^{\ddagger}$DEVCOM Army Research Laboratory, Adelphi, MD,
\texttt{kevin.s.chan.civ@army.mil}
}

\date{August 2026}

\begin{document}
\maketitle

\begin{abstract}
Infrastructure networks increasingly rely on distributed sensing
to detect intrusions before attackers reach valuable assets. Yet
sensing devices, communication resources, and edge server
capacity are limited, while intelligent attackers can adapt
their routes to the deployed defense. Motivated by integrated
sensing and communication (ISAC), we study how sensing and
processing resources should be allocated
under strategic interaction between a defender and an attacker.
We formulate their interaction
as a graph security game in which the defender deploys sensing
actions under resource and false alarm constraints, while the
attacker selects routes to valuable targets. We consider simultaneous play and settings in which the attacker observes either a
pure defender configuration or a mixed defender strategy.
Our analysis characterizes the existence, structure, and
computational complexity of the Nash and Stackelberg
equilibria, showing how the attacker's observation of the defense affects equilibrium
behavior and when optimal strategies become difficult to compute.
We develop algorithms that construct effective pure configurations and refine
restricted games for mixed Nash and mixed Stackelberg play. On enumerable instances, their
solutions have small mean normalized differences from fully enumerated references; the
methods also apply when exhaustive strategy enumeration is impractical. We also identify
conditions under which Nash and mixed Stackelberg payoffs are ordered or coincide.
\end{abstract}

\noindent\textbf{Keywords:}
Integrated sensing and communication (ISAC), intrusion detection, graph security games,
Nash equilibrium, Stackelberg equilibrium, resource allocation, edge computing.

\begingroup
\small
\tableofcontents
\endgroup
\newpage

\section{Introduction}
\label{sec:introduction}

Infrastructure networks increasingly use distributed sensing to detect intrusions before an
adversary reaches a valuable asset. Integrated sensing and communication (ISAC) makes such
protection possible on shared wireless infrastructure, but the available sensors,
communication resources, and edge server capacity remain limited
\cite{liu2022isac,devoti2020pasid}. A defender must therefore decide where to sense, which
capability to activate, and whether observations should be processed locally or at an edge
server. These coupled choices determine detection coverage, false alarms, deployment cost,
server load, and the sensing--communication tradeoff
\cite{an2023pdtradeoff,he2025versabeam}. Their effectiveness also depends on a strategic
attacker whose target and route choices respond to the information available about the
defense \cite{hu2025game}.

The central question is consequently not only which sensing configuration performs well,
but which pure configuration or mixed defender strategy remains effective under the information
available to the attacker. We use one graph model and one pair of pure action spaces to
study three game settings. In the simultaneous Nash game, neither player observes the
other player's realized action before choosing its own. In the pure Stackelberg setting,
the attacker observes the defender's pure sensing configuration before routing. In the mixed
Stackelberg setting, the attacker observes the distribution over complete
configurations but not the configuration drawn for the current interaction. This
progression isolates how the attacker's observation of the defense changes equilibrium
behavior, computational structure, and the value of hiding the realized configuration.

Security games allocate limited defensive resources across targets or constrained schedules
\cite{kiekintveld2009massive,korzhyk2011multiple,sinha2018stackelberg}. Graph security games
and network interdiction model route selection and exponentially large action spaces
\cite{basilico2012patrolling,jain2011double,letchford2013marginal,
vcerny2024layered,mai2024tackling,washburn1995interdiction,dahan2022network,
smith2020survey}, whereas adversarial network design modifies or protects the topology
itself \cite{ciftcioglu2017topology,pal2016editing,pal2017decentralized}. Intrusion
detection games consider adaptive intruders and false alarm constraints
\cite{bhargav2025sensor,hu2025game}, while integrated sensing, communication, and computing
work optimizes sensing, offloading, and edge computation
\cite{liu2023offloading,liu2025latency,wang2025collaborative,liu2026integrated,
wen2025survey,luo2025unified}. We connect these directions in a graph model with fixed topology that jointly selects monitored edges, sensing types, and processing modes under deployment
cost, edge server capacity, and an aggregate false alarm constraint. The model supports
general-sum payoffs that vary by target, multiplicative route detection, and attacks against
multiple targets.

The formulation exposes a sharp computational divide. For a pure defender configuration,
a logarithmic transformation turns the multiplicative probability of reaching a target
undetected into an additive shortest path objective. For a hidden realization drawn from a
mixed defender strategy, expected non-detection probability is instead a weighted sum of
path products. Correlation induced by the common hidden configuration prevents that
expectation from being represented by one set of exact additive edge weights in general,
as occurs in reliable path models with correlated failures
\cite{chang2007reachability}. This divide motivates a restricted game method related to
branch-and-price and double-oracle approaches \cite{jain2010arbitrary,jain2011double},
although the exact best response subproblems in the mixed games are themselves NP-hard.

The main contributions of this report are the following.
\begin{enumerate}[leftmargin=2.2em,itemsep=0.25em]
    \item We formulate the common graph game and make explicit how sensing type,
    processing mode, resource limits, false alarms, route detection, and payoffs assigned to
    each target determine the two players' expected utilities under pure and mixed strategies. We show that a pure Nash
    equilibrium need not exist, while a mixed Nash equilibrium always exists, and we give
    exact support conditions for an enumerated game in normal form.

    \item For an observed pure defender configuration, we derive an exact
    shortest path attacker response and prove its runtime. We show that an optimal
    pure Stackelberg defense exists under both strong and pessimistic tie-breaking, but
    optimizing that defense is NP-hard even with one perfect sensing type and only a
    cardinality budget. We then develop \texttt{RADAR} (Resource-Aware Downgrading with
    Adaptive Refinement), including its repair score, neighborhood construction,
    termination property, and runtime bound.

    \item For a mixed defender strategy whose realization remains hidden, we prove the
    existence of a mixed strong Stackelberg equilibrium and give a complete example in
    which the pessimistic problem has a supremum but no maximizing strategy. We establish
    NP-hardness of the exact attacker oracle, the pure defender oracle, and mixed
    Stackelberg optimization through explicit reductions. These results motivate
    \texttt{RGR}, a restricted game method guided by \texttt{RADAR} for mixed Nash and mixed
    Stackelberg computation. We derive the Stackelberg column score from the restricted
    linear program and state precisely what its generated action diagnostics do and do not
    certify.

    \item We characterize the relationship between the three games. We give the exact
    condition under which a pure Stackelberg outcome is also a pure Nash equilibrium,
    prove a minimax characterization for attacks on a single route, extend it to coordinated
    attacks on multiple routes under a common payoff ratio, bound the Nash--Stackelberg payoff
    differences, and identify the weighted zero-sum case in which mixed Nash and mixed
    Stackelberg payoffs coincide.

    \item We expand the numerical evaluation beyond the conference presentation.
    On enumerable instances, \texttt{RADAR} has a $3.64\%$ mean normalized gap from exact
    optimization over pure configurations, while \texttt{RGR}'s mean absolute normalized
    defender payoff differences from selected fully enumerated solutions are $0.466\%$ for
    Nash and
    $0.433\%$ for mixed Stackelberg play. The expanded results separate candidate modes
    and candidate counts, report quality by topology and counts of objective evaluations,
    include larger instances for which exhaustive comparison is unavailable, and check
    the predicted payoff identities and inequalities numerically.
\end{enumerate}

The remainder of the report is organized as follows. Section~\ref{sec:model} defines the common
model. Sections~\ref{sec:nash}--\ref{sec:mixed_stackelberg} analyze the simultaneous,
pure Stackelberg, and mixed Stackelberg games. Section~\ref{sec:relations}
compares their outcomes, and Section~\ref{sec:evaluation} presents the expanded evaluation.
The appendices contain the longer reductions, complete supporting proofs, and detailed
evaluation tables.

\section{System Model and Player Strategies}
\label{sec:model}

This section defines the network, the defender's sensing configurations, the attacker's route
actions, and the payoffs assigned to each target. These elements are shared by all three
games; only the order of play and the attacker's observations change.

\subsection{Network Environment}
\label{subsec:network}

Let $G=(\mathcal V,\mathcal E)$ be an undirected graph representing a monitored physical
environment. Vertices may represent entrances, junctions, checkpoints, or target areas,
while edges may represent corridors or links along which an intrusion can progress. The
sets $\mathcal S\subseteq\mathcal V$ and $\mathcal T\subseteq\mathcal V$ contain possible
attack sources and valuable targets, respectively, with
$\mathcal S\cap\mathcal T=\emptyset$. We assume that every target is reachable from at
least one source. The topology is fixed: the defender selects sensing and processing
actions on the existing edges but does not add, remove, or reroute network links.

\subsection{Defender Strategies}
\label{subsec:defender-strategies}

The defender can deploy sensing capabilities along edges $e\in\mathcal E$. Let
$\mathcal K=\{1,\ldots,K\}$ denote the available sensing types, each representing a
distinct sensing capability. Let $\mathcal M=\{0,1\}$ denote the processing modes, where
$m=0$ means local processing and $m=1$ means offloading to an edge server. A sensing
action
\[
 h=(k,e,m)
\]
deploys sensing type $k$ on edge $e$ using processing mode $m$. The set
$\mathcal H\subseteq\mathcal K\times\mathcal E\times\mathcal M$ contains all
operationally supported actions; unsupported combinations of sensing type, edge, and
processing mode are simply absent from $\mathcal H$.

For $h=(k,e,m)$, let $\rho_{kem}\in[0,1]$ be its effective detection probability,
$p^{\mathrm{fa}}_{kem}\in[0,1]$ its false alarm probability, $c_{ke}\ge0$ its deployment
cost, and $u_{ke}\ge0$ its server load when offloaded. The detection and false alarm
parameters may depend on processing mode because local and offloaded processing can use
different algorithms or data resolutions.

For each edge, define
\[
 \mathcal H_e=\{(k,e,m)\in\mathcal H\}.
\]
A defender configuration is a set $z\subseteq\mathcal H$ with at most one sensing action
per edge. Writing $z_e=z\cap\mathcal H_e$, the set of all configurations is
\[
 \mathcal Z=\{z\subseteq\mathcal H:|z_e|\le1,\ \forall e\in\mathcal E\}.
\]
The probability that an intrusion is detected while traversing edge $e$ under $z$ is
\begin{equation}
 p_e(z)=\sum_{(k,e,m)\in z_e}\rho_{kem}.
 \label{eq:edge_detection}
\end{equation}
The sum contains at most one nonzero term because $|z_e|\le1$.

Assuming that false alarm events are independent across deployed sensing actions
\cite{stolkin2009pdplacement,viswanathan1997distributed}, the probability that at least one
deployed action produces a false alarm is
\begin{equation}
 P_{\mathrm{fa}}(z)
 =1-\prod_{e\in\mathcal E}
 \left(1-\sum_{(k,e,m)\in z_e}p^{\mathrm{fa}}_{kem}\right).
 \label{eq:aggregate_false_alarm}
\end{equation}
Let $B_D>0$ be the deployment budget, $C_{\mathrm{srv}}>0$ the edge server capacity, and
$\bar P_{\mathrm{fa}}\in(0,1]$ the aggregate false alarm limit. Define the deployment cost
and server load of $z$ by
\begin{equation}
 b(z)=\sum_{(k,e,m)\in z}c_{ke},
 \qquad
 \ell(z)=\sum_{(k,e,m)\in z}m u_{ke}.
 \label{eq:resource_usage}
\end{equation}
The feasible defender action space is
\begin{equation}
\begin{aligned}
 \mathcal X=\{z\in\mathcal Z:
 &P_{\mathrm{fa}}(z)\le\bar P_{\mathrm{fa}},\quad
 b(z)\le B_D,\quad
 \ell(z)\le C_{\mathrm{srv}}\}.
\end{aligned}
\label{eq:defender_feasible_set}
\end{equation}
Each $z\in\mathcal X$ is a feasible pure defender strategy. The empty configuration has
zero cost, zero server load, and zero false alarm probability, so
$\emptyset\in\mathcal X$ and $\mathcal X$ is always nonempty under the stated parameter
ranges.

\subsection{Attacker Strategies}
\label{subsec:attacker-strategies}

An attack route $r$ is a simple path from a source $s\in\mathcal S$ to a target
$t\in\mathcal T$. Let $\mathcal E(r)$ be its edge set, $t(r)$ its target, and
$\mathcal R$ the finite set of all such routes. For a target $t$, write
\[
 \mathcal R_t=\{r\in\mathcal R:t(r)=t\}.
\]

For $z\in\mathcal Z$, assume that detection events on different edges are independent
conditional on the realized configuration $z$ \cite{chang2007reachability}. The route
non-detection probability, meaning the probability that the attacker reaches the target
undetected via $r$, is
\begin{equation}
 q_r(z)=\prod_{e\in\mathcal E(r)}\bigl(1-p_e(z)\bigr).
 \label{eq:route_survival}
\end{equation}
Conditional independence is imposed only after $z$ is fixed. When the defender later
randomizes over complete configurations, the common hidden realization can correlate the
unconditional sensing outcomes on different edges.

Restricting routes to simple paths loses no attacker optimum. Consider any source--target
walk containing a cycle. Removing the cycle preserves its source and target while removing
factors from the product in \eqref{eq:route_survival}. Every removed factor lies in $[0,1]$,
so the resulting route has non-detection probability at least as large. Because the attacker
utility defined below is strictly increasing in this probability, a cyclic walk cannot be
strictly better than the corresponding simple path.

The attacker may launch attacks against several targets simultaneously
\cite{korzhyk2011multiple}. A pure attacker strategy is a set of routes containing at most
one route to each target. Formally,
\begin{equation}
 \mathcal A=
 \left\{a\subseteq\mathcal R:
 \left|a\cap\mathcal R_t\right|\le1,\ \forall t\in\mathcal T\right\}.
 \label{eq:attacker_action_set}
\end{equation}
The empty action $\emptyset\in\mathcal A$ gives both players zero utility and represents
the attacker abstaining from attack. The model places no exogenous upper bound on the
number of simultaneously attacked targets; instead, the sign of the route utility
determines whether attacking each target is worthwhile.

Figure~\ref{fig:model} illustrates one pure strategy pair.
\begin{figure}[htbp]
    \centering
    \includegraphics[width=0.3\textwidth]{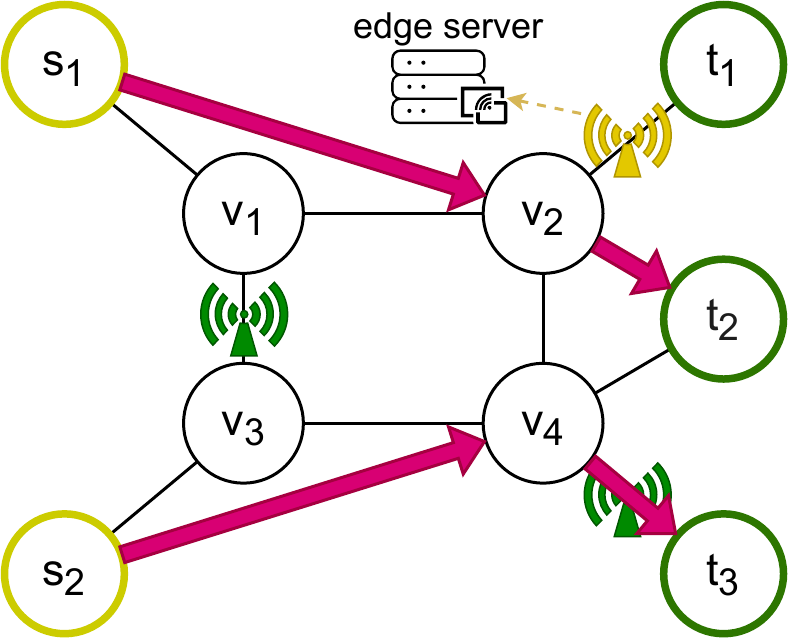}
    \caption{Example pure strategy pair. The defender deploys two locally processed
    sensing actions and one offloaded action; the attacker selects two routes to distinct
    targets.}
    \label{fig:model}
\end{figure}

\subsection{Utility Functions}
\label{subsec:utilities}

For target $t$, let $U_D^u(t)<0$ and $U_D^d(t)>0$ be the defender utilities when an
attack is undetected and detected, respectively. Define $U_A^u(t)>0$ and $U_A^d(t)<0$
analogously for the attacker. These sign assumptions make an undetected intrusion harmful
to the defender and beneficial to the attacker, while detection reverses those effects.
The magnitudes may vary by target, allowing the players to value the same target
differently.

For a pure defender configuration $z$ and route $r$, the defender's expected route utility
is
\begin{equation}
\begin{aligned}
 U_D(z,r)
 &=q_r(z)U_D^u(t(r))
   +\bigl(1-q_r(z)\bigr)U_D^d(t(r))\\
 &=U_D^d(t(r))-\Delta_D(t(r))q_r(z),
\end{aligned}
\label{eq:defender_route_utility}
\end{equation}
where
\[
 \Delta_D(t)=U_D^d(t)-U_D^u(t)>0.
\]
The attacker's expected route utility is
\begin{equation}
\begin{aligned}
 U_A(z,r)
 &=q_r(z)U_A^u(t(r))
   +\bigl(1-q_r(z)\bigr)U_A^d(t(r))\\
 &=U_A^d(t(r))+\Delta_A(t(r))q_r(z),
\end{aligned}
\label{eq:attacker_route_utility}
\end{equation}
where
\[
 \Delta_A(t)=U_A^u(t)-U_A^d(t)>0.
\]
Thus, for a fixed target, the defender route utility decreases with the non-detection
probability, whereas the attacker route utility increases with it.

Utilities are additive across the routes in an attacker action:
\begin{equation}
 U_D(z,a)=\sum_{r\in a}U_D(z,r),
 \qquad
 U_A(z,a)=\sum_{r\in a}U_A(z,r).
 \label{eq:action_utilities}
\end{equation}
A target omitted from $a$ contributes zero to both players. Additivity and the restriction
of at most one route per target imply that, once the best route to each target is known, the
attacker can decide independently whether to include that target. This separability is used
by the shortest path response in Section~\ref{sec:pure_stackelberg} and the candidate
generator in Section~\ref{sec:mixed_stackelberg}.

\begin{table}[htbp]
\centering
\caption{Frequently used notation.}
\label{tab:notation}
\begin{tabular}{@{}p{0.20\textwidth}p{0.72\textwidth}@{}}
\toprule
Symbol & Meaning \\
\midrule
$G=(\mathcal V,\mathcal E)$ & Monitored undirected graph \\
$\mathcal S,\mathcal T$ & Attack sources and valuable targets \\
$\mathcal H,\mathcal H_e$ & Supported sensing actions and actions supported on edge $e$ \\
$\mathcal Z,\mathcal X$ & All defender configurations and feasible configurations \\
$p_e(z),P_{\mathrm{fa}}(z)$ & Edge detection probability and aggregate false alarm probability \\
$b(z),\ell(z)$ & Deployment cost and edge server load \\
$\mathcal R,\mathcal R_t$ & All attack routes and routes terminating at target $t$ \\
$\mathcal A$ & Pure attacker actions, with at most one route per target \\
$q_r(z)$ & Route non-detection probability \\
$U_i^u(t),U_i^d(t)$ & Player $i\in\{D,A\}$ payoff at target $t$ under an undetected or detected attack \\
$\Delta_D(t),\Delta_A(t)$ & Payoff differences between detected and undetected outcomes \\
\bottomrule
\end{tabular}
\end{table}

The pure action spaces $\mathcal X$ and $\mathcal A$ are common to all formulations below.
Decisions are simultaneous in the Nash game. In the Stackelberg games, the defender acts
first and the attacker responds after observing either the realized pure configuration or
the mixed defender strategy but not its realization.

\section{Simultaneous Nash Game}
\label{sec:nash}

We begin with the simultaneous Nash game. The defender chooses $z\in\mathcal X$ and the
attacker chooses $a\in\mathcal A$ without observing the other player's realized action. A
Nash equilibrium makes these choices mutually stable: each player's strategy is a best
response to the other player's strategy.

\subsection{Pure Nash Equilibrium Need Not Exist}
\label{subsec:pure-ne}

A pure Nash equilibrium (NE) is a pair $(z^{\mathrm{NE}},a^{\mathrm{NE}})$ satisfying
\begin{align}
 U_D(z^{\mathrm{NE}},a^{\mathrm{NE}})
 &\ge U_D(z,a^{\mathrm{NE}}),
 &&\forall z\in\mathcal X,
 \label{eq:pure-ne-defender}\\
 U_A(z^{\mathrm{NE}},a^{\mathrm{NE}})
 &\ge U_A(z^{\mathrm{NE}},a),
 &&\forall a\in\mathcal A.
 \label{eq:pure-ne-attacker}
\end{align}
Route adaptation can prevent such stability even in a minimal instance.

\begin{theorem}
\label{thm:pure-ne-existence}
A pure Nash equilibrium need not exist.
\end{theorem}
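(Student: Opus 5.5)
The plan is to exhibit a minimal counterexample in which the attacker's route adaptation and the defender's coverage chase each other in a cycle. Take one source $s$, one target $t$, and two parallel, edge-disjoint $s$--$t$ paths. The simplest version is a multigraph with two edges $e_1,e_2$ between $s$ and $t$. If multigraphs are to be avoided, use two paths $s$--$v_1$--$t$ and $s$--$v_2$--$t$ and support sensing only on the edges $e_1=(s,v_1)$ and $e_2=(s,v_2)$.

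Let $K=1$ and allow only local processing, so $\mathcal H=\{(1,e_1,0),(1,e_2,0)\}$. Set $\rho=1$, $p^{\mathrm{fa}}=0$, and unit cost, with budget $B_D=1$ and any $C_{\mathrm{srv}}>0$, $\bar P_{\mathrm{fa}}\in(0,1]$. The feasible set is then $\mathcal X=\{\emptyset,\{h_1\},\{h_2\}\}$, where $h_i=(1,e_i,0)$. The attacker's actions are $\emptyset$, $\{r_1\}$ and $\{r_2\}$, with $r_i$ the route through $e_i$. Pick any payoffs with the stated signs, for instance $U_D^u=-1$, $U_D^d=1$, $U_A^u=1$, $U_A^d=-1$.

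Next I compute $q_r(z)$ from \eqref{eq:route_survival}: $q_{r_i}(\{h_i\})=0$, and $q_{r_i}(z)=1$ for every other $z$. By \eqref{eq:defender_route_utility}--\eqref{eq:action_utilities}, this gives utilities $U_A(z,\{r_i\})\in\{U_A^d,U_A^u\}$ and $U_D(z,\{r_i\})\in\{U_D^d,U_D^u\}$ accordingly.

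The core step is a case check over the attacker's three pure actions.

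\emph{Case $a=\emptyset$.} For every $z$ the attacker can deviate to a route $r_i$ with $e_i$ unmonitored. Such an $r_i$ exists because $|z|\le1$, and it gives the attacker $U_A^u>0$. So $a=\emptyset$ is never a best response.

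\emph{Case $a=\{r_i\}$.} The defender's unique best response is $\{h_i\}$, since it yields $U_D^d>0>U_D^u$. Against $\{h_i\}$, however, the attacker gets $U_A^d<0$ from $r_i$ and strictly prefers $r_{3-i}$, which yields $U_A^u>0$.

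So no pair satisfies both \eqref{eq:pure-ne-defender} and \eqref{eq:pure-ne-attacker}.

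The main obstacle is only bookkeeping. I have to make sure the instance respects the model's conventions: simple paths, $\mathcal S\cap\mathcal T=\emptyset$, reachability, and the point that the budget (rather than the false alarm constraint) is what forces $|z|\le1$. I also have to cover the attacker's abstention option, which the first case does. Taking $\rho=1$ keeps every inequality strict and avoids ties. If needed, I would remark that any $\rho\in(0,1]$ with $\rho\,\Delta_A>0$ works, because the deviation from the covered route then gains $\rho\Delta_A>0$.
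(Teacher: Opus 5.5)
Your proof is correct and uses the same idea as the paper: a budget that admits one perfect sensor over two edge-disjoint routes creates a best-response cycle, which is verified by an exhaustive case check of a game with three defender configurations. The paper instead places the two routes at two different targets, so the attacker also has the two-route action $\{r_1,r_2\}$ and the case split runs over the defender's configurations; your single-target parallel-path instance limits the attacker to single routes and, as you observe, still works for any $\rho\in(0,1]$ and any payoffs satisfying the model's sign conventions.
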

\begin{proof}
Consider two edge-disjoint routes $r_1$ and $r_2$ leading to two different targets. The
defender can deploy at most one perfect sensor and can therefore cover at most one route.
Let $z_0$ denote the empty configuration, and let $z_j$ place a perfect sensor on $r_j$ for
$j\in\{1,2\}$. Thus,
\[
 \mathcal X=\{z_0,z_1,z_2\}.
\]
For both targets, set
\[
 U_A^u=3,\qquad U_A^d=-1,\qquad
 U_D^u=-1,\qquad U_D^d=1.
\]
An uncovered route remains undetected with probability one and gives payoff $(-1,3)$,
ordered as $(U_D,U_A)$. A perfectly covered route remains undetected with probability zero
and gives $(1,-1)$.
By additivity across targets, the nonempty part of the payoff matrix is
\begin{equation*}
\begin{array}{c|ccc}
 & \{r_1\} & \{r_2\} & \{r_1,r_2\}\\ \hline
 z_0 & (-1,3) & (-1,3) & (-2,6)\\
 z_1 & (1,-1) & (-1,3) & (0,2)\\
 z_2 & (-1,3) & (1,-1) & (0,2).
\end{array}
\end{equation*}
The empty attacker action gives $(0,0)$ against every defender action and is never an
attacker best response in this instance.

We now examine every possible defender action. Against $z_0$, both routes have positive
attacker utility, so the unique attacker best response is $\{r_1,r_2\}$. Holding that
action fixed, the defender improves from $-2$ to $0$ by changing to either $z_1$ or $z_2$.
Consequently, no pair with defender action $z_0$ satisfies
\eqref{eq:pure-ne-defender}--\eqref{eq:pure-ne-attacker}.

Against $z_1$, route $r_1$ gives attacker utility $-1$ and route $r_2$ gives utility $3$.
The unique attacker best response is therefore $\{r_2\}$. Holding $\{r_2\}$ fixed, the
defender improves from $-1$ under $z_1$ to $1$ under $z_2$. Hence no pair with defender
action $z_1$ is a pure NE\@. By symmetry, the unique attacker best response to $z_2$ is
$\{r_1\}$, and the defender improves by changing to $z_1$. These three cases exhaust
$\mathcal X$, so no pure Nash equilibrium exists.
\end{proof}

The failure is not caused by imperfect sensing, processing constraints, or complicated
routes. The attacker's best response changes with sensor placement, while each Nash
deviation holds the other player's current action fixed. The resulting best response cycle
prevents a stable pure pair.

\subsection{Mixed Nash Equilibrium}
\label{subsec:mixed-ne}

The players may instead randomize over complete pure actions. Let
$\Delta(\mathcal X)$ and $\Delta(\mathcal A)$ be the corresponding probability simplices,
and let $\sigma\in\Delta(\mathcal X)$ and $\mu\in\Delta(\mathcal A)$ denote defender and
attacker mixed strategies. A draw from $\sigma$ realizes one complete feasible
configuration; it does not fractionally deploy individual sensing actions.

The expected utilities are
\begin{align}
 \bar U_D(\sigma,\mu)
 &=\sum_{z\in\mathcal X}\sum_{a\in\mathcal A}
 \sigma(z)\mu(a)U_D(z,a),
 \label{eq:mixed-defender-utility}\\
 \bar U_A(\sigma,\mu)
 &=\sum_{z\in\mathcal X}\sum_{a\in\mathcal A}
 \sigma(z)\mu(a)U_A(z,a).
 \label{eq:mixed-attacker-utility}
\end{align}
We use the same notation when one argument is pure. For example,
$\bar U_D(\sigma,a)=\sum_z\sigma(z)U_D(z,a)$ and
$\bar U_D(z,\mu)=\sum_a\mu(a)U_D(z,a)$.

A mixed NE $(\sigma^{\mathrm{NE}},\mu^{\mathrm{NE}})$ satisfies
\begin{align}
 \bar U_D(\sigma^{\mathrm{NE}},\mu^{\mathrm{NE}})
 &\ge \bar U_D(z,\mu^{\mathrm{NE}}),
 &&\forall z\in\mathcal X,
 \label{eq:mixed-ne-defender}\\
 \bar U_A(\sigma^{\mathrm{NE}},\mu^{\mathrm{NE}})
 &\ge \bar U_A(\sigma^{\mathrm{NE}},a),
 &&\forall a\in\mathcal A.
 \label{eq:mixed-ne-attacker}
\end{align}
It is sufficient to check pure deviations. Indeed, the payoff from any mixed deviation is a
convex combination of the payoffs from its pure actions and therefore cannot exceed the
largest payoff from a pure deviation.

\begin{theorem}
\label{thm:mixed-ne-existence}
The simultaneous graph game has at least one mixed Nash equilibrium.
\end{theorem}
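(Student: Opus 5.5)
The plan is to reduce the statement to Nash's theorem for finite games. The work lies in verifying that both pure action sets are finite and nonempty, and that the mixed utilities \eqref{eq:mixed-defender-utility}--\eqref{eq:mixed-attacker-utility} are exactly the multilinear extensions of the pure payoffs.

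First I check finiteness and nonemptiness. The set $\mathcal H\subseteq\mathcal K\times\mathcal E\times\mathcal M$ is finite because $K$, $|\mathcal E|$ and $|\mathcal M|=2$ are finite. Hence $\mathcal Z\subseteq 2^{\mathcal H}$ is finite, and so is $\mathcal X\subseteq\mathcal Z$. It is nonempty because $\emptyset\in\mathcal X$, as noted after \eqref{eq:defender_feasible_set}. On the attacker side, $\mathcal R$ is the set of simple source--target paths in a finite graph, so it is finite. Then $\mathcal A\subseteq 2^{\mathcal R}$ is finite and contains $\emptyset$. Every pure payoff $U_D(z,a)$ and $U_A(z,a)$ is a well-defined real number, given by \eqref{eq:route_survival}--\eqref{eq:action_utilities}, because each $q_r(z)\in[0,1]$.

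Next I observe that the simultaneous game is therefore a finite two-player normal-form game with payoff matrices $(U_D(z,a))$ and $(U_A(z,a))$ indexed by $\mathcal X\times\mathcal A$. The mixed utilities \eqref{eq:mixed-defender-utility}--\eqref{eq:mixed-attacker-utility} are the standard bilinear expected payoffs, and \eqref{eq:mixed-ne-defender}--\eqref{eq:mixed-ne-attacker} is the standard mixed equilibrium condition checked against pure deviations. By the remark preceding the theorem, this condition is equivalent to checking mixed deviations. Nash's theorem then gives existence directly.

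To keep the argument self-contained, I could instead sketch the fixed-point route. The set $\Delta(\mathcal X)\times\Delta(\mathcal A)$ is a nonempty compact convex subset of a finite-dimensional Euclidean space. The best-response correspondence has nonempty, convex, compact values: each player's payoff is linear in its own mixed strategy, so its best-response set is the face of the simplex spanned by the maximizing pure actions. Continuity of the bilinear payoffs gives the correspondence a closed graph via Berge's maximum theorem. Kakutani's theorem then yields a fixed point, and such a fixed point is exactly a mixed NE.

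There is no real obstacle. The only point needing care is finiteness of $\mathcal A$, which relies on restricting routes to simple paths. Walks with cycles would make $\mathcal R$ infinite, and the earlier cycle-removal argument justifies excluding them. I also note that the general-sum payoffs and the exponential size of $\mathcal X$ and $\mathcal A$ do not affect existence; they affect only computation.
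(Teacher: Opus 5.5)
Your proposal is correct and matches the paper's proof. Both show that $\mathcal X\subseteq 2^{\mathcal H}$ and $\mathcal A\subseteq 2^{\mathcal R}$ are finite, with $\mathcal R$ finite because it consists of simple paths in a finite graph, and then invoke Nash's existence theorem for finite games; your explicit nonemptiness check and optional Kakutani sketch are valid but not needed beyond that.
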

\begin{proof}
The defender action set $\mathcal X$ is finite because it is a subset of the power set of
the finite set $\mathcal H$ of sensing actions. The route set $\mathcal R$ is finite because a
finite graph has finitely many simple paths, and $\mathcal A$ is finite because it is a
subset of the power set of $\mathcal R$. The game is therefore a finite two-player game in
normal form. Nash's existence theorem for finite games guarantees a mixed equilibrium
\cite{nash1951noncooperative}.
\end{proof}

For the instance used in the proof of Theorem~\ref{thm:pure-ne-existence}, one mixed NE is
obtained by assigning probability $1/2$ to each of $z_1$ and $z_2$ and letting the attacker
choose $\{r_1,r_2\}$. Each route then gives the attacker expected utility $1$, so attacking
both gives utility $2$ and is optimal. Against the two-route action, both $z_1$ and $z_2$
give the defender utility zero, whereas $z_0$ gives $-2$. Thus both players are best
responding even though no pure action pair is stable.

\subsection{Support Characterization for Enumerated Games}
\label{subsec:ne-support}

When all pure actions are explicitly enumerated, candidate supports provide an exact way to
compute and verify mixed equilibria. Let nonempty sets $S_D\subseteq\mathcal X$ and
$S_A\subseteq\mathcal A$ be candidate supports. Set all probabilities outside these sets
to zero, and introduce payoff variables $v_D$ and $v_A$. Consider
\begin{align}
 \sum_{a\in S_A}\mu(a)U_D(z,a)&=v_D,
 &&z\in S_D,
 \label{eq:ne-support-defender-eq}\\
 \sum_{a\in S_A}\mu(a)U_D(z,a)&\le v_D,
 &&z\in\mathcal X\setminus S_D,
 \label{eq:ne-support-defender-ineq}\\
 \sum_{z\in S_D}\sigma(z)U_A(z,a)&=v_A,
 &&a\in S_A,
 \label{eq:ne-support-attacker-eq}\\
 \sum_{z\in S_D}\sigma(z)U_A(z,a)&\le v_A,
 &&a\in\mathcal A\setminus S_A,
 \label{eq:ne-support-attacker-ineq}
\end{align}
together with
\begin{equation}
 \sigma(z)\ge0,\quad \mu(a)\ge0,\qquad
 \sum_{z\in S_D}\sigma(z)=1,\quad
 \sum_{a\in S_A}\mu(a)=1.
 \label{eq:ne-support-probabilities}
\end{equation}

\begin{proposition}
\label{prop:ne-support-characterization}
Every feasible solution of
\eqref{eq:ne-support-defender-eq}--\eqref{eq:ne-support-probabilities} defines a mixed NE
whose actual supports are contained in $(S_D,S_A)$. Conversely, every mixed NE satisfies
this system when $S_D$ and $S_A$ are its exact supports.
\end{proposition}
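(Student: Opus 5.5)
The plan is to prove the two directions separately. Throughout we use the reduction, noted after \eqref{eq:mixed-ne-attacker}, that equilibrium conditions only need to be checked against pure deviations.

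\emph{Feasible solution $\Rightarrow$ mixed NE.} Take a feasible $(\sigma,\mu,v_D,v_A)$ and extend $\sigma,\mu$ by zero outside $S_D,S_A$. By \eqref{eq:ne-support-probabilities} these are probability distributions on $\mathcal X$ and $\mathcal A$, and their supports lie in $S_D$ and $S_A$.

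The defender side goes as follows. Since $\mu$ vanishes outside $S_A$, we have $\bar U_D(z,\mu)=\sum_{a\in S_A}\mu(a)U_D(z,a)$ for every $z\in\mathcal X$. By \eqref{eq:ne-support-defender-eq}--\eqref{eq:ne-support-defender-ineq} this quantity equals $v_D$ on $S_D$ and is at most $v_D$ elsewhere. Averaging over $\sigma$, which is supported in $S_D$ and sums to one, gives $\bar U_D(\sigma,\mu)=v_D$. Hence $\bar U_D(\sigma,\mu)\ge\bar U_D(z,\mu)$ for every pure $z$, which is \eqref{eq:mixed-ne-defender}. The attacker side is symmetric: use \eqref{eq:ne-support-attacker-eq}--\eqref{eq:ne-support-attacker-ineq} to get $\bar U_A(\sigma,\mu)=v_A\ge\bar U_A(\sigma,a)$ for all $a\in\mathcal A$, which is \eqref{eq:mixed-ne-attacker}. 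Positive probabilities can occur only inside $S_D$ and $S_A$, but zeros are allowed there, which is why the statement says only ``contained in''.

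\emph{Mixed NE $\Rightarrow$ feasible solution.} Let $(\sigma,\mu)$ be a mixed NE with exact supports $S_D,S_A$. Both sets are nonempty because the strategies are distributions. Set $v_D=\bar U_D(\sigma,\mu)$ and $v_A=\bar U_A(\sigma,\mu)$. Then \eqref{eq:ne-support-probabilities} holds trivially, and the inequalities \eqref{eq:ne-support-defender-ineq} and \eqref{eq:ne-support-attacker-ineq} follow directly from \eqref{eq:mixed-ne-defender}--\eqref{eq:mixed-ne-attacker}, since the sums over $S_A$ and $S_D$ equal $\bar U_D(z,\mu)$ and $\bar U_A(\sigma,a)$.

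The only substantive step is the equalities \eqref{eq:ne-support-defender-eq}, \eqref{eq:ne-support-attacker-eq}, i.e.\ the indifference principle. For the defender, write
\[
 0=\sum_{z\in S_D}\sigma(z)\bigl(v_D-\bar U_D(z,\mu)\bigr).
\]
Every term in this sum is nonnegative by the NE inequality, and every weight $\sigma(z)$ is strictly positive on the exact support. Hence each term vanishes, so $\bar U_D(z,\mu)=v_D$ for every $z\in S_D$. The same argument with $\mu$ gives $\bar U_A(\sigma,a)=v_A$ on $S_A$.

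This positivity argument is where exactness of the supports is needed, and it is the point I would be careful to state explicitly. Everything else is bookkeeping with finite sums, which is legitimate because $\mathcal X$ and $\mathcal A$ are finite (as shown in the proof of Theorem~\ref{thm:mixed-ne-existence}).
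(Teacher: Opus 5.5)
Your proof is correct and follows essentially the same route as the paper. The forward direction checks the pure-deviation Nash inequalities directly from the system. The converse uses the indifference principle on the exact supports; your sum of nonnegative terms with strictly positive weights $\sigma(z)$ is the direct form of the paper's contradiction argument, which notes that a strictly suboptimal supported action would pull the $\sigma$-weighted average below the maximum.
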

\begin{proof}
Suppose the system is feasible. Every defender action assigned positive probability lies in
$S_D$ and gives expected payoff $v_D$ against $\mu$ by
\eqref{eq:ne-support-defender-eq}. Hence the defender's expected payoff under $\sigma$ is
also $v_D$. Every action outside $S_D$ gives payoff at most $v_D$ by
\eqref{eq:ne-support-defender-ineq}; actions inside $S_D$ give exactly $v_D$. Therefore no
pure defender deviation is profitable, and, by linearity, no mixed defender deviation is
profitable. The same argument using
\eqref{eq:ne-support-attacker-eq}--\eqref{eq:ne-support-attacker-ineq} shows that the
attacker's expected payoff is $v_A$ and that no attacker deviation is profitable. Thus
$(\sigma,\mu)$ is a mixed NE\@.

Conversely, let $(\sigma,\mu)$ be a mixed NE and let $S_D=\supp(\sigma)$ and
$S_A=\supp(\mu)$. Every action in the support of a best response must itself be a best
response. If, for example, some $z\in S_D$ gave strictly less than the maximum defender
payoff against $\mu$, then the weighted average under $\sigma$ would be strictly less than
that maximum, contradicting \eqref{eq:mixed-ne-defender}. Therefore every $z\in S_D$ gives
the common equilibrium payoff $v_D$, and every action outside $S_D$ gives at most $v_D$.
The attacker conditions follow identically. The support probabilities are nonnegative and
sum to one, so the equilibrium satisfies the complete system.
\end{proof}

Exhaustive enumeration of candidate support pairs is exact for an explicitly listed game,
but it scales poorly: the graph may encode exponentially many pure configurations and
routes, and the number of candidate support pairs is itself exponential in the size of the
game in normal form. The next two sections exploit the graph structure where possible and use
restricted action sets where exact enumeration is impractical.

\section{Pure Stackelberg Game}
\label{sec:pure_stackelberg}

We next consider a sequential setting in which reconnaissance reveals the defender's pure
sensing configuration. The defender acts first and selects $z\in\mathcal X$. The attacker observes
that complete configuration before choosing $a\in\mathcal A$. The defender is therefore the
leader and the attacker is the follower.

For any $z\in\mathcal Z$, define the attacker best response set
\begin{equation}
 \BR_A(z)=\argmax_{a\in\mathcal A}U_A(z,a).
 \label{eq:pure-attacker-br-set}
\end{equation}
Although the defender ultimately selects from $\mathcal X$, we define the response on all
of $\mathcal Z$ because \texttt{RADAR} evaluates infeasible intermediate configurations.
The set $\BR_A(z)$ is nonempty because $\mathcal A$ is finite and nonempty.

When the attacker has multiple best responses, strong tie-breaking selects one that is most
favorable to the defender, while pessimistic tie-breaking selects one that is least
favorable. Define
\begin{equation}
 V_D^+(z)=\max_{a\in\BR_A(z)}U_D(z,a),
 \qquad
 V_D^-(z)=\min_{a\in\BR_A(z)}U_D(z,a).
 \label{eq:pure-stackelberg-values}
\end{equation}
A pure strong Stackelberg equilibrium (SSE) is a pair
$(z^{\mathrm{SE}},a_z^{\mathrm{SE}})$ such that
\begin{equation}
 z^{\mathrm{SE}}\in\argmax_{z\in\mathcal X}V_D^+(z),
 \qquad
 a_z^{\mathrm{SE}}\in
 \argmax_{a\in\BR_A(z^{\mathrm{SE}})}U_D(z^{\mathrm{SE}},a).
 \label{eq:pure-sse-definition}
\end{equation}
The pessimistic pure Stackelberg problem maximizes $V_D^-(z)$ over $z\in\mathcal X$.

\subsection{Attacker Best Response to an Observed Configuration}
\label{subsec:pure-attacker-br}

Fix $z\in\mathcal Z$. Because $\Delta_A(t)>0$, the attacker utility
\eqref{eq:attacker_route_utility} is strictly increasing in $q_r(z)$ among routes to the
same target. The attacker therefore prefers a route with maximum non-detection probability.
Define the nonnegative edge weight
\begin{equation}
 w_e(z)=-\log\bigl(1-p_e(z)\bigr),
 \label{eq:pure-edge-weight}
\end{equation}
with $w_e(z)=+\infty$ when $p_e(z)=1$. For any route with positive non-detection probability,
\begin{equation}
 -\log q_r(z)
 =-\log\prod_{e\in\mathcal E(r)}(1-p_e(z))
 =\sum_{e\in\mathcal E(r)}w_e(z).
 \label{eq:log-route-survival}
\end{equation}
Because $x\mapsto-\log x$ is strictly decreasing on $(0,1]$, maximizing the route
non-detection probability is equivalent to minimizing the sum in
\eqref{eq:log-route-survival}.

To find the best route from any source, add an auxiliary source $s_0$ with a zero-weight
edge to each $s\in\mathcal S$. Assign weight $w_e(z)$ to every original edge and remove
edges with $p_e(z)=1$, which have infinite weight. One Dijkstra run from $s_0$ returns the
minimum distance $d_z(v)$ to every vertex. For target $t$, the maximum route non-detection
probability is
\begin{equation}
 q_t^{\max}(z)=e^{-d_z(t)},
 \label{eq:best-target-survival}
\end{equation}
where $e^{-\infty}=0$. Any predecessor path returned for $t$ maximizes the non-detection
probability when $d_z(t)<\infty$.

The best route to target $t$ gives attacker and defender utility contributions
\begin{align}
 \theta_t(z)
 &=U_A^d(t)+\Delta_A(t)e^{-d_z(t)},
 \label{eq:target-attacker-contribution}\\
 \phi_t(z)
 &=U_D^d(t)-\Delta_D(t)e^{-d_z(t)}.
 \label{eq:target-defender-contribution}
\end{align}
If $\theta_t(z)>0$, every attacker best response includes a route with maximum
non-detection probability to $t$,
because adding it strictly improves attacker utility. If $\theta_t(z)<0$, every best
response omits $t$. When $\theta_t(z)=0$, inclusion and omission give the attacker the same
utility. Strong tie-breaking includes the route exactly when doing so improves the
defender's payoff, while pessimistic tie-breaking includes it exactly when doing so lowers
the defender's payoff. Equivalently,
\begin{align}
 V_D^+(z)
 &=\sum_{t:\,\theta_t(z)>0}\phi_t(z)
   +\sum_{t:\,\theta_t(z)=0}\max\{\phi_t(z),0\},
 \label{eq:pure-strong-value-explicit}\\
 V_D^-(z)
 &=\sum_{t:\,\theta_t(z)>0}\phi_t(z)
   +\sum_{t:\,\theta_t(z)=0}\min\{\phi_t(z),0\}.
 \label{eq:pure-pessimistic-value-explicit}
\end{align}
The particular route chosen among routes of equal transformed weight to the same target does
not affect either player's payoff, because those routes have the same target and the same
non-detection probability.

\begin{theorem}
\label{thm:pure_attacker_br}
For any $z\in\mathcal Z$, the procedure above returns an attacker best response and computes
both $V_D^+(z)$ and $V_D^-(z)$ in
$O((|\mathcal E|+|\mathcal V|)\log|\mathcal V|)$ time.
\end{theorem}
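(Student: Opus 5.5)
The argument splits into correctness and running time. For correctness, fix $z\in\mathcal Z$ and use the additive decomposition \eqref{eq:action_utilities} together with the constraint $|a\cap\mathcal R_t|\le1$. These give
\[
 \max_{a\in\mathcal A}U_A(z,a)=\sum_{t\in\mathcal T}\max\Bigl\{0,\ \max_{r\in\mathcal R_t}U_A(z,r)\Bigr\}.
\]
So the attacker's problem separates into one problem per target: pick the best route to $t$, then decide whether to include it. Since $\Delta_A(t)>0$, \eqref{eq:attacker_route_utility} gives $\max_{r\in\mathcal R_t}U_A(z,r)=U_A^d(t)+\Delta_A(t)\max_{r\in\mathcal R_t}q_r(z)$.

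It therefore suffices to show that $\max_{r\in\mathcal R_t}q_r(z)=e^{-d_z(t)}$, and that the predecessor path attains this value. I will argue this in three cases.
\begin{itemize}[leftmargin=2em]
\item \emph{Routes with $q_r(z)>0$.} Such a route avoids every removed edge, and by \eqref{eq:log-route-survival} its weight in the auxiliary graph is exactly $-\log q_r(z)$.
\item \emph{Routes with $q_r(z)=0$.} Such a route uses an edge with $p_e(z)=1$, which has been deleted, so it cannot beat a finite-distance path. If every route to $t$ is of this kind, then $d_z(t)=\infty$ and the formula gives $0$, which is correct.
\item \emph{The auxiliary source.} Each $s_0$--$t$ path in the augmented graph consists of one zero-weight edge to some source $s$ followed by an $s$--$t$ path of the same weight. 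Dijkstra's predecessor paths are simple, so the returned path is a valid route. One detail: it should start at its last source so that it is a route in $\mathcal R$. If it passes through another source first, truncating there gives a suffix that is still a shortest path, because all weights are nonnegative.
\end{itemize}
Nonnegativity of $w_e(z)$ holds because $p_e(z)\in[0,1]$, so Dijkstra is applicable.

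With the best route known for each $t$, the three-way case split on the sign of $\theta_t(z)$ shows that the constructed action lies in $\BR_A(z)$. Every best response must contain a best route to each $t$ with $\theta_t>0$ (otherwise swapping it in strictly helps the attacker) and must omit each $t$ with $\theta_t<0$. When $\theta_t=0$, either choice is optimal for the attacker. The defender's contribution from a target $t$ depends only on whether $t$ is attacked and, if so, on $q_r(z)$. For every attacker-optimal route this equals $e^{-d_z(t)}$, because any best response with $\theta_t\ge0$ that includes $t$ must use a maximizing route (for $\theta_t=0$ a non-maximizing route would give negative attacker utility). Hence the defender's value over $\BR_A(z)$ is a sum of independent per-target choices, and maximizing or minimizing it gives \eqref{eq:pure-strong-value-explicit}--\eqref{eq:pure-pessimistic-value-explicit}.

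For the running time, the steps are:
\begin{itemize}[leftmargin=2em]
\item Computing the $p_e(z)$ and $w_e(z)$ takes $O(|\mathcal E|+|\mathcal H|)$. If $z$ is stored per edge, this is $O(|\mathcal E|)$, since $|z_e|\le1$.
\item Adding $s_0$ contributes at most $|\mathcal S|\le|\mathcal V|$ edges.
\item Dijkstra with a binary heap runs in $O((|\mathcal E|+|\mathcal V|)\log|\mathcal V|)$.
\item Computing all $\theta_t,\phi_t$ and the two sums takes $O(|\mathcal T|)$, and extracting the routes takes $O(|\mathcal V|)$ each.
\end{itemize}
Route extraction is the one catch: writing out all routes explicitly could cost $O(|\mathcal T||\mathcal V|)$. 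I would handle this by returning the best response implicitly through the predecessor tree, which determines every chosen route. This output-representation point is the step I expect to need the most care. The remaining steps are routine.
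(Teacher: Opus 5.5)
Your proposal is correct and follows essentially the same argument as the paper. Both use the same ingredients:
\begin{itemize}
\item a per-target decomposition from additivity and the one-route-per-target constraint;
\item logarithmic edge weights with a single Dijkstra run from an auxiliary source;
\item a case split on the sign of $\theta_t(z)$ to obtain the two tie-breaking values;
\item an implicit predecessor-tree representation of the routes, which keeps the output within the stated time bound.
\end{itemize}

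Two of your additions are slightly more careful than the paper's text. First, you note that at $\theta_t(z)=0$ a non-maximizing route is strictly worse than omitting the target, so every best response uses a maximum non-detection route. Second, you account for the cost of forming the edge weights. Your source-truncation step is harmless but unnecessary, because the paper's routes are any simple source--target paths and may pass through other sources.
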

\begin{proof}
Fix a target $t$. By \eqref{eq:attacker_route_utility}, two routes $r,r'\in\mathcal R_t$
satisfy $U_A(z,r)\ge U_A(z,r')$ if and only if $q_r(z)\ge q_{r'}(z)$, because both routes
have the same $U_A^d(t)$ and positive coefficient $\Delta_A(t)$. Equation
\eqref{eq:log-route-survival} shows that a route with maximum non-detection probability is
exactly a minimum weight route under $w(z)$. All finite edge weights are nonnegative, so
Dijkstra's algorithm from
the auxiliary source finds such a route simultaneously for every target. If all routes to a
target contain a perfectly sensed edge, the target distance is infinite and the maximum
non-detection probability is zero, which is represented by $e^{-\infty}=0$.

It remains to form a best action that may attack multiple targets. By
\eqref{eq:action_utilities}, the attacker
utility of an action is the sum of its target contributions, and the constraint in
\eqref{eq:attacker_action_set} couples neither the route choices nor the inclusion decisions
for different targets. For each target, choosing a route with maximum non-detection
probability maximizes its
possible contribution. A positive contribution must be included, a negative contribution
must be omitted, and a zero contribution may be either included or omitted. These
independent choices produce an action in $\BR_A(z)$. At a zero attacker contribution,
choosing between the included route payoff $\phi_t(z)$ and the omitted payoff zero gives the
strong and pessimistic terms in
\eqref{eq:pure-strong-value-explicit}--\eqref{eq:pure-pessimistic-value-explicit}. Therefore
the procedure returns a best response and both defender values.

The auxiliary graph has $|\mathcal V|+1$ vertices and at most
$|\mathcal E|+|\mathcal S|$ edges. With a binary heap, Dijkstra's algorithm takes
$O((|\mathcal E|+|\mathcal V|)\log|\mathcal V|)$ time. Reconstructing at most one route per
target and scanning the targets takes $O(|\mathcal E|+|\mathcal V|)$ total time when
predecessor paths are represented implicitly, so it does not change the stated bound.
\end{proof}

\subsection{Existence and Defender Optimization Hardness}
\label{subsec:pure-se-existence-hardness}

The finite action spaces immediately give existence under either tie-breaking rule.

\begin{theorem}
\label{thm:pure_stackelberg_existence}
A pure SSE exists, and the pessimistic pure Stackelberg problem has an optimal defender
configuration.
\end{theorem}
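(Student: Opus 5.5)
The argument rests on finiteness, the same observation used for Theorem~\ref{thm:mixed-ne-existence}. The set $\mathcal X$ is finite because it is a subset of the power set of the finite set $\mathcal H$. It is nonempty because $\emptyset\in\mathcal X$. The attacker action set $\mathcal A$ is also finite and nonempty, since it contains $\emptyset$.

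First, for every $z\in\mathcal X$, the set $\BR_A(z)$ is a nonempty finite subset of $\mathcal A$: it is the argmax of a real-valued function over a finite nonempty set. Consequently the maximum and minimum in \eqref{eq:pure-stackelberg-values} are attained, and $V_D^+(z)$ and $V_D^-(z)$ are well-defined real numbers. Alternatively, Theorem~\ref{thm:pure_attacker_br} gives these values explicitly through \eqref{eq:pure-strong-value-explicit}--\eqref{eq:pure-pessimistic-value-explicit}, which are finite sums of real numbers.

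Second, I would maximize $V_D^+$ over the finite nonempty set $\mathcal X$. The argmax is nonempty, so we may pick $z^{\mathrm{SE}}$ from it. Since $\BR_A(z^{\mathrm{SE}})$ is finite and nonempty, we may then pick $a_z^{\mathrm{SE}}$ attaining $V_D^+(z^{\mathrm{SE}})$. This pair satisfies \eqref{eq:pure-sse-definition}. The same step applied to $V_D^-$ produces a configuration maximizing the pessimistic objective, so the pessimistic maximum is attained as well.

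There is no genuine obstacle here. The only point that needs care is noting that attainment holds automatically in the pure setting, because the leader's choice set is finite and no continuity argument is required. This is in contrast with the mixed pessimistic case flagged in the introduction, where the leader's strategy set is a continuum and the supremum can fail to be attained. I would remark on this contrast briefly, since it explains why the pure result is immediate and the mixed one is not.
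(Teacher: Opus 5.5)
Your proposal is correct and follows essentially the same route as the paper. Both arguments use three facts: $\mathcal X$ is finite and nonempty because $\emptyset\in\mathcal X$, each $\BR_A(z)$ is finite and nonempty, and a real-valued function on a finite nonempty set attains its maximum. From these, $V_D^+$ and $V_D^-$ attain their maxima, and a defender-favorable response in $\BR_A(z^{\mathrm{SE}})$ completes the pure SSE.
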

\begin{proof}
The feasible set $\mathcal X$ is nonempty because it contains the empty configuration, and
it is finite because $\mathcal H$ is finite. For every $z\in\mathcal X$, the finite
best response set $\BR_A(z)$ is nonempty, so the maxima and minima in
\eqref{eq:pure-stackelberg-values} are well defined. A real-valued function on a finite
nonempty set attains both its maximum and minimum. Hence some
$z^+\in\argmax_{z\in\mathcal X}V_D^+(z)$ exists, and selecting a response most favorable to
the defender from $\BR_A(z^+)$ gives a pure SSE\@. Likewise, some
$z^-\in\argmax_{z\in\mathcal X}V_D^-(z)$ exists and solves the pessimistic pure problem.
\end{proof}

The attacker response can be computed in polynomial time, but defender optimization remains
intractable. Consider
the restricted setting with one source, one perfect sensing type of unit cost, no false alarm
or server constraint, and only the cardinality constraint $|z|\le B_D$. Every target has
payoffs
\[
 U_A^u=1,\qquad U_A^d=-1,\qquad
 U_D^u=-1,\qquad U_D^d=1.
\]

\begin{theorem}
\label{thm:defender_pure_se_hard}
Even in the restricted setting above, computing an optimal pure defender configuration is
NP-hard under either strong or pessimistic tie-breaking.
\end{theorem}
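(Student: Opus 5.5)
We plan to reduce from \textsc{Clique} restricted to $d$-regular graphs, which remains NP-complete. The first step shows that in the restricted setting the defender objective is purely combinatorial. Every sensor is perfect, so $p_e(z)\in\{0,1\}$, and consequently $q_t^{\max}(z)\in\{0,1\}$. A target $t$ is either still reachable from the source in $G$ with the sensed edges removed, in which case $\theta_t(z)=-1+2=1>0$ and $\phi_t(z)=-1$, or it is cut off, in which case $\theta_t(z)=-1<0$ and $t$ is omitted. No attacker ties $\theta_t(z)=0$ ever occur, so by \eqref{eq:pure-strong-value-explicit}--\eqref{eq:pure-pessimistic-value-explicit}
\[
V_D^+(z)=V_D^-(z)=-\bigl|\{t\in\mathcal T:\ t \text{ reachable from } s \text{ in } G\setminus z\}\bigr|.
\]
Hence, under either tie-breaking rule, optimizing the defender is exactly the following problem: delete at most $B_D$ edges so as to disconnect as many targets as possible from $s$. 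It therefore suffices to prove that this ``partial target separation'' decision problem (can at least $\tau$ targets be separated with $B_D$ deletions?) is NP-hard.

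Given a $d$-regular graph $H=(V_H,E_H)$ and an integer $k$, we build the following instance. Create a source $s$ and one node for each vertex $v\in V_H$. Join $s$ to $v$ by a single edge. For each edge $uv\in E_H$, create an edge node $x_{uv}$ adjacent to both $u$ and $v$. To make separating vertices the dominant way of gaining targets, attach to each vertex node $v$ a bundle of $L$ pendant target leaves; we will take $L$ polynomially large, for example $L=d^2$. The edge nodes $x_{uv}$ are also targets. Any set of deleted edges leaves a component $C\not\ni s$. If $S\subseteq V_H$ denotes the vertex nodes that are separated, then, after discarding wasteful deletions, the cost is
\[
|S|+\sum_{v\in S}(d-\deg_S(v))=|S|(d+1)-2|E_H(S)|,
\]
which pays for the $s$--$v$ edges together with the boundary edges $x_{uv}$--$v$ for $u\notin S$. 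The number of separated targets is $|S|(L+1)+|E_H(S)|$, assuming we also count each vertex node as a target. We set $B_D=k(d+1)-2\binom k2$ and $\tau=k(L+1)+\binom k2$. If $H$ has a $k$-clique $S$, taking that $S$ meets both bounds, which gives the forward direction.

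The main obstacle is the converse. We must show that every deletion set achieving $\tau$ within budget $B_D$ yields a $k$-clique. The first task is a normalization argument: any deletion that separates only pendant leaves or isolated edge nodes (costing one or two edges per target) can be exchanged for deletions that separate whole vertex nodes without loss. The second task uses the large $L$ to force $|S|\ge k$, since leaves contribute $L$ per separated vertex while edge nodes contribute at most $|E_H|$ in total. With $|S|\ge k$ fixed, the budget inequality $|S|(d+1)-2|E_H(S)|\le B_D$ together with $|E_H(S)|\le\binom{|S|}{2}$ then forces $|S|=k$ and $|E_H(S)|=\binom k2$, so that $S$ is a clique. Here we must check carefully that separating $k+1$ or more vertices is too expensive. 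If the arithmetic does not close, we would first try adjusting $L$, or replacing each $s$--$v$ edge by several parallel length-two paths, to reweight vertex-separation costs. The construction is polynomial in size, uses one source, one unit-cost perfect sensing type, and only the cardinality budget, so NP-hardness follows under both tie-breaking rules.
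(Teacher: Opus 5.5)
Your first step is correct and matches the paper: with perfect sensors there are no attacker ties, and the defender value equals minus the number of targets still reachable from $s$. The converse direction of your construction fails, though, and the problem is the construction itself rather than the arithmetic. Your step ``$|S|(d+1)-2|E_H(S)|\le B_D$ together with $|E_H(S)|\le\binom{|S|}{2}$ forces $|S|=k$'' only gives the cost lower bound $|S|(d+2-|S|)$. This is a downward parabola peaking at $(d+2)/2$. It exceeds $B_D=k(d+2-k)$ only for $k<|S|<d+2-k$, so large separated sets are never excluded.

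Concretely, $s$ is adjacent only to the $n$ vertex nodes, so deleting those $n$ edges disconnects every target at cost $n$. Take $H=K_{3,3}$ and $k=3$. Then $d=3$ and $B_D=3\cdot4-2\cdot3=6=n$, so all $6(L+1)+9\ge\tau$ targets are separated, yet $K_{3,3}$ is triangle-free. A no-instance therefore maps to a yes-instance. Adjusting $L$ cannot repair this, because that deletion set also separates every pendant leaf.

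Your target count is also off. For $v\in S$ and $u\notin S$, deleting $u$--$x_{uv}$ instead of $x_{uv}$--$v$ separates $x_{uv}$ at no extra cost. The true count is therefore $|S|(L+1+d)-|E_H(S)|$, which \emph{decreases} with the number of internal edges. The edge-node targets reward non-cliques and should be dropped.

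The missing idea is your own fallback, carried out with a multiplicity large enough to make the cost monotone over the whole range $\{1,\dots,n\}$. This is what the paper does. It attaches $M-d_H(v)$ private paths $s$--$x_{v,j}$--$t_v$ to each target $t_v$, with $M=2n+1$. Only the $t_v$ are targets, $t_u$ and $t_v$ are joined for each input edge, and $B_D=rM-r(r-1)$.

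For the union $U$ of components not containing $s$, each private path of a separated target contributes a distinct cut edge. This gives $|\delta_{G_H}(U)|\ge M|W(U)|-2|E_H(W(U))|\ge f(\tau)$ with $f(x)=xM-x(x-1)$. Since $f(x+1)-f(x)=M-2x>0$ for every $x<n$, the budget forces $\tau=r$ and $|E_H(W(U))|=\binom r2$, so $W(U)$ is an $r$-clique. Padding every vertex to the same effective degree $M>2n$ is exactly what your version lacks. It also makes the regularity assumption, the pendant leaves, and your normalization argument unnecessary.
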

\begin{proof}
Appendix~\ref{app:defender-nph-proof} gives a complete reduction from \textsc{Clique} that
is computable in polynomial time. The construction has no attacker ties that affect payoffs,
so the same reduction applies to both tie-breaking rules.
\end{proof}

Theorems~\ref{thm:pure_attacker_br} and~\ref{thm:defender_pure_se_hard} separate evaluation
from optimization: the value of any proposed pure configuration can be computed efficiently,
but finding the best feasible configuration is NP-hard. This motivates a search procedure
that repeatedly evaluates selected configurations without enumerating $\mathcal X$.

\subsection{\texttt{RADAR}: Resource-Aware Downgrading with Adaptive Refinement}
\label{subsec:radar}

\texttt{RADAR} searches $\mathcal Z$ using two stages. Resource-aware downgrading (RAD)
repairs a dense, possibly infeasible configuration by moving sensing actions down prescribed
hierarchies. Adaptive refinement then tests repaired upgrades and exchanges to recover
valuable sensing choices removed during repair.

Let $F:\mathcal Z\to\R$ be the objective used by the search. For pure Stackelberg
optimization, set $F=V_D^+$ under strong tie-breaking and $F=V_D^-$ under pessimistic
tie-breaking. These values are defined on all $\mathcal Z$ because resource feasibility
restricts which configurations may be selected, not how route payoffs are evaluated.
Section~\ref{subsec:rgr} later uses the same search with different objectives.

\subsubsection{Downgrade hierarchies and repair score}

For every edge $e$ that supports sensing, each action $h\in\mathcal H_e$ has a prescribed
immediate downgrade
\[
 d_e(h)\in\mathcal H_e\cup\{\varnothing_e\},
\]
where $\varnothing_e$ denotes no sensing on $e$. A downgrade does not increase detection
probability, deployment cost, server load, or false alarm probability. Repeated downgrades
must reach $\varnothing_e$, so the directed downgrade relation is acyclic. Reversing the
relation defines the immediate upgrades. The initial configuration $z^{\max}$ selects a
designated maximal action on every edge that supports sensing and may violate one or more
resource limits.

If $z$ selects $h$ on edge $e$, let $d_e(z)$ be the configuration obtained by replacing
$h$ with $d_e(h)$, or by removing $h$ when $d_e(h)=\varnothing_e$. Define normalized resource
usage, normalized excess, and the resource reduction from $z$ to $x$ by
\begin{align}
 \mathbf g(z)
 &=\left(\frac{b(z)}{B_D},
          \frac{\ell(z)}{C_{\mathrm{srv}}},
          \frac{P_{\mathrm{fa}}(z)}{\bar P_{\mathrm{fa}}}\right),
 \label{eq:normalized-resource-usage}\\
 \boldsymbol\nu(z)
 &=\pos{\mathbf g(z)-\mathbf 1},
 \label{eq:normalized-resource-excess}\\
 \boldsymbol\xi(z,x)
 &=\mathbf g(z)-\mathbf g(x).
 \label{eq:normalized-resource-reduction}
\end{align}
The positive part operator is applied componentwise. Because a downgrade does not increase
any resource, $\boldsymbol\xi(z,x)\ge\mathbf0$. The amount by which the downgrade addresses
currently violated constraints is
\begin{equation}
 R(z,x)=\boldsymbol\nu(z)^\top\boldsymbol\xi(z,x)\ge0.
 \label{eq:rad-resource-score}
\end{equation}
Let the nonnegative loss in search objective be
\begin{equation}
 \Delta_F(z,x)=\pos{F(z)-F(x)}.
 \label{eq:rad-objective-loss}
\end{equation}
RAD ranks a candidate downgrade by
\begin{equation}
 S_F(z,x)=
 \begin{cases}
 +\infty, & R(z,x)>0\text{ and }\Delta_F(z,x)=0,\\
 R(z,x)/\Delta_F(z,x), & R(z,x)>0\text{ and }\Delta_F(z,x)>0,\\
 0, & R(z,x)=0.
 \end{cases}
 \label{eq:rad_score}
\end{equation}
The score favors a large reduction in the constraints that are currently violated and a
small loss in $F$. Ties are broken first by larger $R(z,x)$ and then by larger $F(x)$.
These secondary rules prefer a downgrade that makes more direct feasibility progress and,
subject to equal progress, retains more objective value.

\subsubsection{Refinement neighborhood}

For a feasible configuration $z$, let $\mathcal U(z)$ be its immediate upgrades. An upgrade
may violate a resource limit, so each upgraded configuration is passed through RAD before it
enters the neighborhood. Let $\mathcal Q(z)\subseteq\mathcal X$ contain feasible exchanges
between two edges. An exchange swaps the sensing type and processing mode choices assigned
to the two edges whenever the resulting actions for those edges belong to $\mathcal H$.
Treating $\varnothing_e$ as an available choice allows an exchange to move a deployment from
a monitored edge to an unmonitored edge. The refinement neighborhood is
\begin{equation}
 \mathcal N_F(z)
 =\{\operatorname{RAD}(y,F):y\in\mathcal U(z)\}\cup\mathcal Q(z).
 \label{eq:radar-neighborhood}
\end{equation}
Duplicate configurations may be removed before evaluation.

\begin{algorithm}[htbp]
\caption{Resource-aware downgrading and adaptive refinement}
\label{alg:radar}
\begin{algorithmic}[1]
\Procedure{RAD}{$z,F$}
\While{$z\notin\mathcal X$}
    \State $\mathcal D\gets\{d_e(z):e\in\mathcal E,\ z_e\ne\emptyset\}$
    \State $z\gets\argmax_{x\in\mathcal D}S_F(z,x)$ using the stated tie rules
\EndWhile
\State \Return $z$
\EndProcedure
\Procedure{\texttt{RADAR}}{$z^{\max},F$}
\State $z\gets\Call{RAD}{z^{\max},F}$
\Loop
    \State $\mathcal N\gets\mathcal N_F(z)$
    \If{$\mathcal N=\emptyset$}
        \State \textbf{break}
    \EndIf
    \State $x\gets\argmax_{y\in\mathcal N}F(y)$
    \If{$F(x)\le F(z)$}
        \State \textbf{break}
    \EndIf
    \State $z\gets x$
\EndLoop
\State \Return $z$
\EndProcedure
\end{algorithmic}
\end{algorithm}

\begin{proposition}
\label{prop:radar_properties}
RAD terminates and returns a feasible configuration. \texttt{RADAR} terminates at a feasible
configuration $z$ for which no candidate in $\mathcal N_F(z)$ has a strictly larger
objective value $F$.
\end{proposition}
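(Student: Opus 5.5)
We argue termination of RAD with a potential function and termination of \texttt{RADAR} with a finiteness and strict-improvement argument.

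\textbf{Step 1 (RAD is well defined and makes progress).} Each iteration of RAD applies one downgrade $d_e(z)$ on some edge with $z_e\ne\emptyset$. Define the potential
\[
\Phi(z)=\sum_{e:\,z_e\ne\emptyset}\lambda_e(z_e),
\]
where $\lambda_e(h)$ is the length of the longest downgrade chain from $h$ to $\varnothing_e$. This length is finite because the downgrade relation on the finite set $\mathcal H_e\cup\{\varnothing_e\}$ is acyclic and every chain reaches $\varnothing_e$. Every immediate downgrade strictly lowers $\lambda_e$ on the modified edge and leaves other edges unchanged, so $\Phi$ drops by at least one per iteration. Since $\Phi\ge0$ is an integer, the loop performs at most $\Phi(z)\le\sum_e\max_h\lambda_e(h)$ iterations.

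We must also check that the candidate set $\mathcal D$ is nonempty whenever the loop body executes. If $z\notin\mathcal X$, then $z\ne\emptyset$, because $\emptyset\in\mathcal X$. Hence some edge has $z_e\ne\emptyset$, and the argmax over the finite set $\mathcal D$ exists once the tie rules are applied. The score may be $+\infty$ or $0$, but only a well-defined selection is needed, not progress in the score. The loop exits only when $z\in\mathcal X$. If it ran until every edge were empty, the result would be $\emptyset\in\mathcal X$, so RAD always stops at a feasible configuration.

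\textbf{Step 2 (neighborhood lies in $\mathcal X$).} By Step 1, $\operatorname{RAD}(y,F)\in\mathcal X$ for every $y\in\mathcal U(z)$, and $\mathcal Q(z)\subseteq\mathcal X$ by definition. Hence $\mathcal N_F(z)\subseteq\mathcal X$, and it is finite. The initial point $\operatorname{RAD}(z^{\max},F)$ is also feasible.

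\textbf{Step 3 (RADAR terminates).} The outer loop replaces $z$ by $x$ only when $F(x)>F(z)$ strictly. The sequence of accepted configurations therefore has strictly increasing $F$ values, so no configuration repeats. Each accepted configuration lies in the finite set $\mathcal X$, so the loop breaks after at most $|\mathcal X|$ iterations. Each iteration is itself finite: it makes finitely many RAD calls, each terminating by Step 1, and finitely many evaluations of $F$.

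At a break, either $\mathcal N_F(z)=\emptyset$, in which case the claim holds vacuously, or the maximizer $x$ satisfies $F(x)\le F(z)$. Since $x$ maximizes $F$ over $\mathcal N_F(z)$, every $y\in\mathcal N_F(z)$ has $F(y)\le F(z)$. The returned $z$ is feasible, as it is either the initial RAD output or an accepted neighbor, both of which lie in $\mathcal X$ by Step 2.

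\textbf{Main obstacle.} The only delicate point is RAD's termination, because the score $S_F$ may choose a downgrade with $R=0$ that does not reduce any violation. The potential $\Phi$ avoids relying on the score entirely: it uses only the acyclicity of the downgrade hierarchy and the fact that $\emptyset\in\mathcal X$.
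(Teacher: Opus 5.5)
Your proposal is correct and follows the paper's argument: an integer potential (your $\Phi$, the paper's sum of remaining downgrade steps) drops by one per RAD iteration, $\emptyset\in\mathcal X$ forces RAD to stop at a feasible configuration, and strict improvement over the finite set $\mathcal X$ makes \texttt{RADAR} terminate at a local optimum over $\mathcal N_F(z)$. Your check that $\mathcal D\neq\emptyset$ whenever the loop body runs is a detail the paper leaves implicit. Because each action has a single immediate downgrade $d_e(h)$, your ``longest'' chain is simply the unique chain, so the two potentials coincide.
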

\begin{proof}
Assign each action in an edge's downgrade hierarchy a nonnegative rank equal to the number
of remaining downgrade steps before $\varnothing_e$. The sum of the ranks selected by a
configuration is a nonnegative integer. Every RAD iteration replaces one action by its
immediate downgrade and therefore decreases this rank sum by one. RAD cannot cycle and must
terminate after finitely many iterations. If feasibility has not been reached earlier, all
actions are eventually removed. The resulting empty configuration belongs to $\mathcal X$,
so RAD necessarily terminates at a feasible configuration.

The initial RAD call in \texttt{RADAR} therefore returns a feasible configuration. Every
repaired upgrade in
\eqref{eq:radar-neighborhood} is feasible because it is the output of RAD, and every
exchange in $\mathcal Q(z)$ is feasible by definition. Hence every accepted iterate remains
in $\mathcal X$. \texttt{RADAR} accepts a new iterate only when its objective value is strictly
larger. Since $\mathcal X$ is finite, a strictly improving sequence cannot continue
indefinitely and cannot revisit a previous configuration. At termination, either the
neighborhood is empty or its largest value is no larger than $F(z)$, which is exactly the
stated local optimality condition.
\end{proof}

\begin{proposition}
\label{prop:radar-complexity}
Let $M$ be the number of edges that support sensing, let $I$ be the number of refinement
rounds, and let $T_F$ be the time required to evaluate $F$ for one configuration. Assuming
resource usage is maintained incrementally, one RAD call takes
$O(|\mathcal H|M T_F)$ time, and \texttt{RADAR} takes
\begin{equation}
 O\!\left(\bigl(|\mathcal H|M
 +I(|\mathcal H|^2M+M^2)\bigr)T_F\right)
 =O(I|\mathcal H|^3T_F)
 \label{eq:radar-complexity}
\end{equation}
time when $I\ge1$ and $M\le|\mathcal H|$.
\end{proposition}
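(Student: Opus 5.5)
The plan is to count objective evaluations, since every candidate configuration RAD or \texttt{RADAR} examines costs one evaluation of $F$ ($T_F$). Resource bookkeeping is assumed incremental, so it costs $O(1)$ per step and is absorbed into these terms. I will first bound one RAD call, then the work of one refinement round, and multiply the latter by $I$.

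\textbf{RAD.} I use the rank potential from the proof of Proposition~\ref{prop:radar_properties}. Each edge's downgrade chain has at most $|\mathcal H_e|$ actions, and the edge sets partition $\mathcal H$, so the initial rank sum is at most $\sum_e|\mathcal H_e|=|\mathcal H|$. Each iteration lowers the rank sum by one, so RAD performs at most $|\mathcal H|$ iterations. In each iteration the candidate set $\mathcal D$ has at most $M$ members, one per monitored edge. Scoring each member requires $F(x)$, since $F(z)$ is cached from the previous step, plus $O(1)$ work for $R$ and $S_F$. One RAD call therefore costs $O(|\mathcal H|MT_F)$. The initial call in \texttt{RADAR} contributes the first term $|\mathcal H|MT_F$.

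\textbf{One refinement round.} Each edge has at most one immediate upgrade per selected action. If hierarchies branch, the upgrades of the current action on edge $e$ are bounded by $|\mathcal H_e|$. Either way $|\mathcal U(z)|\le|\mathcal H|$. Each upgrade is passed through RAD, and each such call costs $O(|\mathcal H|MT_F)$, so the repaired upgrades cost $O(|\mathcal H|^2MT_F)$ per round. Exchanges are indexed by unordered pairs of sensing-capable edges, giving at most $\binom{M}{2}=O(M^2)$ candidates. Each costs $O(1)$ to build and check for feasibility, plus one $F$ evaluation. Selecting the argmax and comparing with $F(z)$ adds only linear overhead. One round therefore costs $O((|\mathcal H|^2M+M^2)T_F)$, and $I$ rounds give the first expression in \eqref{eq:radar-complexity}.

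\textbf{Simplification.} Using $M\le|\mathcal H|$ gives $|\mathcal H|M\le|\mathcal H|^2$, $|\mathcal H|^2M\le|\mathcal H|^3$ and $M^2\le|\mathcal H|^2$. With $I\ge1$, the total is at most $(|\mathcal H|^2+2I|\mathcal H|^3)T_F=O(I|\mathcal H|^3T_F)$.

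The main obstacle is the per-round upgrade count together with the iteration bound for RAD on an upgraded input. Two points need care. First, the rank bound $|\mathcal H|$ must hold for an arbitrary starting configuration, which it does because ranks are bounded by chain lengths rather than by the starting point. Second, $|\mathcal U(z)|\le|\mathcal H|$ must hold under possibly branching hierarchies. I would state explicitly that the upgrades of the current action on each edge are distinct members of $\mathcal H_e$, so summing over edges gives at most $|\mathcal H|$.
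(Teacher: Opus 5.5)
Your proposal is correct and follows the paper's counting argument. RAD takes at most $|\mathcal H|$ downgrade steps and scores at most $M$ candidates per step. Each refinement round handles at most $|\mathcal H|$ repaired upgrades and $O(M^2)$ exchanges. Multiplying by $I$ and using $M\le|\mathcal H|$ gives the stated bound. The paper bounds the upgrades more directly by counting the exactly $|\mathcal H|$ arcs of the reversed downgrade function. This makes your hedge unnecessary, and its claim of at most one upgrade per action is false in general, since the reverse of a function can branch.
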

\begin{proof}
Across all edges, the total number of possible downgrade steps is at most
$|\mathcal H|$: each step leaves one supported action and moves to another supported action
or to no sensing. At any RAD iteration, at most one downgrade is available for each of the
$M$ edges that support sensing, so at most $M$ candidates are scored. Caching $F(z)$ means each
candidate requires at most one new objective evaluation. Thus a repair uses at most
$O(|\mathcal H|M)$ objective evaluations and takes $O(|\mathcal H|M T_F)$ time.

The reverse downgrade relation contains at most $|\mathcal H|$ immediate upgrade arcs in
total, so a refinement round considers at most $|\mathcal H|$ immediate upgrades. Repairing
each one costs $O(|\mathcal H|M T_F)$, for
$O(|\mathcal H|^2M T_F)$ time. There are at most $O(M^2)$ unordered edge pairs to test for
exchanges, each requiring at most a constant number of objective evaluations once support
and feasibility are checked. This contributes $O(M^2T_F)$ per round. Adding the initial
repair and multiplying the refinement cost by $I$ gives the first expression in
\eqref{eq:radar-complexity}. Since $M\le|\mathcal H|$, the expression is
$O(I|\mathcal H|^3T_F)$ for $I\ge1$.
\end{proof}

For pure Stackelberg optimization, Theorem~\ref{thm:pure_attacker_br} gives
\[
 T_F=O((|\mathcal E|+|\mathcal V|)\log|\mathcal V|).
\]
Every local candidate is therefore evaluated after the attacker reroutes optimally. In the
next section, the same repair and refinement operations generate candidate defender columns
for restricted mixed Nash and mixed Stackelberg games; only the objective $F$ changes.

\section{Mixed Stackelberg Game}
\label{sec:mixed_stackelberg}

The preceding Stackelberg setting assumes that the attacker observes the realized pure
defender configuration. The defender may instead use a mixed strategy
$\sigma\in\Delta(\mathcal X)$ and draw one complete feasible configuration before each
interaction. The attacker observes $\sigma$ but not the realized configuration before
selecting its routes. Randomization therefore hides which edges are actively monitored in
the current interaction while preserving the resource feasibility of every realized
configuration.

For route $r$, define its expected non-detection probability under $\sigma$ by
\begin{equation}
 \bar q_r(\sigma)=\sum_{z\in\mathcal X}\sigma(z)q_r(z).
 \label{eq:mixed_route_survival}
\end{equation}
The expected route utilities are
\begin{align}
 \bar U_A(\sigma,r)
 &=U_A^d(t(r))+\Delta_A(t(r))\bar q_r(\sigma),
 \label{eq:mixed-attacker-route-utility}\\
 \bar U_D(\sigma,r)
 &=U_D^d(t(r))-\Delta_D(t(r))\bar q_r(\sigma).
 \label{eq:mixed-defender-route-utility}
\end{align}
They extend additively to attacker actions. The follower best response set is
\begin{equation}
 \BR_A(\sigma)=\argmax_{a\in\mathcal A}\bar U_A(\sigma,a).
 \label{eq:mixed-attacker-br-set}
\end{equation}
Define the mixed defender values under strong and pessimistic tie-breaking by
\begin{equation}
 V_D^+(\sigma)=\max_{a\in\BR_A(\sigma)}\bar U_D(\sigma,a),
 \qquad
 V_D^-(\sigma)=\min_{a\in\BR_A(\sigma)}\bar U_D(\sigma,a).
 \label{eq:mixed-stackelberg-values}
\end{equation}
A mixed SSE is a pair $(\sigma^{\mathrm{SE}},a_\sigma^{\mathrm{SE}})$ satisfying
\begin{equation}
 \sigma^{\mathrm{SE}}\in\argmax_{\sigma\in\Delta(\mathcal X)}V_D^+(\sigma),
 \qquad
 a_\sigma^{\mathrm{SE}}\in
 \argmax_{a\in\BR_A(\sigma^{\mathrm{SE}})}
 \bar U_D(\sigma^{\mathrm{SE}},a).
 \label{eq:mixed-sse-definition}
\end{equation}

The timing is essential. If the realized configuration were revealed before the attacker
selected its action, the defender's expected value from $\sigma$ would be a convex
combination of the Stackelberg values of pure configurations and could not exceed the best
value attained by a pure configuration under the same tie-breaking rule. Randomization can improve the defender's outcome
only because the attacker responds to the distribution rather than to its current draw.

\subsection{Why One Exact Shortest Path Weight No Longer Suffices}
\label{subsec:mixed-no-additive-weight}

For a fixed configuration, the logarithm in \eqref{eq:log-route-survival} converts a path
product into a sum. Under a mixed strategy,
\begin{equation}
 \bar q_r(\sigma)
 =\sum_{z\in\mathcal X}\sigma(z)
   \prod_{e\in\mathcal E(r)}(1-p_e(z)).
 \label{eq:mixed-sum-of-products}
\end{equation}
The logarithm cannot pass through the outer weighted sum. More fundamentally, the common
hidden configuration can correlate the edge states in a way that no fixed set of exact
edge weights represents.

\begin{example}[Correlated hidden configurations]
\label{ex:no-additive-mixed-weight}
Consider two serial binary choice stages, so every source--target route chooses branch
$0$ or $1$ in the first stage and independently chooses branch $0$ or $1$ in the second.
Denote the four routes by $r_{00},r_{01},r_{10},r_{11}$. Let $z^{00}$ leave both branch $0$
choices unsensed and perfectly sense both branch $1$ choices. Let $z^{11}$ do the reverse,
and assign probability $1/2$ to each configuration. Then
\[
 \bar q_{r_{00}}(\sigma)=\bar q_{r_{11}}(\sigma)=\frac12,
 \qquad
 \bar q_{r_{01}}(\sigma)=\bar q_{r_{10}}(\sigma)=0.
\]
Suppose exact nonnegative additive weights existed. Writing the corresponding multiplicative
edge factors as $\alpha_i^b=e^{-w_i^b}\in[0,1]$ for stage $i$ and branch $b$, exactness would
require
\[
 \alpha_1^0\alpha_2^0=\frac12,
 \qquad
 \alpha_1^1\alpha_2^1=\frac12.
\]
Both equalities force all four factors to be positive. The products for the crossed routes
$\alpha_1^0\alpha_2^1$ and $\alpha_1^1\alpha_2^0$ would then also be positive, contradicting
the zero expected non-detection probability of $r_{01}$ and $r_{10}$. Thus no single exact
additive edge weight represents all four expected route non-detection probabilities.
\end{example}

This obstacle is the nonadditive expectation induced by the hidden configuration, not a
failure of Dijkstra's algorithm. It is closely related to reliable path problems with
correlated edge failures \cite{chang2007reachability}.

\subsection{Existence and Pessimistic Nonattainment}
\label{subsec:mixed-existence}

\begin{theorem}
\label{thm:mixed_existence}
A mixed SSE exists. By contrast, the pessimistic mixed Stackelberg problem need
not attain its supremum.
\end{theorem}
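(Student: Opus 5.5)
For existence of a mixed SSE, I will use the standard multiple-LP decomposition over the finite attacker action set. For each $a\in\mathcal A$, consider the linear program
\[
 \max_{\sigma\in\Delta(\mathcal X)}\ \bar U_D(\sigma,a)
 \quad\text{s.t.}\quad \bar U_A(\sigma,a)\ge\bar U_A(\sigma,a'),\ \forall a'\in\mathcal A .
\]
Its feasible region $P_a$ is a closed polytope inside the compact simplex $\Delta(\mathcal X)$, and $\mathcal X$ and $\mathcal A$ are finite by the argument of Theorem~\ref{thm:mixed-ne-existence}. Two cases arise. If $P_a$ is nonempty, the linear objective attains its maximum at some $\sigma_a$. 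Some $P_a$ is nonempty because $\bigcup_a P_a=\Delta(\mathcal X)$: every $\sigma$ has a nonempty best response set over finite $\mathcal A$. Pick $a^*$ maximizing the optimal value over all $a$ with $P_a\neq\emptyset$, and set $\sigma^*=\sigma_{a^*}$.

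I then verify optimality. For any $\sigma$, $V_D^+(\sigma)=\max_{a\in\BR_A(\sigma)}\bar U_D(\sigma,a)$, and every such $a$ has $\sigma\in P_a$, so $V_D^+(\sigma)$ is at most the LP value of $a$, hence at most that of $a^*$. Conversely $a^*\in\BR_A(\sigma^*)$, so $V_D^+(\sigma^*)\ge \bar U_D(\sigma^*,a^*)$, which equals the maximal LP value. Therefore $\sigma^*$ attains $\sup V_D^+$, and choosing $a^{\mathrm{SE}}$ as a defender-favorable element of $\BR_A(\sigma^*)$ gives \eqref{eq:mixed-sse-definition}. This half is routine.

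For pessimistic nonattainment I will build a minimal explicit instance. The instance has one source, one target, and two edge-disjoint routes $r_1,r_2$. The feasible configurations are $z_0=\emptyset$, $z_1$ perfectly sensing $r_1$, and $z_2$ perfectly sensing $r_2$, with at most one sensor allowed. The attacker may take $r_1$, $r_2$, or abstain. To break the symmetry I will let detection be perfect on $r_1$ but imperfect on $r_2$: $z_2$ uses a sensor with detection probability $\rho<1$. Alternatively, I can use two targets with different payoffs so that the attacker's indifference is resolved badly for the defender. The mechanism I am aiming for is the classic one. The defender's value is highest when the attacker is exactly indifferent between a route that is good for the defender and one that is bad. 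Strong tie-breaking attains that point; pessimistic tie-breaking picks the bad route there, while any strict perturbation gives values approaching the supremum from below.

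Concretely, with $\sigma=(\sigma_0,\sigma_1,\sigma_2)$ I will compute $\bar q_{r_i}$ as linear functions and compute $V_D^-$ piecewise on the regions where each attacker action is the strict best response. I then need to check three things: the supremum equals the limit at the indifference boundary; at the boundary itself the pessimistic value drops strictly; and no interior point reaches the supremum. The main obstacle is tuning the payoffs so that the two targets' defender stakes differ while the attacker's do not, so that the tie genuinely changes the defender's payoff. Using different-valued targets with separate routes, and keeping $\theta>0$ to prevent abstention, is the cleanest choice. I will also check that pessimistic ties at other boundary points, including abstention, never yield the supremum either.
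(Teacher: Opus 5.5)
Your existence half is correct and is essentially the paper's argument. The closed best-response polytopes $P_a$ cover $\Delta(\mathcal X)$, the linear defender payoff is maximized on each nonempty one, and your two-sided check that $(\sigma_{a^*},a^*)$ attains $\sup V_D^+$ is exactly what is needed.

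The gap is in the nonattainment half. The mechanism you aim for is an attacker exactly indifferent between a route that is good for the defender and one that is bad, with abstention kept unattractive. That mechanism cannot occur in this model.
\begin{itemize}
\item \emph{Routes to the same target.} For $r,r'$ with $t(r)=t(r')=t$, both players' payoffs are affine in the same quantity: $\bar U_A(\sigma,r)=U_A^d(t)+\Delta_A(t)\bar q_r(\sigma)$ and $\bar U_D(\sigma,r)=U_D^d(t)-\Delta_D(t)\bar q_r(\sigma)$, with $\Delta_A(t)>0$. Attacker indifference therefore forces $\bar q_r(\sigma)=\bar q_{r'}(\sigma)$, and hence equal defender payoffs. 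In your first variant (one target, $r_1$ perfectly sensed, $r_2$ sensed with $\rho<1$), the route-versus-route tie is payoff-irrelevant.
\item \emph{Routes to different targets.} There is no trade-off at all. $\mathcal A$ permits one route to every target simultaneously and utilities are additive, so the attacker includes each target independently, exactly when its best expected route utility is positive. If both targets have positive utility, the attacker attacks both, and $\{r_1\}$ versus $\{r_2\}$ is never a binding choice.
\end{itemize}

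Hence the only payoff-relevant ties in this model are include-versus-omit ties at a target whose best expected attacker utility is exactly zero. If you keep every target's utility positive, as you propose, the best response is unique up to payoff-irrelevant route ties. Then $V_D^-(\sigma)=\sum_{t}\bigl(U_D^d(t)-\Delta_D(t)\max_{r\in\mathcal R_t}\bar q_r(\sigma)\bigr)$ is continuous on the compact simplex and attains its maximum, which is the opposite of what you need. The knapsack instance in Appendix~\ref{app:mixed-oracle-hard-proof} is exactly of this type.

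The construction must instead place the supremum at an attack/abstain tie. The paper does this with one target, one route, and two configurations: $z_0$ unsensed and $z_1$ perfect, with $U_D^u=-1$, $U_D^d=3$, $U_A^u=1$, $U_A^d=-1$. Write $p=\sigma(z_1)$.
\begin{itemize}
\item For $p<1/2$, the attacker strictly prefers to attack and $V_D^-=-1+4p$.
\item For $p>1/2$, it strictly prefers to abstain and $V_D^-=0$.
\item At $p=1/2$, pessimistic tie-breaking selects abstention, again giving $0$.
\end{itemize}
So $\sup V_D^-=1$ is approached but never attained. Separately, your proposal never fixes concrete payoffs or carries out the checks you list. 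As planned, it would not have produced a working instance.
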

\begin{proof}
Appendix~\ref{app:mixed-existence-proof} gives a complete proof. It first partitions the
defender simplex into the closed polytopes on which each attacker action is a best response,
and then maximizes the defender's linear payoff over those polytopes. The appendix also
gives an explicit instance with two configurations in which the pessimistic value approaches a
strictly larger limit from one side and drops at the tie point.
\end{proof}

Every pure configuration is a degenerate mixed strategy. Therefore,
\begin{equation}
\begin{aligned}
 \max_{z\in\mathcal X}V_D^-(z)
 &\le \sup_{\sigma\in\Delta(\mathcal X)}V_D^-(\sigma)\\
 &\le \max_{\sigma\in\Delta(\mathcal X)}V_D^+(\sigma).
\end{aligned}
\label{eq:pure-pess-mixed-strong-order}
\end{equation}
The three quantities are the best pessimistic pure value, the pessimistic mixed supremum,
and the mixed SSE value. Nonattainment is not caused by noncompactness: the simplex
$\Delta(\mathcal X)$ is compact. It is caused by pessimistic tie-breaking, which can make
$V_D^-(\sigma)$ drop discontinuously when a second attacker action becomes tied. Strong
tie-breaking instead selects the tied action most favorable to the defender and attains a
maximum.

\subsection{Explicit Linear Formulation}
\label{subsec:mixed-sse-lp}

Suppose all pure actions are explicitly listed. Fix a candidate attacker response
$a\in\mathcal A$ and solve
\begin{equation}
\begin{aligned}
 \max_{\sigma}\quad
 &\bar U_D(\sigma,a)\\
 \text{s.t.}\quad
 &\bar U_A(\sigma,a)\ge\bar U_A(\sigma,a'),
 &&\forall a'\in\mathcal A,\\
 &\sum_{z\in\mathcal X}\sigma(z)=1,\\
 &\sigma(z)\ge0,
 &&\forall z\in\mathcal X.
\end{aligned}
\label{eq:sse_lp}
\end{equation}
The incentive constraints make $a$ an attacker best response. The objective then evaluates
the defender payoff when strong tie-breaking selects that response.

\begin{proposition}
\label{prop:sse-lp-correctness}
Solving \eqref{eq:sse_lp} for every $a\in\mathcal A$ and retaining a feasible solution with
the largest objective value returns a mixed SSE\@.
\end{proposition}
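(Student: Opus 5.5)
The plan is the standard multiple-LP argument, adapted to the definitions \eqref{eq:mixed-stackelberg-values}--\eqref{eq:mixed-sse-definition}. For each $a\in\mathcal A$ let $P_a\subseteq\Delta(\mathcal X)$ be the feasible region of \eqref{eq:sse_lp}. By linearity of $\bar U_A(\sigma,\cdot)$ in $\sigma$, this is exactly the set of $\sigma$ with $a\in\BR_A(\sigma)$. Each $P_a$ is a compact polytope, possibly empty, and $\bigcup_{a}P_a=\Delta(\mathcal X)$ because $\BR_A(\sigma)$ is nonempty for every $\sigma$ (finite $\mathcal A$). In particular at least one LP is feasible. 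Each feasible LP has a bounded linear objective over a compact set, so it attains an optimum $\sigma_a^*$ with value $v_a$. Let $a^*$ maximize $v_a$ over the feasible $a$, and let $\sigma^*=\sigma^*_{a^*}$.

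Next I will show that $\sigma^*$ maximizes $V_D^+$. Take any $\sigma\in\Delta(\mathcal X)$ and choose $\hat a\in\BR_A(\sigma)$ attaining $V_D^+(\sigma)=\bar U_D(\sigma,\hat a)$; the maximum exists since $\BR_A(\sigma)$ is finite. Then $\sigma\in P_{\hat a}$, so
\[
V_D^+(\sigma)=\bar U_D(\sigma,\hat a)\le v_{\hat a}\le v_{a^*}.
\]
In the other direction, $\sigma^*\in P_{a^*}$ gives $a^*\in\BR_A(\sigma^*)$, so
\[
V_D^+(\sigma^*)\ge\bar U_D(\sigma^*,a^*)=v_{a^*}.
\]
Combining the two bounds gives $V_D^+(\sigma^*)=v_{a^*}=\max_\sigma V_D^+(\sigma)$. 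This establishes the first condition in \eqref{eq:mixed-sse-definition}.

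Finally, the follower component has to be a strong tie-breaking choice. The chain above gives $\bar U_D(\sigma^*,a^*)=V_D^+(\sigma^*)=\max_{a\in\BR_A(\sigma^*)}\bar U_D(\sigma^*,a)$, with $a^*\in\BR_A(\sigma^*)$, so $a^*$ lies in the required argmax. Hence $(\sigma^*,a^*)$ is a mixed SSE.

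I expect the main point needing care to be this last step. Taking the LP's own index $a^*$ as the follower response is justified only because the defender value can never exceed $v_{a^*}$. If a different tied action gave the defender strictly more at $\sigma^*$, it would contradict the maximality of $v_{a^*}$.
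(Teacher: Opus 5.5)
Your proof is correct and follows the paper's argument. The feasible region of the LP indexed by $a$ is exactly $\{\sigma\in\Delta(\mathcal X): a\in\BR_A(\sigma)\}$, and these regions together cover all pairs of a mixture and one of its best responses, so the best LP value equals $\max_\sigma V_D^+(\sigma)$ and its index $a^*$ is a strong tie-breaking response. You make explicit two things the paper compresses into ``maximizing over this union is precisely the strong Stackelberg optimization'': that each LP attains its optimum, and that $a^*$ lies in the strong tie-breaking argmax. One small correction: the identification of the feasible region with $\{\sigma: a\in\BR_A(\sigma)\}$ holds directly by the definition of $\BR_A$ as an argmax over pure actions, not by linearity, which only makes the region a polytope.
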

\begin{proof}
For a fixed $a$, the feasible set of \eqref{eq:sse_lp} consists exactly of the defender
mixtures for which $a\in\BR_A(\sigma)$. Hence every feasible pair $(\sigma,a)$ is a valid
leader strategy together with one follower best response, and its LP objective is the
corresponding defender payoff.

Conversely, for any defender mixture $\sigma$ and any best response
$a\in\BR_A(\sigma)$, the same $\sigma$ is feasible for the LP indexed by $a$. Therefore the
union of the feasible pairs over all indexed LPs is exactly
\[
 \{(\sigma,a):\sigma\in\Delta(\mathcal X),\ a\in\BR_A(\sigma)\}.
\]
Maximizing the defender payoff over this union is precisely the strong Stackelberg
optimization in \eqref{eq:mixed-sse-definition}. Taking the best objective among the
finitely many LPs therefore returns a mixed SSE\@.
\end{proof}

This procedure is polynomial in the size of an explicitly listed game in normal form: it
solves $|\mathcal A|$ linear programs, each with $|\mathcal X|$ variables and
$|\mathcal A|+1$ principal constraints. A compact graph description may nevertheless encode
exponentially many defender configurations and attacker route actions, so the
explicit formulation is not polynomial in the graph input size.

\subsection{Computational Complexity}
\label{subsec:mixed-hardness}

The loss of the shortest path structure is not merely a limitation of the proposed
candidate generator.

\begin{theorem}
\label{thm:mixed_oracle_hard}
In the graph model:
\begin{enumerate}[label=(\roman*),leftmargin=2.2em]
    \item computing an exact attacker best response to a mixed defender strategy with
    explicit finite support is NP-hard, even when the attacker selects only one route;
    \item computing a pure defender best response to a mixed attacker strategy with
    explicit finite support is NP-hard; and
    \item computing an optimal mixed SSE is NP-hard.
\end{enumerate}
\end{theorem}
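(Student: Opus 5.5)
We prove the three parts by separate polynomial-time reductions. Each one uses perfect sensors and payoffs that are the same for every target, so the constructed instances stay inside the restricted settings already considered.

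\emph{Part (i).} We reduce from \textsc{3-SAT} (equivalently, maximum satisfiability). The graph is a serial chain of $n$ binary choice stages of the kind used in Example~\ref{ex:no-additive-mixed-weight}. Stage $i$ has two parallel edges $x_i^{1}$ and $x_i^{0}$, standing for assigning true or false to variable $i$. There is one source at the start of the chain and one target at the end, so each simple route corresponds exactly to a truth assignment. For each clause $C_j$ we define a configuration $z^j$ that perfectly senses, at each stage touched by $C_j$, the branch that \emph{satisfies} the corresponding literal. Then route $r$ survives under $z^j$ if and only if its assignment falsifies $C_j$. Take $\sigma$ uniform on $\{z^j\}$. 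Then $\bar q_r(\sigma)$ equals the fraction of clauses falsified by the assignment of $r$.

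We want the attacker's task to become minimization, so we flip the encoding: let $z^j$ perfectly sense the \emph{falsifying} branches of the literals in $C_j$. Now $r$ survives under $z^j$ exactly when it satisfies every literal of $C_j$, which is not the quantity we need. The fix is to give each clause three configurations, one per literal $\ell$, each sensing only the stage of $\ell$ on the branch that falsifies $\ell$. Then $\bar q_r$ is proportional to the number of true literal occurrences. That count is linear, so it is not hard on its own.

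The cleaner route is to reduce from \textsc{Max-2-SAT}, or from Max-Cut with a $\pm$ encoding. Configuration $z^{C}$ for clause $C=(\ell_1\vee\ell_2)$ perfectly senses the branches that \emph{satisfy} $\ell_1$ and $\ell_2$. The route survives exactly when $C$ is falsified, so $\bar q_r=1-\#\text{sat}/m$. Adding one auxiliary configuration with zero survival, which does not matter, we ask whether some route has $\bar q_r\ge 1-k/m$, equivalently whether some assignment falsifies at least $m-k$ clauses. Minimizing the number of satisfied 2-clauses (\textsc{Min-2-SAT}) is NP-hard, so this suffices. Because the attacker's utility is strictly increasing in $\bar q_r$, the decision question about its best single route is NP-hard. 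Every $z^C$ senses at most two edges, so every realized configuration is feasible under a cardinality budget of~$2$.

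\emph{Part (ii).} We reuse the restricted setting of Theorem~\ref{thm:defender_pure_se_hard}: unit-cost perfect sensors and $|z|\le B_D$. Given an explicit attacker mixture $\mu$ over routes, the defender's payoff is additive over routes, and a route contributes $+1$ if it is covered and $-1$ otherwise. The defender is therefore choosing at most $B_D$ edges to hit as much $\mu$-weight of paths as possible, which is a weighted Maximum Coverage or Hitting Set problem. We reduce from \textsc{Vertex Cover} as follows. For each vertex $v$ of the source graph we make one edge $e_v$ in a gadget. For each edge $uv$ of the source graph we make an attacker route that traverses exactly $e_u$ and $e_v$; these can be built as series paths through distinct private connector edges whose sensing actions are absent from $\mathcal H$. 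The mixture $\mu$ is uniform on these routes. Payoff $1$ is achievable if and only if a vertex cover of size $B_D$ exists. The main technical care is making every route a simple path from the single source to a target while keeping the set of sensable edges equal to $\{e_v\}$. We plan to give each route its own target, which avoids unintended extra routes.

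\emph{Part (iii).} We reduce from part (i), or adapt the construction of part (ii). We expect this to be the hardest step. We will take the leader's problem and argue that deciding whether the optimal mixed SSE value reaches a threshold requires solving the attacker oracle. The plan is to embed the part (i) instance with the defender's configuration set $\mathcal X$ restricted, through costs and the budget, to the clause configurations $z^C$ plus the empty configuration. The SSE value then equals the defender payoff against the attacker's best route, and a defender mixture that beats a threshold exists only if the number of unsatisfiable-clause... The difficulty is that the leader chooses $\sigma$ freely, so we must force $\sigma$ to be uniform. If that fails, we will instead use a zero-sum payoff with $\Delta_D=\Delta_A$. Then the SSE equals the minimax value, and by LP duality, with the ellipsoid equivalence of separation and optimization (Gr\"otschel--Lov\'asz--Schrijver), optimizing it is polynomially equivalent to the defender best-response oracle of part (ii). That oracle is NP-hard, which yields hardness of the mixed SSE.
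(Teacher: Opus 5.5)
Your parts (i) and (ii) are sound and follow the paper. Your clause configurations, using the pair $(x_i\vee\neg x_j),(\neg x_i\vee x_j)$ for each input edge, are exactly the paper's \textsc{Max-Cut} construction. Your \textsc{Vertex Cover} gadget is the special case of the paper's \textsc{Maximum Coverage} reduction in which every element lies in two sets. Two small fixes remain in (i). First, discard tautological clauses and choose payoffs, e.g.\ $U_A^d=-1$, $U_A^u=2m$, so that the best route strictly beats $\emptyset$; otherwise an exact best response can be empty and decodes nothing. Second, subdivide the parallel branch edges, as the paper does with the vertices $y_i^b$.

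The genuine gap is part (iii): you give no reduction. Embedding the (i) instance fails for the reason you noticed. The leader chooses $\sigma$, so a hard attacker problem at one prescribed mixture says nothing about the leader's optimum. The fallback does not close the gap either. LP duality together with the Gr\"otschel--Lov\'asz--Schrijver equivalence shows that a polynomial-time defender oracle would separate the dual minimax LP. That transfers tractability from the oracle to the equilibrium, not hardness from the oracle to the equilibrium. The converse direction, that an equilibrium solver yields the oracle, is a separate payoff-probing argument that you do not supply. It also requires the routes available to the attacker in the game to be exactly the routes weighted in the oracle problem. Your (ii) gadget violates this: in a game the attacker optimizes over all simple paths, and the connectors create unintended routes to each $t_{uv}$ through other gadgets. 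Those routes are harmless when $\mu$ is prescribed, but they change the zero-sum game. Moreover, $\Delta_D=\Delta_A$ alone gives only a common ratio; the SSE value equals the minimax value only under \eqref{eq:weighted_zero_sum}.

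The missing idea is to make the attacker's best response independent of $\sigma$, so that the leader's mixed problem collapses to an NP-hard pure one. The paper builds a star with one edge $e_i=\{s,t_i\}$ per \textsc{Knapsack} item. Each edge has one sensing action with cost $c_i$ and detection probability $1/2$, the budget is $B$, and the payoffs are $U_A^u=1$, $U_A^d=-1/4$, $U_D^u(t_i)=-v_i$, $U_D^d(t_i)=v_i$. Every route then has attacker utility in $[3/8,1]$ under every mixture, so the unique best response attacks all targets and tie-breaking is irrelevant. Sensing $e_i$ raises the defender's payoff by exactly $v_i$. The defender's payoff under $\sigma$ is therefore a convex combination of the pure values $-\sum_i v_i+\sum_{i\in J}v_i$, and the mixed SSE value equals the knapsack optimum. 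Imperfect detection is what makes this work. With perfect sensors, as in your gadgets, a covered route has negative attacker utility, so the attacker drops it and its response again varies with $\sigma$.
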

\begin{proof}
Appendix~\ref{app:mixed-oracle-hard-proof} gives complete reductions from
\textsc{Max-Cut}, \textsc{Maximum Coverage}, and $0$--$1$ \textsc{Knapsack}, respectively.
The constructions use only polynomially many vertices, edges, and explicitly supported
pure strategies, and they use detection probabilities in $\{0,1/2,1\}$. The third
construction has a unique attacker response for every defender mixture, so its hardness is
unrelated to tie-breaking.
\end{proof}

Allowing the defender to randomize therefore makes both exact response generation and
defender optimization combinatorial. Finite formulations remain useful over selected action
sets, but those sets must be expanded without relying on exact oracles that run in
polynomial time.

\subsection{\texttt{RGR}: RADAR-Guided Restricted Game}
\label{subsec:rgr}

\texttt{RGR} maintains nonempty restricted action sets
$\mathcal X'\subseteq\mathcal X$ and $\mathcal A'\subseteq\mathcal A$. Including the empty
configuration and empty attacker action provides a simple feasible initialization, although
the algorithm accepts any nonempty initial sets. For mixed Nash play, it solves the
restricted bimatrix game using the support conditions of
Section~\ref{subsec:ne-support}. For mixed Stackelberg play, it solves the restricted
versions of \eqref{eq:sse_lp}, one for each designated response in $\mathcal A'$, and
retains the best restricted LP value. The resulting solution guides searches for new
defender configurations and attacker actions.

\subsubsection{Defender generation for Nash play}

Let $(\sigma',\mu')$ be a restricted Nash equilibrium. A defender deviation in the full game
would maximize $\bar U_D(z,\mu')$ over $z\in\mathcal X$. \texttt{RGR} therefore calls
\texttt{RADAR} with
\begin{equation}
 F_N(z)=\bar U_D(z,\mu').
 \label{eq:rgr-nash-defender-score}
\end{equation}
The returned configuration is a heuristic pure best response to the restricted attacker
mixture.

\subsubsection{Defender column score for Stackelberg play}

Consider the selected restricted SSE LP with designated attacker action $a^\star$. For
$a'\in\mathcal A'$, define
\[
 d_{a'}(z)=U_A(z,a^\star)-U_A(z,a').
\]
Over $\mathcal X'$, the LP is
\begin{equation}
\begin{aligned}
 \max_{\sigma}\quad
 &\sum_{z\in\mathcal X'}\sigma(z)U_D(z,a^\star)\\
 \text{s.t.}\quad
 &\sum_{z\in\mathcal X'}\sigma(z)d_{a'}(z)\ge0,
 &&\forall a'\in\mathcal A',\\
 &\sum_{z\in\mathcal X'}\sigma(z)=1,
 \qquad \sigma(z)\ge0.
\end{aligned}
\label{eq:restricted-sse-primal}
\end{equation}
Let $\eta_{a'}\ge0$ be the dual multiplier associated with the incentive constraint against
$a'$, and let $\beta$ be the free dual variable associated with the probability equality.
The dual is
\begin{equation}
\begin{aligned}
 \min_{\beta,\eta}\quad &\beta\\
 \text{s.t.}\quad
 &\beta\ge U_D(z,a^\star)
   +\sum_{a'\in\mathcal A'}\eta_{a'}d_{a'}(z),
 &&\forall z\in\mathcal X',\\
 &\eta_{a'}\ge0,
 &&\forall a'\in\mathcal A'.
\end{aligned}
\label{eq:restricted-sse-dual}
\end{equation}
This gives the defender column score
\begin{equation}
 F_S(z)=U_D(z,a^\star)
 +\sum_{a'\in\mathcal A'}\eta_{a'}
 \bigl[U_A(z,a^\star)-U_A(z,a')\bigr].
 \label{eq:rgr-sse-defender-score}
\end{equation}
For every existing column $z\in\mathcal X'$, dual feasibility requires $F_S(z)\le\beta$.
A missing configuration with $F_S(z)>\beta$ violates the current dual constraint and has
positive reduced value for the maximization problem. \texttt{RGR} therefore calls
\texttt{RADAR} with $F_S$ to search for such a column. Pure Stackelberg design, Nash
deviation search, and mixed Stackelberg column generation use the same configuration search;
only $F$ changes.

\subsubsection{Attacker route generation}

The attacker generator combines two candidate sources.

First, the \emph{support generator} applies the exact shortest path response of
Theorem~\ref{thm:pure_attacker_br} to every pure configuration
$z\in\supp(\sigma')$ and collects the routes that appear in those responses. This source is
inexpensive and often useful, but a route that is optimal in expectation need not be optimal
under any individual support configuration.

Second, the \emph{surrogate generator} assigns edge weight
\begin{equation}
 \widetilde w_e(\sigma')
 =\sum_{z\in\supp(\sigma')}\sigma'(z)w_e(z)
 \label{eq:geometric_surrogate}
\end{equation}
and generates the $K_{\mathrm{cand}}$ shortest simple paths to each target. For any route
$r$,
\begin{align*}
 \sum_{e\in\mathcal E(r)}\widetilde w_e(\sigma')
 &=-\sum_{z\in\supp(\sigma')}\sigma'(z)\log q_r(z)\\
 &=-\log\prod_{z\in\supp(\sigma')}q_r(z)^{\sigma'(z)}.
\end{align*}
The surrogate therefore maximizes the weighted geometric mean of the route non-detection
probability, not the expectation in \eqref{eq:mixed_route_survival}. It is used only to
propose paths.
The \emph{combined generator} takes the union of the support and surrogate candidates.

Every proposed route is reevaluated using the exact expectation
\eqref{eq:mixed_route_survival}. For each target, the generator retains a candidate with
maximum exact attacker utility. A target with positive utility is included in the generated
action, and a target with negative utility is omitted. For Nash deviation measurement, a
target with zero utility may be omitted because inclusion does not change attacker utility.
For strong Stackelberg play, zero utility and ties between routes are resolved using the
exact utility most favorable to the defender among the generated candidates.

\begin{algorithm}[htbp]
\caption{\texttt{RGR}: RADAR-guided restricted game}
\label{alg:rgr}
\begin{algorithmic}[1]
\Require Game $g\in\{\mathrm{NE},\mathrm{SSE}\}$; nonempty
$\mathcal X',\mathcal A'$; candidate count $K_{\mathrm{cand}}$; iteration cap
\Loop
    \State Solve the restricted game
    \If{$g=\mathrm{NE}$}
        \State obtain $(\sigma',\mu')$ and set $F\gets F_N$
    \Else
        \State obtain $(\sigma',a^\star,\eta,\beta)$ and set $F\gets F_S$
    \EndIf
    \State $z^+\gets\Call{\texttt{RADAR}}{z^{\max},F}$
    \State generate support and surrogate routes and form action $a^+$ using exact utilities
    \State add any new generated defender or attacker action that improves or invalidates
    the restricted solution
    \If{no action is added or the iteration cap is reached}
        \State \textbf{break}
    \EndIf
\EndLoop
\State \Return the restricted solution and generated action diagnostics
\end{algorithmic}
\end{algorithm}

For a restricted Nash solution, the generated actions give the measured unilateral
deviation gains
\begin{align}
 \epsilon_D
 &=\pos{\bar U_D(z^+,\mu')-
          \bar U_D(\sigma',\mu')},
 \label{eq:rgr-nash-defender-gap}\\
 \epsilon_A
 &=\pos{\bar U_A(\sigma',a^+)-
          \bar U_A(\sigma',\mu')}.
 \label{eq:rgr-nash-attacker-gap}
\end{align}
For the selected restricted SSE LP, the corresponding generated quantities are
\begin{align}
 \epsilon_{\mathrm{col}}
 &=\pos{F_S(z^+)-\beta},
 \label{eq:rgr-sse-column-gap}\\
 \epsilon_{\mathrm{route}}
 &=\pos{\bar U_A(\sigma',a^+)-
          \bar U_A(\sigma',a^\star)}.
 \label{eq:rgr-sse-route-gap}
\end{align}
The evaluator's generated action gaps summarize improvements found among generated actions;
they are not global equilibrium gaps.

\begin{proposition}
\label{prop:rgr-certificates}
If exact best response oracles for the full game are used for both players and a restricted
Nash solution has $\epsilon_D=\epsilon_A=0$, then it is a Nash equilibrium of the full game.
A full mixed SSE certificate additionally requires considering every possible designated attacker
response and pricing every corresponding LP over all defender configurations.
The heuristic \texttt{RGR} procedure does not provide these global certificates in general.
\end{proposition}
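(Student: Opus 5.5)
The first claim follows from the exactness of the oracles together with Proposition~\ref{prop:ne-support-characterization} and the remark after \eqref{eq:mixed-ne-attacker} that only pure deviations need checking. Let $(\sigma',\mu')$ be the restricted Nash solution, extended by zeros to $\Delta(\mathcal X)\times\Delta(\mathcal A)$. Suppose $z^+$ is an exact maximizer of $F_N(z)=\bar U_D(z,\mu')$ over all of $\mathcal X$. Then $\epsilon_D=0$ gives $\bar U_D(z,\mu')\le\bar U_D(z^+,\mu')\le\bar U_D(\sigma',\mu')$ for every $z\in\mathcal X$, which is exactly \eqref{eq:mixed-ne-defender}. Likewise, suppose $a^+$ maximizes $\bar U_A(\sigma',a)$ over all of $\mathcal A$. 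Then $\epsilon_A=0$ gives \eqref{eq:mixed-ne-attacker}.

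The only care needed is that payoffs of the restricted game coincide with full-game payoffs on the supports. This holds because utilities \eqref{eq:mixed-defender-utility}--\eqref{eq:mixed-attacker-utility} are bilinear and depend only on the actions carrying positive probability. Hence $(\sigma',\mu')$ is a mixed NE of the full game.

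For the Stackelberg part, I will argue from Proposition~\ref{prop:sse-lp-correctness}. That proposition says a mixed SSE is obtained by maximizing over the union of the LPs \eqref{eq:sse_lp} indexed by every $a\in\mathcal A$, each over all of $\mathcal X$. A certificate for one restricted LP therefore only shows optimality within the LP for its designated response $a^\star$. Even when $\epsilon_{\mathrm{col}}=0$ under an exact pricing oracle, we learn only that no column improves that LP. Moreover, the dual multipliers $\eta$ are defined only for constraints in $\mathcal A'$, so pricing must also account for missing incentive rows.

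To make this concrete, I will note that a route certificate $\epsilon_{\mathrm{route}}=0$ verifies only that $a^\star\in\BR_A(\sigma')$ in the full game. It does not exclude some other $a\notin\mathcal A'$ whose LP has a larger value. I will state this as the additional requirement: every designated response must be considered, and every corresponding LP must be priced over all of $\mathcal X$, with constraints for all of $\mathcal A$. A small illustration, or simply the observation that nothing forces the omitted LPs to be dominated, suffices.

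The final sentence, that heuristic \texttt{RGR} gives no global certificates, follows from Theorem~\ref{thm:mixed_oracle_hard}. The generators \texttt{RADAR} and the support/surrogate route generators are not exact best-response oracles: Proposition~\ref{prop:radar_properties} gives only local optimality, and the surrogate optimizes a geometric mean rather than \eqref{eq:mixed_route_survival}, as shown after \eqref{eq:geometric_surrogate}. Thus zero generated gaps can coexist with a profitable ungenerated deviation. The main obstacle is phrasing the negative statement rigorously. I will handle it by pointing to the possibility of an unexplored deviation, or, at most, by sketching an instance where RADAR's neighborhood misses the true best response, rather than attempting a full counterexample.
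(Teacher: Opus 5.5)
Your proposal is correct and follows the paper's proof essentially step for step: exact oracles with $\epsilon_D=\epsilon_A=0$ rule out profitable pure deviations and linearity handles mixed ones, the Stackelberg part rests on Proposition~\ref{prop:sse-lp-correctness} (other designated responses may index better LPs), and the last claim rests on Theorem~\ref{thm:mixed_oracle_hard} together with the heuristic nature of the generators. One sharpening of your remark on missing incentive rows: if an exact route oracle gives $\epsilon_{\mathrm{route}}=0$, then $\sigma'$ is feasible for the full LP indexed by $a^\star$, and extending $\eta$ by zeros to $\mathcal A\setminus\mathcal A'$ makes $(\beta,\eta)$ dual feasible for that LP whenever exact pricing gives $\epsilon_{\mathrm{col}}=0$, so that LP is fully certified and the only remaining gap is the other indexed LPs, which matches the paper's statement that zero exact pricing certifies only the currently selected LP.
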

\begin{proof}
For Nash play, zero exact defender and attacker gains mean that neither player has a
profitable pure deviation in its full action set. Linearity then rules out profitable mixed
deviations, so the restricted solution satisfies the full Nash conditions
\eqref{eq:mixed-ne-defender}--\eqref{eq:mixed-ne-attacker}.

For mixed SSE, a zero exact pricing violation certifies only the currently selected LP and
its designated response. The global strong Stackelberg problem is the maximum over one LP
indexed by each $a\in\mathcal A$, as shown in Proposition~\ref{prop:sse-lp-correctness}. A missing
designated response may define a different LP with a larger defender value, and a missing
attacker action may invalidate the incentive constraints of an existing LP\@. Therefore a
global certificate requires complete response enumeration and complete pricing for every
indexed LP\@. Theorem~\ref{thm:mixed_oracle_hard} shows why \texttt{RGR} uses heuristic
generation instead.
\end{proof}

Because the action sets are finite and only new actions are added, \texttt{RGR} terminates
at the iteration cap or when its generators add no action. Under heuristic generation, this
stopping condition does not certify global optimality.

\section{Relations Between Nash and Stackelberg Equilibria}
\label{sec:relations}

The preceding games have the same action spaces and payoffs but differ in move order and in
what the attacker observes. Existing equivalence results for target security games
\cite{korzhyk2011nash} do not directly cover multiplicative route detection and
coordinated attacks against multiple targets. We first characterize when a pure
Stackelberg outcome is also a pure Nash equilibrium. We then show that a mixed SSE weakly improves the defender's mixed Nash payoff, characterize
the attacker's minimax value when each action contains at most one route, extend that
characterization under a common payoff ratio, and identify the weighted zero-sum case in which
mixed Nash and mixed Stackelberg payoffs coincide.

\subsection{When a Pure Stackelberg Outcome Is a Pure Nash Equilibrium}
\label{subsec:pure-se-ne-relation}

A pure Stackelberg outcome $(z^{\mathrm{SE}},a_z^{\mathrm{SE}})$ already satisfies the
attacker's Nash condition because $a_z^{\mathrm{SE}}\in\BR_A(z^{\mathrm{SE}})$. Only the
defender condition remains.

\begin{proposition}
\label{prop:pure-se-is-ne}
A pure Stackelberg outcome $(z^{\mathrm{SE}},a_z^{\mathrm{SE}})$ is a pure Nash equilibrium
if and only if
\begin{equation}
 z^{\mathrm{SE}}
 \in\argmax_{z\in\mathcal X}U_D(z,a_z^{\mathrm{SE}}).
 \label{eq:pure-se-ne-condition}
\end{equation}
\end{proposition}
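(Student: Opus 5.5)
The plan is to unpack the definition of a pure Nash equilibrium, \eqref{eq:pure-ne-defender}--\eqref{eq:pure-ne-attacker}, at the pair $(z^{\mathrm{SE}},a_z^{\mathrm{SE}})$. I will show that the attacker condition always holds for a pure Stackelberg outcome, so the equivalence reduces to the defender condition. That condition, read directly, is \eqref{eq:pure-se-ne-condition}. The argument should work for either tie-breaking rule, since it uses only $a_z^{\mathrm{SE}}\in\BR_A(z^{\mathrm{SE}})$.

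\emph{Attacker condition.} By \eqref{eq:pure-sse-definition}, $a_z^{\mathrm{SE}}$ is chosen from $\BR_A(z^{\mathrm{SE}})$. For the pessimistic variant, the selected response is likewise a minimizer over $\BR_A(z^{\mathrm{SE}})$ in \eqref{eq:pure-stackelberg-values}, so it also lies in that set. By the definition \eqref{eq:pure-attacker-br-set}, membership in $\BR_A(z^{\mathrm{SE}})$ means $U_A(z^{\mathrm{SE}},a_z^{\mathrm{SE}})\ge U_A(z^{\mathrm{SE}},a)$ for all $a\in\mathcal A$. This is exactly \eqref{eq:pure-ne-attacker} at the pair.

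\emph{Defender condition, both directions.} With the attacker condition automatic, the pair is a pure NE if and only if \eqref{eq:pure-ne-defender} holds, namely $U_D(z^{\mathrm{SE}},a_z^{\mathrm{SE}})\ge U_D(z,a_z^{\mathrm{SE}})$ for all $z\in\mathcal X$. Since $z^{\mathrm{SE}}\in\mathcal X$, this family of inequalities says precisely that $z^{\mathrm{SE}}$ attains the maximum of $z\mapsto U_D(z,a_z^{\mathrm{SE}})$ over the finite set $\mathcal X$, which is \eqref{eq:pure-se-ne-condition}. For the forward direction, an NE gives \eqref{eq:pure-ne-defender} and hence the argmax membership. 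For the reverse direction, the argmax membership gives \eqref{eq:pure-ne-defender}, and together with the attacker condition this yields both NE inequalities.

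No step is a real obstacle; the only care needed is to stress that the defender's Nash deviation holds the attacker's action $a_z^{\mathrm{SE}}$ fixed. This differs from the Stackelberg comparison, where the attacker re-optimizes. I would add a remark that \eqref{eq:pure-se-ne-condition} can fail: the instance in the proof of Theorem~\ref{thm:pure-ne-existence} already shows that a Stackelberg configuration need not be a fixed-response best reply.
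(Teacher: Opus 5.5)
Your proposal is correct and follows the paper's proof: the attacker Nash inequality holds automatically because $a_z^{\mathrm{SE}}\in\BR_A(z^{\mathrm{SE}})$, and the defender Nash inequality, with $a_z^{\mathrm{SE}}$ held fixed, is exactly the stated argmax condition. Your remarks that the argument covers the pessimistic variant and that the example from Theorem~\ref{thm:pure-ne-existence} shows the condition can fail also match the paper's discussion.
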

\begin{proof}
Because $a_z^{\mathrm{SE}}$ is an attacker best response to $z^{\mathrm{SE}}$, the attacker
inequality \eqref{eq:pure-ne-attacker} is satisfied. The pair is therefore a pure NE exactly
when it also satisfies the defender inequality \eqref{eq:pure-ne-defender}. That inequality
is equivalent to \eqref{eq:pure-se-ne-condition}.
\end{proof}

Condition \eqref{eq:pure-se-ne-condition} need not hold. In the instance used to prove
Theorem~\ref{thm:pure-ne-existence}, a pure Stackelberg optimum is $z_1$ followed by
$\{r_2\}$, or symmetrically $z_2$ followed by $\{r_1\}$. The defender obtains $-1$ at either
outcome. Holding the attacker's action fixed, however, the defender can switch to the other
sensor placement and obtain $1$. Thus a pure Stackelberg outcome can fail to be a pure NE
even though the attacker is best responding.

\subsection{General Defender Advantage under Strong Tie-Breaking}
\label{subsec:leader-advantage}

Under strong tie-breaking, a mixed SSE gives the defender at least the payoff of any mixed
Nash equilibrium, without requiring a zero-sum or common ratio assumption.

\begin{theorem}
\label{thm:mixed-sse-defender-dominates-ne}
Let $(\sigma^{\mathrm{SE}},a_\sigma^{\mathrm{SE}})$ be a mixed SSE and let
$(\sigma^{\mathrm{NE}},\mu^{\mathrm{NE}})$ be any mixed NE\@. Then
\begin{equation}
 \bar U_D(\sigma^{\mathrm{SE}},a_\sigma^{\mathrm{SE}})
 \ge
 \bar U_D(\sigma^{\mathrm{NE}},\mu^{\mathrm{NE}}).
 \label{eq:mixed-sse-defender-dominates-ne}
\end{equation}
\end{theorem}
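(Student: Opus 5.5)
The plan is the standard leader-advantage argument of von Stengel and Zamir. We exhibit a particular leader commitment whose strong-tie-breaking value is at least the Nash payoff, and then use optimality of $\sigma^{\mathrm{SE}}$. The natural candidate is $\sigma^{\mathrm{NE}}$ itself: commit to the Nash mixture and look at the attacker's best responses to it.

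First, by \eqref{eq:mixed-sse-definition}, $\bar U_D(\sigma^{\mathrm{SE}},a_\sigma^{\mathrm{SE}})=V_D^+(\sigma^{\mathrm{SE}})\ge V_D^+(\sigma^{\mathrm{NE}})$. So it suffices to show $V_D^+(\sigma^{\mathrm{NE}})\ge\bar U_D(\sigma^{\mathrm{NE}},\mu^{\mathrm{NE}})$.

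Second, we show every $a\in\supp(\mu^{\mathrm{NE}})$ lies in $\BR_A(\sigma^{\mathrm{NE}})$ as defined in \eqref{eq:mixed-attacker-br-set}. This is the attacker half of the support argument in Proposition~\ref{prop:ne-support-characterization}. If some support action gave the attacker strictly less than $\max_{a'}\bar U_A(\sigma^{\mathrm{NE}},a')$, then the $\mu^{\mathrm{NE}}$-average would be strictly below that maximum, contradicting \eqref{eq:mixed-ne-attacker} against the pure deviation to the maximizer. Here we should note that $\bar U_A(\sigma,a)$ as used in the Stackelberg section, via \eqref{eq:mixed-attacker-route-utility} extended additively, coincides with $\sum_z\sigma(z)U_A(z,a)$ from \eqref{eq:mixed-attacker-utility}. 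This follows from linearity of \eqref{eq:attacker_route_utility} in $q_r(z)$ and the definition \eqref{eq:mixed_route_survival}. The same holds for $\bar U_D$.

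Third, we average. Since
\[
\bar U_D(\sigma^{\mathrm{NE}},\mu^{\mathrm{NE}})=\sum_{a\in\supp(\mu^{\mathrm{NE}})}\mu^{\mathrm{NE}}(a)\,\bar U_D(\sigma^{\mathrm{NE}},a)
\]
is a convex combination of defender payoffs over actions in $\BR_A(\sigma^{\mathrm{NE}})$, it is at most $\max_{a\in\BR_A(\sigma^{\mathrm{NE}})}\bar U_D(\sigma^{\mathrm{NE}},a)=V_D^+(\sigma^{\mathrm{NE}})$. Chaining this with the first step gives \eqref{eq:mixed-sse-defender-dominates-ne}.

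The only delicate point is the second step together with the notational identification of the two definitions of $\bar U$. Strong tie-breaking is essential in the third step, because the support of $\mu^{\mathrm{NE}}$ may contain several tied best responses and we need the defender-favorable one. Under pessimistic tie-breaking the argument would fail, consistent with Theorem~\ref{thm:mixed_existence}. Existence of the SSE maximizer is supplied by Theorem~\ref{thm:mixed_existence}, so no supremum issues arise.
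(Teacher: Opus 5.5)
Your proof is correct and follows the paper's argument: you show that every action in the support of $\mu^{\mathrm{NE}}$ is a best response to $\sigma^{\mathrm{NE}}$, so the Nash payoff is at most $V_D^+(\sigma^{\mathrm{NE}})$, which in turn is at most the SSE value by optimality of $\sigma^{\mathrm{SE}}$. Your explicit check that the Stackelberg-section $\bar U$ agrees with the normal-form $\bar U$ is a point the paper leaves implicit.
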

\begin{proof}
Every action in $\supp(\mu^{\mathrm{NE}})$ is an attacker best response to
$\sigma^{\mathrm{NE}}$. Strong tie-breaking at that same defender mixture gives
\begin{align*}
 V_D^+(\sigma^{\mathrm{NE}})
 &=\max_{a\in\BR_A(\sigma^{\mathrm{NE}})}
   \bar U_D(\sigma^{\mathrm{NE}},a)\\
 &\ge \sum_{a\in\supp(\mu^{\mathrm{NE}})}
   \mu^{\mathrm{NE}}(a)\bar U_D(\sigma^{\mathrm{NE}},a)\\
 &=\bar U_D(\sigma^{\mathrm{NE}},\mu^{\mathrm{NE}}).
\end{align*}
The mixed SSE maximizes $V_D^+(\sigma)$ over all defender mixtures, so its payoff is at
least $V_D^+(\sigma^{\mathrm{NE}})$ and hence at least the mixed Nash payoff.
\end{proof}

The theorem orders defender payoffs but does not order attacker payoffs. Additional payoff
structure is needed for that comparison.

\subsection{Minimax Value under Single Route Attacks}
\label{subsec:single-route-minimax}

Suppose every nonempty attacker action contains exactly one route, so the attacker chooses
at most one target. Target variation in defender and attacker payoffs can then be absorbed by
reweighting the attacker's route mixture.

\begin{lemma}
\label{lem:single_route-bound}
Assume every attacker action contains at most one route. For any mixed NE
$(\sigma^{\mathrm{NE}},\mu^{\mathrm{NE}})$, any
$\sigma\in\Delta(\mathcal X)$, and any $a\in\BR_A(\sigma)$,
\begin{equation}
\begin{aligned}
 \bar U_A(\sigma^{\mathrm{NE}},\mu^{\mathrm{NE}})
 &=\min_{\sigma'\in\Delta(\mathcal X)}
   \max_{a'\in\mathcal A}\bar U_A(\sigma',a')\\
 &\le \bar U_A(\sigma,a).
\end{aligned}
\label{eq:single_route_bound}
\end{equation}
Thus the attacker's mixed Nash payoff is its minimax value and does not exceed its payoff
from a best response to any pure or mixed defender strategy.
\end{lemma}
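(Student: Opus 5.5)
The plan is to reduce the general-sum simultaneous game to an auxiliary zero-sum game in which the attacker's payoff is $\bar U_A$. We would show that $\sigma^{\mathrm{NE}}$, together with a reweighted version of $\mu^{\mathrm{NE}}$, is a saddle point of that game. The single-route assumption is what makes the reweighting possible: each nonempty action $a=\{r\}$ has utilities that are affine in the one quantity $q_r(z)$. Concretely, $U_D(z,r)=U_D^d(t(r))-\Delta_D(t(r))q_r(z)$ and $U_A(z,r)=U_A^d(t(r))+\Delta_A(t(r))q_r(z)$, with $\Delta_D,\Delta_A>0$. So the defender's and attacker's sensitivities to the defense differ, action by action, only by a positive factor $\Delta_D(t(r))/\Delta_A(t(r))$.

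\textbf{Step 1: the attacker's NE payoff is its best-response value at $\sigma^{\mathrm{NE}}$.} By \eqref{eq:mixed-ne-attacker} and the support argument in Proposition~\ref{prop:ne-support-characterization}, every $a\in\supp(\mu^{\mathrm{NE}})$ attains $v_A:=\max_{a'}\bar U_A(\sigma^{\mathrm{NE}},a')$. Hence $\bar U_A(\sigma^{\mathrm{NE}},\mu^{\mathrm{NE}})=v_A$. Since the empty action is always available, $v_A\ge 0$.

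\textbf{Step 2: reweight the attacker mixture.} First take the degenerate case in which $\mu^{\mathrm{NE}}$ puts all mass on $\emptyset$. Then $v_A=0$, so $\max_{a'}\bar U_A(\sigma^{\mathrm{NE}},a')=0$, while every $\sigma'$ yields a maximum of at least $0$ because $\emptyset$ is available. Therefore the minimax value equals $0=v_A$.

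Otherwise, let $P$ be the set of nonempty actions $\{r\}$ in the support. Define
\[
\tilde\mu(\{r\})\propto \mu^{\mathrm{NE}}(\{r\})\,\Delta_D(t(r))/\Delta_A(t(r)),
\]
normalized over $P$. The defender condition \eqref{eq:mixed-ne-defender} says that every $z\in\supp(\sigma^{\mathrm{NE}})$ maximizes $\bar U_D(z,\mu^{\mathrm{NE}})$. Up to an additive constant independent of $z$, this objective equals $-\sum_{r}\mu^{\mathrm{NE}}(\{r\})\Delta_D(t(r))q_r(z)$. That is a positive multiple of $-\bar U_A(z,\tilde\mu)$, again up to a $z$-independent constant coming from the $U_A^d$ terms. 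Hence $\sigma^{\mathrm{NE}}$ minimizes $\bar U_A(\sigma',\tilde\mu)$ over $\sigma'\in\Delta(\mathcal X)$.

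Moreover, $\supp(\tilde\mu)\subseteq\supp(\mu^{\mathrm{NE}})\subseteq\BR_A(\sigma^{\mathrm{NE}})$, so $\bar U_A(\sigma^{\mathrm{NE}},\tilde\mu)=v_A$. This step is where the single-route hypothesis is essential. With several routes per action, an action's sensitivity would be a sum of terms carrying different ratios $\Delta_D/\Delta_A$, and no single rescaling of the action would match the two players' sensitivities. I expect this to be the main point needing care, together with the empty action, which has zero sensitivity and is excluded from $\tilde\mu$.

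\textbf{Step 3: conclude the saddle-point identity and the inequality.} For every $\sigma'$, Steps 1 and 2 give
\[
\max_{a'}\bar U_A(\sigma',a')\ \ge\ \bar U_A(\sigma',\tilde\mu)\ \ge\ \bar U_A(\sigma^{\mathrm{NE}},\tilde\mu)=v_A=\max_{a'}\bar U_A(\sigma^{\mathrm{NE}},a').
\]
Therefore the minimum over $\sigma'$ of $\max_{a'}\bar U_A(\sigma',a')$ is attained at $\sigma^{\mathrm{NE}}$ and equals $v_A=\bar U_A(\sigma^{\mathrm{NE}},\mu^{\mathrm{NE}})$. Finally, for any $\sigma$ and any $a\in\BR_A(\sigma)$ we have $\bar U_A(\sigma,a)=\max_{a'}\bar U_A(\sigma,a')$, which is at least the minimax value. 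This gives the inequality in \eqref{eq:single_route_bound}. Pure defender strategies are covered as degenerate mixtures.
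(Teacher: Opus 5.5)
Your proof is correct and follows the paper's argument. You reweight the attacker's Nash mixture by $\Delta_D(t(r))/\Delta_A(t(r))$, so that $\sigma^{\mathrm{NE}}$ minimizes $\bar U_A(\cdot,\tilde\mu)$ while $\tilde\mu$ remains a best response, yielding a saddle point, and you handle the empty action separately. The only difference is the case split: the paper uses the zero-value argument whenever $\emptyset$ has positive Nash probability, whereas you reserve it for the case where $\emptyset$ carries all the mass and otherwise drop $\emptyset$ from the reweighted mixture, which is valid because $\emptyset$ contributes no $z$-dependent term to either player's payoff.
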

\begin{proof}
Appendix~\ref{app:single-route-proof} gives the complete reweighting and saddle point
argument, including the separate case in which the empty action belongs to the Nash support.
\end{proof}

The lemma compares values rather than arbitrary defender strategies: an optimal
Stackelberg defense need not itself minimize the attacker's payoff. With explicitly listed
actions, a minimax defender mixture is obtained from
\begin{equation}
\begin{aligned}
 \min_{\sigma,v}\quad &v\\
 \text{s.t.}\quad
 &\sum_{z\in\mathcal X}\sigma(z)U_A(z,a)\le v,
 &&\forall a\in\mathcal A,\\
 &\sum_{z\in\mathcal X}\sigma(z)=1,\\
 &\sigma(z)\ge0,
 &&\forall z\in\mathcal X.
\end{aligned}
\label{eq:attacker-minimax-lp}
\end{equation}
Recovering a Nash attacker mixture for the original general-sum game additionally requires
a saddle point mixture for the zero-sum game defined by the attacker payoff and the inverse
reweighting in Appendix~\ref{app:single-route-proof}. \texttt{RGR} instead solves the restricted original
game and remains applicable when one attacker action contains several routes.

\subsection{Multiple Routes and a Common Payoff Ratio}
\label{subsec:common-ratio}

When actions contain several routes, reweighting probabilities separately by target need not
produce one valid attacker mixture. A payoff ratio that is constant across targets restores a
single identity that makes the game strategically zero sum.

\begin{theorem}
\label{thm:common_ratio}
Suppose a constant $\lambda>0$ satisfies
\begin{equation}
 \frac{\Delta_D(t)}{\Delta_A(t)}=\lambda,
 \qquad \forall t\in\mathcal T.
 \label{eq:common_ratio}
\end{equation}
When an attacker action may contain any number of routes, every mixed NE gives the attacker
the minimax value in
\eqref{eq:single_route_bound}. Consequently, the attacker's payoff from a best response to
any pure or mixed defender strategy is at least its mixed Nash payoff.
\end{theorem}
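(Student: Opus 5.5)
The approach is to show that, under \eqref{eq:common_ratio}, the defender's payoff is an affine transformation of the attacker's payoff. The affine shift depends only on the attacker's action, not on the defender's configuration. This makes the general-sum game strategically equivalent to the zero-sum game in which the attacker maximizes $U_A$ and the defender minimizes it. The saddle point property of mixed Nash equilibria of zero-sum games then gives the minimax identity directly.

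\emph{Step 1 (payoff identity).} Fix a route $r$ with target $t=t(r)$. From \eqref{eq:attacker_route_utility} we have $\Delta_A(t)q_r(z)=U_A(z,r)-U_A^d(t)$. Substituting into \eqref{eq:defender_route_utility} and using $\Delta_D(t)=\lambda\Delta_A(t)$ gives
\[
 U_D(z,r)=c(t)-\lambda U_A(z,r),\qquad c(t)=U_D^d(t)+\lambda U_A^d(t).
\]
Summing over the routes in an action via \eqref{eq:action_utilities} yields
\[
 U_D(z,a)=C(a)-\lambda U_A(z,a),\qquad C(a)=\sum_{r\in a}c(t(r)).
\]
Here $C(\emptyset)=0$, and $C(a)$ does not depend on $z$. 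This step is what allows arbitrarily many routes per action. Unlike the per-target reweighting used in Lemma~\ref{lem:single_route-bound}, it needs no rescaling of the attacker mixture, because the ratio $\lambda$ is common to all targets.

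\emph{Step 2 (Nash conditions become saddle point conditions).} Let $(\sigma,\mu)$ be a mixed NE. By linearity, for every $z\in\mathcal X$,
\[
 \bar U_D(z,\mu)=\sum_a\mu(a)C(a)-\lambda\bar U_A(z,\mu).
\]
The first term is the same for every $z$, and it is also the same when $z$ is replaced by $\sigma$. Since $\lambda>0$, condition \eqref{eq:mixed-ne-defender} is therefore equivalent to $\bar U_A(\sigma,\mu)\le\bar U_A(z,\mu)$ for all $z$. Condition \eqref{eq:mixed-ne-attacker} already states $\bar U_A(\sigma,\mu)\ge\bar U_A(\sigma,a)$ for all $a$. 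Let $v=\bar U_A(\sigma,\mu)$. These two conditions together give
\[
 \max_{a}\bar U_A(\sigma,a)=v=\min_z\bar U_A(z,\mu).
\]

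\emph{Step 3 (minimax value and comparison).} On one side, $\min_{\sigma'}\max_{a'}\bar U_A(\sigma',a')\le\max_a\bar U_A(\sigma,a)=v$. On the other side, for any $\sigma'$,
\[
 \max_{a'}\bar U_A(\sigma',a')\ge\bar U_A(\sigma',\mu)\ge\min_z\bar U_A(z,\mu)=v.
\]
The first inequality holds because a mixture of actions cannot beat the best single action. So the minimum over $\sigma'$ is at least $v$, and $v$ equals the minimax value in \eqref{eq:single_route_bound}. This argument avoids invoking von Neumann's theorem separately. Finally, for any $\sigma$ and any $a\in\BR_A(\sigma)$, we have $\bar U_A(\sigma,a)=\max_{a'}\bar U_A(\sigma,a')\ge\min_{\sigma'}\max_{a'}\bar U_A(\sigma',a')$, which is the Nash payoff. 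A pure $z$ is the degenerate mixture on $z$, so the claim covers pure defender strategies as well.

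\emph{Main obstacle.} The key subtlety is in Step 2: the two games are not equivalent for the attacker's payoff comparisons across different actions, because $C(a)$ varies with $a$. The equivalence is needed only for the defender's deviations, where $\mu$ is held fixed and the $C$-term cancels. So the proof must apply the transformation only to the defender's condition and keep the attacker's condition in its original form.
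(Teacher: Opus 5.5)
Your proposal is correct and follows essentially the same argument as the paper. Both derive the identity $U_D(z,a)+\lambda U_A(z,a)=C(a)$; your constant $c(t)=U_D^d(t)+\lambda U_A^d(t)$ equals the paper's $c_t=U_D^u(t)+\lambda U_A^u(t)$ under the common ratio. Both then turn the defender's Nash condition into minimizing $\bar U_A$ against the fixed $\mu$ and conclude that the mixed NE is a saddle point. The only difference is that you verify the saddle-point-to-minimax step explicitly in Step 3, whereas the paper simply asserts it.
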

\begin{proof}
For each target, define
\begin{equation}
 c_t=U_D^u(t)+\lambda U_A^u(t).
 \label{eq:target-action-constant}
\end{equation}
Condition \eqref{eq:common_ratio} implies
\begin{align*}
 U_D^d(t)+\lambda U_A^d(t)
 &=U_D^u(t)+\Delta_D(t)
   +\lambda\bigl(U_A^u(t)-\Delta_A(t)\bigr)\\
 &=U_D^u(t)+\lambda U_A^u(t)
   +\Delta_D(t)-\lambda\Delta_A(t)\\
 &=c_t.
\end{align*}
Using either the detected/undetected expectation or the affine forms
\eqref{eq:defender_route_utility}--\eqref{eq:attacker_route_utility}, every route to target
$t$ therefore satisfies
\[
 U_D(z,r)+\lambda U_A(z,r)=c_t.
\]
For an attacker action $a$, let
\begin{equation}
 C(a)=\sum_{r\in a}c_{t(r)}.
 \label{eq:action-constant}
\end{equation}
Additivity gives the outcome identity
\begin{equation}
 U_D(z,a)+\lambda U_A(z,a)=C(a),
 \qquad \forall z\in\mathcal X,\ a\in\mathcal A.
 \label{eq:strategic_zero_sum}
\end{equation}
The right-hand side depends on the attacker action but not on the defender action.

Let $(\sigma^{\mathrm{NE}},\mu^{\mathrm{NE}})$ be a mixed NE\@. For a fixed attacker mixture
$\mu^{\mathrm{NE}}$, taking expectations in \eqref{eq:strategic_zero_sum} yields
\[
 \bar U_D(\sigma,\mu^{\mathrm{NE}})
 =\mathbb E_{a\sim\mu^{\mathrm{NE}}}[C(a)]
  -\lambda\bar U_A(\sigma,\mu^{\mathrm{NE}}).
\]
The first term is independent of $\sigma$. Because $\sigma^{\mathrm{NE}}$ maximizes defender
utility against $\mu^{\mathrm{NE}}$, it minimizes attacker utility against that same
mixture. The attacker Nash condition says that $\mu^{\mathrm{NE}}$ maximizes attacker
utility against $\sigma^{\mathrm{NE}}$. Hence
$(\sigma^{\mathrm{NE}},\mu^{\mathrm{NE}})$ is a saddle point of the zero-sum game in which
the defender minimizes and the attacker maximizes $\bar U_A$. Its attacker payoff is
therefore
\[
 \min_{\sigma'\in\Delta(\mathcal X)}
 \max_{a'\in\mathcal A}\bar U_A(\sigma',a').
\]
For any $\sigma$ and any $a\in\BR_A(\sigma)$,
$\bar U_A(\sigma,a)=\max_{a'}\bar U_A(\sigma,a')$, which is at least the minimax value.
This proves both statements.
\end{proof}

Theorems~\ref{thm:mixed-sse-defender-dominates-ne} and~\ref{thm:common_ratio} give
nonnegative defender and attacker payoff differences between mixed SSE and mixed NE\@. The
range of $C(a)$ bounds those differences.

\begin{theorem}
\label{thm:common_ratio_gap}
Let
\[
 C_{\min}=\min_{a\in\mathcal A}C(a),
 \qquad
 C_{\max}=\max_{a\in\mathcal A}C(a).
\]
Under \eqref{eq:common_ratio}, a mixed SSE
$(\sigma^{\mathrm{SE}},a_\sigma^{\mathrm{SE}})$ and any mixed NE
$(\sigma^{\mathrm{NE}},\mu^{\mathrm{NE}})$ satisfy
\begin{align}
 0&\le
 \bar U_D(\sigma^{\mathrm{SE}},a_\sigma^{\mathrm{SE}})
 -\bar U_D(\sigma^{\mathrm{NE}},\mu^{\mathrm{NE}})
 \le C_{\max}-C_{\min},
 \label{eq:defender_gap}\\
 0&\le
 \bar U_A(\sigma^{\mathrm{SE}},a_\sigma^{\mathrm{SE}})
 -\bar U_A(\sigma^{\mathrm{NE}},\mu^{\mathrm{NE}})
 \le \frac{C_{\max}-C_{\min}}{\lambda}.
 \label{eq:attacker_gap}
\end{align}
\end{theorem}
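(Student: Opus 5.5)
The plan is to derive both lower bounds from results already proved, and both upper bounds from a single linear identity combined with those lower bounds.

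\emph{Lower bounds.} The left inequality in \eqref{eq:defender_gap} is exactly Theorem~\ref{thm:mixed-sse-defender-dominates-ne}, which needs no common ratio assumption.

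For the left inequality in \eqref{eq:attacker_gap}, I will note that in a mixed SSE the follower action satisfies $a_\sigma^{\mathrm{SE}}\in\BR_A(\sigma^{\mathrm{SE}})$. Theorem~\ref{thm:common_ratio} states that, under \eqref{eq:common_ratio}, the attacker's payoff from a best response to any mixed defender strategy is at least its mixed Nash payoff. Applying this with $\sigma=\sigma^{\mathrm{SE}}$ gives $\bar U_A(\sigma^{\mathrm{SE}},a_\sigma^{\mathrm{SE}})\ge\bar U_A(\sigma^{\mathrm{NE}},\mu^{\mathrm{NE}})$.

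\emph{Upper bounds via the identity.} The key step is to take expectations of the outcome identity \eqref{eq:strategic_zero_sum}, $U_D(z,a)+\lambda U_A(z,a)=C(a)$, under the two strategy profiles.

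For the SSE pair, I average over $z\sim\sigma^{\mathrm{SE}}$ with $a=a_\sigma^{\mathrm{SE}}$ fixed. The right-hand side does not depend on $z$, so
\[
\bar U_D(\sigma^{\mathrm{SE}},a_\sigma^{\mathrm{SE}})+\lambda\bar U_A(\sigma^{\mathrm{SE}},a_\sigma^{\mathrm{SE}})=C(a_\sigma^{\mathrm{SE}}).
\]
For the NE pair, bilinearity of \eqref{eq:mixed-defender-utility}--\eqref{eq:mixed-attacker-utility} gives
\[
\bar U_D(\sigma^{\mathrm{NE}},\mu^{\mathrm{NE}})+\lambda\bar U_A(\sigma^{\mathrm{NE}},\mu^{\mathrm{NE}})=\mathbb E_{a\sim\mu^{\mathrm{NE}}}[C(a)].
\]
Write $\Delta_D^{\mathrm{gap}}$ and $\Delta_A^{\mathrm{gap}}$ for the defender and attacker differences appearing in \eqref{eq:defender_gap} and \eqref{eq:attacker_gap}. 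Subtracting the two identities yields
\[
\Delta_D^{\mathrm{gap}}+\lambda\Delta_A^{\mathrm{gap}}=C(a_\sigma^{\mathrm{SE}})-\mathbb E_{\mu^{\mathrm{NE}}}[C(a)].
\]
The right-hand side is at most $C_{\max}-C_{\min}$, because the first term is at most $C_{\max}$ and an expectation of $C$ is at least $C_{\min}$.

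Since both gaps are nonnegative by the lower bounds and $\lambda>0$, each term on the left is at most the whole sum. This gives $\Delta_D^{\mathrm{gap}}\le C_{\max}-C_{\min}$ and $\lambda\Delta_A^{\mathrm{gap}}\le C_{\max}-C_{\min}$; dividing the latter by $\lambda$ gives \eqref{eq:attacker_gap}.

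\emph{Main obstacle.} There is no real obstacle here. The only point needing care is that the upper bounds depend on the lower bounds: nonnegativity of the opposite gap is what allows each gap to be isolated from the joint identity, so the lower bounds must be established first. I will also check that $C_{\min}$ and $C_{\max}$ are well defined because $\mathcal A$ is finite and nonempty, and that $C(\emptyset)=0$ is included in the range.
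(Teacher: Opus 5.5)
Your proposal is correct and follows the paper's proof essentially step for step. Like the paper, you get both nonnegativity bounds from Theorem~\ref{thm:mixed-sse-defender-dominates-ne} and Theorem~\ref{thm:common_ratio}, average the identity \eqref{eq:strategic_zero_sum} at the SSE and NE profiles, subtract, and then isolate each nonnegative gap from their combined bound $C_{\max}-C_{\min}$.
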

\begin{proof}
Let $\Delta_D^{\mathrm{pay}}$ and $\Delta_A^{\mathrm{pay}}$ denote the defender and attacker
payoff differences on the left sides of \eqref{eq:defender_gap} and
\eqref{eq:attacker_gap}. Theorem~\ref{thm:mixed-sse-defender-dominates-ne} gives
$\Delta_D^{\mathrm{pay}}\ge0$, and Theorem~\ref{thm:common_ratio} gives
$\Delta_A^{\mathrm{pay}}\ge0$.

Apply \eqref{eq:strategic_zero_sum} to the SSE outcome and take its expectation at the NE:
\begin{align*}
 &\bar U_D(\sigma^{\mathrm{SE}},a_\sigma^{\mathrm{SE}})
 +\lambda\bar U_A(\sigma^{\mathrm{SE}},a_\sigma^{\mathrm{SE}})
 =C(a_\sigma^{\mathrm{SE}}),\\
 &\bar U_D(\sigma^{\mathrm{NE}},\mu^{\mathrm{NE}})
 +\lambda\bar U_A(\sigma^{\mathrm{NE}},\mu^{\mathrm{NE}})
 =\mathbb E_{a\sim\mu^{\mathrm{NE}}}[C(a)].
\end{align*}
Subtracting gives
\begin{equation}
 \Delta_D^{\mathrm{pay}}+\lambda\Delta_A^{\mathrm{pay}}
 =C(a_\sigma^{\mathrm{SE}})
  -\mathbb E_{a\sim\mu^{\mathrm{NE}}}[C(a)].
 \label{eq:sum-of-payoff-gaps}
\end{equation}
Both quantities on the right lie in $[C_{\min},C_{\max}]$, so their difference is at most
$C_{\max}-C_{\min}$. Since the two terms on the left are nonnegative, each is individually
bounded by that range, with the attacker term divided by $\lambda$. This yields
\eqref{eq:defender_gap}--\eqref{eq:attacker_gap}.
\end{proof}

\subsection{Weighted Zero-Sum Case}
\label{subsec:weighted-zero-sum}

The gap bounds become zero when $C(a)$ is the same for every attacker action. Because the
empty action belongs to $\mathcal A$ and $C(\emptyset)=0$, the common value must be zero.
Since singleton route actions are available, this requires $c_t=0$ for every target. Together
with \eqref{eq:common_ratio}, this is equivalent to
\begin{equation}
 U_D^u(t)=-\lambda U_A^u(t),
 \qquad
 U_D^d(t)=-\lambda U_A^d(t),
 \qquad \forall t\in\mathcal T.
 \label{eq:weighted_zero_sum}
\end{equation}

\begin{theorem}
\label{thm:weighted-zero-sum-relations}
Assume \eqref{eq:weighted_zero_sum}. Let
$(z^{\mathrm{SE}},a_z^{\mathrm{SE}})$ be an optimal pure Stackelberg outcome,
$(\sigma^{\mathrm{SE}},a_\sigma^{\mathrm{SE}})$ a mixed SSE, and
$(\sigma^{\mathrm{NE}},\mu^{\mathrm{NE}})$ any mixed NE\@. Then strong and pessimistic
tie-breaking give the same defender value, and
\begin{align}
 U_D(z^{\mathrm{SE}},a_z^{\mathrm{SE}})
 &\le
 \bar U_D(\sigma^{\mathrm{SE}},a_\sigma^{\mathrm{SE}})
 =\bar U_D(\sigma^{\mathrm{NE}},\mu^{\mathrm{NE}}),
 \label{eq:weighted-zero-sum-defender-order}\\
 U_A(z^{\mathrm{SE}},a_z^{\mathrm{SE}})
 &\ge
 \bar U_A(\sigma^{\mathrm{SE}},a_\sigma^{\mathrm{SE}})
 =\bar U_A(\sigma^{\mathrm{NE}},\mu^{\mathrm{NE}}).
 \label{eq:weighted-zero-sum-attacker-order}
\end{align}
Moreover, the defender payoff gain from randomization satisfies
\begin{equation}
\begin{aligned}
 &\bar U_D(\sigma^{\mathrm{SE}},a_\sigma^{\mathrm{SE}})
 -U_D(z^{\mathrm{SE}},a_z^{\mathrm{SE}})\\
 &\quad=\lambda\left[
 U_A(z^{\mathrm{SE}},a_z^{\mathrm{SE}})
 -\bar U_A(\sigma^{\mathrm{SE}},a_\sigma^{\mathrm{SE}})
 \right].
\end{aligned}
\label{eq:randomization_gain}
\end{equation}
\end{theorem}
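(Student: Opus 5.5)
Under \eqref{eq:weighted_zero_sum} every $c_t=0$, so $C(a)=0$ for every $a\in\mathcal A$. The identity \eqref{eq:strategic_zero_sum} then reads $U_D(z,a)=-\lambda U_A(z,a)$ for all pure pairs. By linearity it extends to $\bar U_D(\sigma,\mu)=-\lambda\bar U_A(\sigma,\mu)$ for all mixtures, and the game is a weighted zero-sum game. Every claim will be derived from this single identity together with Theorems~\ref{thm:mixed-sse-defender-dominates-ne}, \ref{thm:common_ratio} and~\ref{thm:common_ratio_gap}.

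\emph{Tie-breaking.} Fix any $\sigma$, pure or mixed. Every $a\in\BR_A(\sigma)$ attains the same attacker value $\max_{a'}\bar U_A(\sigma,a')$, so every such $a$ gives the defender the same value $-\lambda\max_{a'}\bar U_A(\sigma,a')$. Hence $V_D^+(\sigma)=V_D^-(\sigma)$ pointwise, for pure configurations and for mixtures. The pure and mixed strong and pessimistic problems therefore coincide, and the pessimistic mixed supremum is attained because the strong one is (Theorem~\ref{thm:mixed_existence}).

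\emph{Equalities between mixed SSE and mixed NE.} Here $C_{\max}=C_{\min}=0$, so Theorem~\ref{thm:common_ratio_gap} forces both gaps in \eqref{eq:defender_gap}--\eqref{eq:attacker_gap} to vanish. This gives the two equalities in \eqref{eq:weighted-zero-sum-defender-order}--\eqref{eq:weighted-zero-sum-attacker-order}.

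\emph{Pure versus mixed inequalities.} The pure configuration $z^{\mathrm{SE}}$ is a degenerate mixture. Since $a_z^{\mathrm{SE}}\in\BR_A(z^{\mathrm{SE}})$, we have $U_D(z^{\mathrm{SE}},a_z^{\mathrm{SE}})=V_D^+(z^{\mathrm{SE}})\le\max_\sigma V_D^+(\sigma)=\bar U_D(\sigma^{\mathrm{SE}},a_\sigma^{\mathrm{SE}})$. Alternatively, Theorem~\ref{thm:common_ratio} gives the attacker inequality directly, because $a_z^{\mathrm{SE}}$ is a best response to a pure defender strategy. Dividing the identity $U_D=-\lambda U_A$ by $-\lambda<0$ converts one inequality into the other. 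Finally, \eqref{eq:randomization_gain} follows by substituting $U_D=-\lambda U_A$ at both outcomes and subtracting.

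The only point needing care is the derivation $C(a)\equiv0$ from \eqref{eq:weighted_zero_sum}. Directly, $c_t=U_D^u(t)+\lambda U_A^u(t)=0$, and the common-ratio condition \eqref{eq:common_ratio} holds automatically, since $\Delta_D=\lambda\Delta_A$ follows from the two equalities. This should be stated explicitly so that Theorems~\ref{thm:common_ratio}--\ref{thm:common_ratio_gap} apply. The rest is routine.
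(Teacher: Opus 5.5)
Your proof is correct. For the tie-breaking claim, the pure-versus-mixed inequalities, and the randomization-gain identity, it follows the paper's argument essentially step for step. The real difference is how you obtain the two equalities between mixed SSE and mixed NE payoffs. You get them as the degenerate case of Theorem~\ref{thm:common_ratio_gap}: with $C\equiv 0$, identity \eqref{eq:sum-of-payoff-gaps} reads $\Delta_D^{\mathrm{pay}}+\lambda\Delta_A^{\mathrm{pay}}=0$. The two nonnegativity results (Theorem~\ref{thm:mixed-sse-defender-dominates-ne} for the defender, Theorem~\ref{thm:common_ratio} for the attacker) then squeeze both differences to zero. The paper instead argues directly from the zero-sum value $v_A=\min_{\sigma}\max_{a}\bar U_A(\sigma,a)$. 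Every mixed NE attains $v_A$, and every best response gives the defender $-\lambda\max_a\bar U_A(\sigma,a)$. Maximizing the Stackelberg objective is therefore the same optimization as the attacker's minimax problem, so the mixed SSE also yields $v_A$ and $-\lambda v_A$.

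Your squeeze is shorter given the earlier results, and it presents the theorem as the $C_{\max}=C_{\min}$ case of the gap bound. That is precisely the motivation the paper states just before \eqref{eq:weighted_zero_sum}. The paper's route is more self-contained and more informative. It names the common value as $-\lambda v_A$ and shows that the defender's mixed SSE strategies are exactly the minimax mixtures of \eqref{eq:attacker-minimax-lp}, an identification your argument leaves implicit.

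Your explicit check that \eqref{eq:weighted_zero_sum} implies \eqref{eq:common_ratio} is what makes Theorems~\ref{thm:common_ratio}--\ref{thm:common_ratio_gap} applicable. Also, $\lambda>0$ is forced by $U_D^u(t)<0<U_A^u(t)$, so your division by $-\lambda$ is legitimate.
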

\begin{proof}
Equation \eqref{eq:weighted_zero_sum} gives
\begin{equation}
 U_D(z,a)=-\lambda U_A(z,a)
 \label{eq:outcome-weighted-zero-sum}
\end{equation}
for every pure outcome, and hence the same identity for all mixed outcomes. If two attacker
actions are tied in attacker utility against a defender strategy, they are also tied in
defender utility by \eqref{eq:outcome-weighted-zero-sum}. Strong and pessimistic
tie-breaking therefore coincide.

Let
\[
 v_A=\min_{\sigma\in\Delta(\mathcal X)}
      \max_{a\in\mathcal A}\bar U_A(\sigma,a)
\]
be the attacker value of the zero-sum game. Every mixed NE has attacker payoff $v_A$ and
defender payoff $-\lambda v_A$. For a defender mixture $\sigma$, every attacker best
response has utility $\max_a\bar U_A(\sigma,a)$, so the defender's mixed Stackelberg value
at $\sigma$ is
\[
 -\lambda\max_{a\in\mathcal A}\bar U_A(\sigma,a).
\]
Maximizing this expression over $\sigma$ is equivalent to minimizing the inner maximum.
Thus the mixed SSE has attacker payoff $v_A$ and defender payoff $-\lambda v_A$, proving
the equalities in \eqref{eq:weighted-zero-sum-defender-order}--
\eqref{eq:weighted-zero-sum-attacker-order}.

A pure defender configuration is a degenerate mixed strategy, so restricting the defender
to pure strategies cannot improve its maximum payoff. This proves the defender inequality
in \eqref{eq:weighted-zero-sum-defender-order}. Multiplying that inequality by
$-1/\lambda$ and using \eqref{eq:outcome-weighted-zero-sum} gives the attacker inequality
in \eqref{eq:weighted-zero-sum-attacker-order}. Finally, subtracting
$U_D=-\lambda U_A$ at the pure and mixed Stackelberg outcomes yields
\eqref{eq:randomization_gain} exactly.
\end{proof}

Without the common ratio condition, a mixed SSE still weakly improves the defender's mixed
Nash payoff by Theorem~\ref{thm:mixed-sse-defender-dominates-ne}, but the attacker payoff has
no universal ordering. The evaluation uses the general model in which payoffs vary by target and then checks the
specialized identities separately on small constructed profiles.

\section{Evaluation}
\label{sec:evaluation}

We evaluate four questions. First, how does the exact shortest path response to an observed
pure configuration compare with route generation against a mixed defender strategy? Second,
how closely does \texttt{RADAR} approximate exact optimization over pure configurations, and
how much does adaptive refinement improve over RAD repair alone? Third, how closely do
\texttt{RGR}'s restricted Nash and mixed Stackelberg solutions match fully enumerated
references, and what diagnostics remain when exhaustive enumeration is unavailable? Fourth,
do small constructed instances satisfy the payoff relationships in
Section~\ref{sec:relations}?

The evaluation contains 51,929 recorded rows. We distinguish exact comparisons from
diagnostics over generated actions: a zero generated action gap is not a global certificate
unless the corresponding exact oracle was used.

\subsection{Experimental Setup}
\label{subsec:evaluation-setup}

\paragraph{Graph families and instance counts.}
The enumerable experiments use diamond graphs (DGs) with
$|\mathcal V|\in\{4,7\}$ and random geometric graphs (RGGs) with
$|\mathcal V|\in\{6,7,8\}$. Diamond graphs provide controlled alternative routes, while
RGGs provide irregular topology. Each topology and size contains 100 independently
parameterized instances. The restricted game records also include Diamond-10, RGG-9, and
RGG-10. Full enumeration is unavailable for these larger settings, so we report completion
rates and generated action diagnostics but not exact route or defender payoff comparisons.

For optimization over pure configurations, the baseline resource vector
$(B_D,C_{\mathrm{srv}},\bar P_{\mathrm{fa}})$ and separate $0.75\times$ and $1.25\times$
changes to each component produce seven resource cases per instance. This gives 700 cases
per enumerable topology and 3,500 cases overall. We pool the seven cases for each topology.

\paragraph{Sensing parameters.}
Each edge supports three sensing actions. Their base detection probabilities, false alarm
probabilities, deployment costs, and server loads are, respectively,
\begin{align*}
 &\{0.30,0.58,0.76\},
 &&\{0.0015,0.0035,0.0055\},\\
 &\{0.75,1.20,1.35\},
 &&\{0,0,0.85\}.
\end{align*}
An edge perturbation $\delta_e$ is added to each base detection probability, with
$\delta_e\sim\operatorname{Unif}[-0.035,0.035]$ for DGs and
$\delta_e\sim\operatorname{Unif}[-0.045,0.045]$ for RGGs. Across both graph families, the
three resulting detection ranges are $[0.255,0.345]$, $[0.535,0.625]$, and
$[0.715,0.805]$.

\paragraph{Implementation.}
The implementation is written in C++ and uses Gurobi for the finite optimization problems.
Experiments run on an AMD Ryzen 9 9950X3D workstation with 96~GB of memory. The greedy
baseline starts with no sensing and repeatedly selects the feasible single upgrade with the
largest increase in defender utility, stopping when none improves the objective. Reported
runtimes are arithmetic means per case. The reporter treats differences within $10^{-8}$ as
numerically exact and displays every value with magnitude below $10^{-12}$ as $0$ throughout
this section and Appendix~\ref{app:detailed-evaluation}. The path enumerator also records
whether it reaches its candidate cap; no reported row does.

\paragraph{Reproducibility boundary of the supplied records.}
The supplied records do not specify the RGG connection radius, source and target selection
rule, payoff generation rule, baseline resource vector, initial restricted action sets, or
\texttt{RGR} iteration cap. We do not infer these settings; they should be added from the
experiment configuration before public release. All numerical statements below use only the
available records.

\subsection{Metrics}
\label{subsec:evaluation-metrics}

For a mixed defender strategy with an exhaustively enumerable attacker action set, let
$a^+$ be the generated action and define the route response gap
\begin{equation}
 g_A^{\mathrm{route}}
 =\max_{a\in\mathcal A}\bar U_A(\sigma,a)
  -\bar U_A(\sigma,a^+)\ge0.
 \label{eq:evaluation-route-gap}
\end{equation}
The evaluator stores the absolute numerical value of this gap. Checks for an observed pure
configuration compare both non-detection probability and utility with exhaustive enumeration,
but do not store a numerical gap once the check passes.

For optimization over pure configurations, let $V_D^\star$ be the exact optimum under the
selected tie-breaking rule and let $V_D$ be the method's value. Let $R_D$ be the defender
payoff range over the enumerated reference actions. The normalized defender gap is
\begin{equation}
 g_D=\frac{V_D^\star-V_D}{R_D}.
 \label{eq:evaluation-pure-defender-gap}
\end{equation}
An exact match means that this difference is within the reporter tolerance.

For restricted mixed games, define the signed normalized payoff difference
\begin{equation}
 \delta_D=\frac{V_D^{\mathrm{full}}-V_D^{\texttt{RGR}}}{R_D}.
 \label{eq:evaluation-rgr-defender-gap}
\end{equation}
For Nash play, $V_D^{\mathrm{full}}$ and $V_D^{\texttt{RGR}}$ are payoffs from equilibria
selected by the solver. Their difference is not an optimality gap when either game has
multiple equilibria. For mixed Stackelberg play, the values compare selected SSEs of the
fully enumerated and restricted games, but omitted attacker actions can make the restricted
value optimistic or pessimistic. We therefore report both signed and absolute differences
where available.

\subsection{Attacker Response: Pure Configuration versus Mixed Strategy}
\label{subsec:evaluation-attacker-response}

The test for an observed pure configuration covers 500 baseline instances: 100 each for Diamond-4,
Diamond-7, RGG-6, RGG-7, and RGG-8. The Dijkstra response passes exhaustive checks of
non-detection probability and utility in all 500 cases, confirming the construction in
Theorem~\ref{thm:pure_attacker_br} on the tested instances.

For mixed defender strategies, the \emph{support} generator applies Dijkstra to each
configuration in $\supp(\sigma)$, the \emph{surrogate} generator produces the
$K_{\mathrm{cand}}$ shortest paths under \eqref{eq:geometric_surrogate}, and the
\emph{combined} generator uses their union. Every candidate is reevaluated under the exact
expectation. Table~\ref{tab:route-generator-aggregate} aggregates the enumerable Nash and
mixed SSE records.

\begin{table}[htbp]
\centering
\caption{Route generation against mixed defender strategies over enumerable cases.
``Exact'' is the fraction within $10^{-8}$ of exhaustive attacker utility. Support records
are pooled across the three candidate counts because that generator ignores
$K_{\mathrm{cand}}$.}
\label{tab:route-generator-aggregate}
\begin{adjustbox}{max width=\textwidth}
\begin{tabular}{lrrrr}
\toprule
Candidate source & Exact comparisons & Exact (\%) & Mean route gap & Max route gap \\
\midrule
Support & 3000 & 53.8 & $1.5131\times10^{-2}$ & $2.543\times10^{-1}$ \\
Surrogate, $K_{\mathrm{cand}}=1$ & 1000 & 92.9 & $1.3840\times10^{-3}$ & $2.162\times10^{-1}$ \\
Surrogate, $K_{\mathrm{cand}}=4$ & 1000 & 96.4 & $5.6274\times10^{-4}$ & $7.554\times10^{-2}$ \\
Surrogate, $K_{\mathrm{cand}}=8$ & 1000 & 98.6 & $2.6848\times10^{-4}$ & $7.554\times10^{-2}$ \\
Combined, $K_{\mathrm{cand}}=1$ & 1000 & 95.0 & $6.7948\times10^{-4}$ & $7.968\times10^{-2}$ \\
Combined, $K_{\mathrm{cand}}=4$ & 1000 & 97.5 & $2.4170\times10^{-4}$ & $4.319\times10^{-2}$ \\
Combined, $K_{\mathrm{cand}}=8$ & 1000 & 99.2 & $6.0360\times10^{-5}$ & $1.761\times10^{-2}$ \\
\bottomrule
\end{tabular}
\end{adjustbox}
\end{table}

The support generator shows why best responses to individual pure configurations are
insufficient for a hidden mixed defense: only 53.8\% of the pooled records contain an exact
expected utility best response among those routes. The surrogate recovers routes that are not
optimal for any support configuration. Increasing $K_{\mathrm{cand}}$ from 1 to 8 raises its
exact rate from 92.9\% to 98.6\% and reduces its mean route gap by about a factor of five. The
combined generator performs best at every tested candidate count; at
$K_{\mathrm{cand}}=8$, it matches 992 of 1000 exhaustive responses and has mean gap
$6.04\times10^{-5}$.

The remaining misses are concentrated in RGG-8. With combined
$K_{\mathrm{cand}}=8$, all Diamond-4, Diamond-7, RGG-6, and RGG-7 attacker responses are
exact for both Nash and mixed SSE strategies. RGG-8 is exact in 96 of 100 Nash cases and 96
of 100 mixed SSE cases, with mean gaps $2.627\times10^{-4}$ and
$3.409\times10^{-4}$, respectively. These results support the use of the combined generator
but do not constitute an approximation bound; Theorem~\ref{thm:mixed_oracle_hard} rules out
a general exact oracle that runs in polynomial time unless $\mathrm{P}=\mathrm{NP}$.

\subsection{Optimization over Pure Defender Configurations}
\label{subsec:evaluation-radar}

All four methods return feasible configurations in every one of the 3,500 cases under both
strong and pessimistic tie-breaking. The exact match rates and normalized gaps are identical
under the two tie-breaking rules in the supplied records; mean runtimes differ only slightly.
Table~\ref{tab:defender-eval-overall} uses the pessimistic runtimes and pools the five
topology settings, each of which contributes 700 cases.

\begin{table}[htbp]
\centering
\caption{Optimization over pure configurations in 3,500 cases. Gaps are percentages of the
enumerated defender payoff range $R_D$.}
\label{tab:defender-eval-overall}
\begin{adjustbox}{max width=\textwidth}
\begin{tabular}{lrrrrr}
\toprule
Method & Exact match (\%) & Mean gap (\%) & Max gap (\%) & Time (ms) & Objective evals \\
\midrule
Exact enumeration & 100.0 & 0 & 0 & 19.020 & 15912.5 \\
\texttt{RADAR} & 47.3 & 3.640 & 48.36 & 0.329 & 242.2 \\
RAD & 23.1 & 8.714 & 75.47 & 0.189 & 164.8 \\
Greedy & 14.4 & 42.852 & 77.99 & 0.0265 & 19.1 \\
\bottomrule
\end{tabular}
\end{adjustbox}
\end{table}

Adaptive refinement reduces RAD's mean gap from 8.71\% to 3.64\% and raises its exact match
rate from 23.1\% to 47.3\%. The additional objective evaluations therefore recover
substantial payoff that feasibility repair alone discards. Greedy is faster but has a
42.85\% mean gap, showing that locally adding sensing from the empty configuration is much
less effective than repairing a dense configuration and then reconsidering removed or
misplaced sensing actions.

Table~\ref{tab:defender-eval-topology} separates the topology settings. The exact enumerator
is slightly faster than \texttt{RADAR} on Diamond-4 because the complete action space is
tiny: it evaluates about 34 configurations, compared with about 42 \texttt{RADAR}
evaluations. From Diamond-7 onward, exact enumeration is 8.7--116.5 times slower and uses
9.8--137.9 times as many objective evaluations. The largest contrast occurs on RGG-8,
where exact enumeration averages 56,810 evaluations and 68.6~ms, while \texttt{RADAR}
averages 412 evaluations and 0.589~ms.

\begin{table}[htbp]
\centering
\caption{Pure configuration results by topology under pessimistic tie-breaking. Strong
tie-breaking gives the same quality metrics.}
\label{tab:defender-eval-topology}
\begin{adjustbox}{max width=\textwidth}
\begin{tabular}{lrrrrrrrrr}
\toprule
& \multicolumn{5}{c}{\texttt{RADAR}} & \multicolumn{2}{c}{Exact} & \multicolumn{2}{c}{Mean gap (\%)}\\
Topology & Exact (\%) & Mean gap (\%) & Max gap (\%) & Evals & Time (ms) & Evals & Time (ms) & RAD & Greedy \\
\midrule
Diamond-4 & 82.1 & 0.221 & 3.338 & 41.8 & 0.0510 & 33.6 & 0.0409 & 0.498 & 35.81 \\
Diamond-7 & 22.7 & 2.023 & 9.673 & 182.3 & 0.2398 & 3657 & 4.277 & 4.911 & 74.04 \\
RGG-6 & 55.1 & 7.112 & 40.69 & 226.1 & 0.2845 & 2212 & 2.484 & 12.48 & 42.91 \\
RGG-7 & 43.4 & 5.066 & 45.53 & 348.7 & 0.4789 & 16{,}850 & 19.70 & 14.08 & 39.22 \\
RGG-8 & 33.1 & 3.777 & 48.36 & 412.1 & 0.5889 & 56{,}810 & 68.60 & 11.60 & 22.28 \\
\bottomrule
\end{tabular}
\end{adjustbox}
\end{table}

The gaps are not monotone in graph size. RGG-6 has the largest
\texttt{RADAR} mean gap even though RGG-7 and RGG-8 are larger, while RGG-8 has the largest
gap in a single case. This is consistent with a local search whose quality depends on the
placement of routes that affect payoffs and on resource bottlenecks, not only on vertex count.
The maximum gaps also emphasize that \texttt{RADAR} is a heuristic with no general
approximation guarantee.

\subsection{Restricted Mixed Games versus Fully Enumerated Games}
\label{subsec:evaluation-rgr-full}

Table~\ref{tab:rgr-quality} compares restricted solutions from the combined generator with
$K_{\mathrm{cand}}=8$ against fully enumerated games. It reports $100|\delta_D|$. All 1,000
runs (500 Nash and 500 mixed SSE) in the five enumerable settings produce a restricted
solution and an exact defender payoff comparison.

\begin{table}[htbp]
\centering
\caption{\texttt{RGR} versus selected solutions of fully enumerated games using the combined
generator with $K_{\mathrm{cand}}=8$. Entries are $100|\delta_D|$ (\%).}
\label{tab:rgr-quality}
\begin{tabular}{lrrrr}
\toprule
& \multicolumn{2}{c}{Nash} & \multicolumn{2}{c}{Mixed SSE}\\
Topology & Mean & Max & Mean & Max \\
\midrule
Diamond-4 & 2.058 & 4.340 & 2.058 & 4.340 \\
Diamond-7 & 0 & 0 & 0 & 0 \\
RGG-6 & 0.040 & 0.819 & 0.041 & 0.819 \\
RGG-7 & 0.231 & 15.69 & 0.064 & 0.646 \\
RGG-8 & 0 & 0 & 0 & 0 \\
\midrule
All & 0.466 & 15.69 & 0.433 & 4.340 \\
\bottomrule
\end{tabular}
\end{table}

For Nash play, the mean absolute normalized difference is 0.466\%. Because the two games
may select different equilibria, this is a payoff difference rather than an optimality gap.
The 15.69\% maximum on RGG-7 is one outlier; the other 99 RGG-7 runs average 0.075\%. In the
outlying run, the route gap and measured gains from generated deviations are zero. Thus, no
generated action improves the selected restricted equilibrium, but an ungenerated defender
configuration or a different equilibrium of the fully enumerated game may still do so.

For mixed Stackelberg play, the mean absolute normalized difference is 0.433\% and the
maximum is 4.34\%. Diamond-4 has essentially exact route and generated action diagnostics
but a 2.058\% mean defender payoff difference across every candidate mode and count. Thus,
route response quality alone does not determine the value of a restricted game: an omitted
defender column, an omitted designated attacker response, or selection under strong
tie-breaking can still affect payoffs. Conversely, RGG-8 has zero defender payoff difference
even though the mean route gaps are $2.627\times10^{-4}$ for Nash and
$3.409\times10^{-4}$ for mixed Stackelberg play. The omitted routes in those cases do not
change the selected equilibrium payoff.

Increasing $K_{\mathrm{cand}}$ improves route response accuracy more consistently than
defender payoff agreement. A route gap evaluates one follower response at one defender
mixture, whereas the restricted game payoff also depends on defender columns, alternative
designated responses, tie-breaking, and equilibrium selection. A numerically accurate
current response is therefore not a certificate for the fully enumerated game.

\subsection{Larger Instances without Exhaustive References}
\label{subsec:evaluation-rgr-large}

The expanded records include larger settings for which complete action enumeration and
exact payoff comparison are unavailable. Table~\ref{tab:rgr-large} reports the combined
$K_{\mathrm{cand}}=8$ completion rate and mean generated action gap. These gaps cover only
generated actions.

\begin{table}[htbp]
\centering
\caption{\texttt{RGR} on larger settings with the combined generator and
$K_{\mathrm{cand}}=8$. No comparison with a fully enumerated game is available.}
\label{tab:rgr-large}
\begin{tabular}{llrr}
\toprule
Topology & Game & Completed (\%) & Mean generated action gap \\
\midrule
Diamond-10 & Nash & 100 & 0 \\
Diamond-10 & Mixed SSE & 100 & 0 \\
RGG-9 & Nash & 100 & $1.059\times10^{-3}$ \\
RGG-9 & Mixed SSE & 100 & $8.968\times10^{-4}$ \\
RGG-10 & Nash & 87 & $2.472\times10^{-1}$ \\
RGG-10 & Mixed SSE & 100 & $9.791\times10^{-5}$ \\
\bottomrule
\end{tabular}
\end{table}

Diamond-10 completes every run with zero generated action gaps. RGG-9 also completes every
run, although its remaining gaps are about $10^{-3}$. On RGG-10, the mixed SSE procedure
completes all 100 runs with a small mean gap, whereas the Nash procedure completes 87 and has
a much larger mean gap. These records show that the restricted method can remain operational
when exact enumeration is unavailable, but they do not establish proximity to a fully
enumerated solution. Here, ``completed'' means that the procedure returned under its
implemented stopping conditions, not that it certified the global game.

\subsection{Numerical Checks of Nash--Stackelberg Relations}
\label{subsec:evaluation-relations}

The evaluator includes one profile on three vertices for each of three payoff families:
heterogeneous single route payoffs, common ratio payoffs, and weighted zero-sum payoffs.
These are deterministic consistency checks, not a statistical experiment. Table~\ref{tab:relation-checks} summarizes the results.

\begin{table}[htbp]
\centering
\caption{Checks of the payoff relationships in Section~\ref{sec:relations}.}
\label{tab:relation-checks}
\begin{adjustbox}{max width=\textwidth}
\begin{tabular}{>{\raggedright\arraybackslash}p{0.28\textwidth}>{\raggedright\arraybackslash}p{0.48\textwidth}r}
\toprule
Profile & Checked statement & Result \\
\midrule
All three & Mixed SSE defender payoff is at least the mixed NE defender payoff & Pass; zero residual \\
Heterogeneous, single route & Mixed NE attacker payoff equals the minimax value & Pass; residual 0 \\
Heterogeneous, single route & Pure and mixed SSE attacker payoffs are at least the mixed NE payoff & Pass; zero residual \\
Common ratio & $U_D+\lambda U_A=C(a)$ and the two gap bounds hold & Pass; zero residual \\
Weighted zero sum & $U_D+\lambda U_A=0$, mixed SSE and mixed NE payoffs coincide, and pure/mixed ordering holds & Pass; zero residual \\
Weighted zero sum & Randomization gain identity \eqref{eq:randomization_gain} & Pass; zero residual \\
All three & A pure SSE automatically has no profitable unilateral defender deviation & Fail, with residuals $8$, $5$, and $8$ \\
\bottomrule
\end{tabular}
\end{adjustbox}
\end{table}

The failed final row is expected and useful. Proposition~\ref{prop:pure-se-is-ne} states a
condition for a pure Stackelberg outcome to be a pure NE; it does not claim that the
condition holds automatically. The positive residuals are direct counterexamples to that
false implication. All identities and inequalities that are actually predicted by
Theorems~\ref{thm:mixed-sse-defender-dominates-ne}--\ref{thm:weighted-zero-sum-relations} pass on their
corresponding profiles.

\subsection{Evaluation Summary and Limitations}
\label{subsec:evaluation-summary}

The experiments support the structural divide developed in the analysis. Against an
observed pure configuration, routing is exact and inexpensive, so \texttt{RADAR} can test
hundreds of local configurations while remaining under 1~ms on average. Against a mixed
defender strategy, exact routing and defender response are NP-hard; the combined route
generator is highly accurate on the enumerable cases, and the restricted games often match
selected fully enumerated payoffs closely, but gaps over generated actions are not global
guarantees.
The RGG-7 Nash outlier and the larger RGG-10 Nash results show why this distinction matters.

The numerical conclusions have four principal limitations. Exact comparisons stop at
Diamond-7 and RGG-8. Normalized gaps depend on the enumerated payoff range $R_D$. Nash
payoff differences can reflect equilibrium selection rather than unilateral regret. Finally,
the supplied records omit several generation parameters listed in
Section~\ref{subsec:evaluation-setup}. Appendix~\ref{app:detailed-evaluation} reports the
aggregates underlying the condensed tables.

\section{Conclusion}
\label{sec:conclusion}

We studied resource-aware intrusion detection in infrastructure networks where sensing type,
processing mode, deployment cost, edge server capacity, false alarms, and attacks that adapt
their routes interact in one graph model. The same pure action spaces support three strategic
settings. Simultaneous play may have no pure Nash equilibrium but always has a mixed
one. When the attacker observes a pure defender configuration, its exact best response
reduces to one shortest path computation followed by independent decisions about which
targets to attack. The defender's pure Stackelberg optimization remains NP-hard. When the
attacker observes a mixed defender strategy but not its realization, a mixed SSE exists, a
pessimistic optimum may fail to be attained, and both exact response oracles and mixed
Stackelberg optimization are NP-hard.

These results motivate different computational roles for the proposed methods.
\texttt{RADAR} exploits the inexpensive exact response to evaluate repairs, upgrades, and
exchanges for pure configurations. \texttt{RGR} uses the same configuration search within
restricted Nash and mixed Stackelberg games, together with support and geometric surrogate
route candidates. On enumerable instances, the methods have small mean normalized
differences from exact or fully enumerated references. Larger experiments show that
restricted computation remains possible when exhaustive enumeration is unavailable. The
expanded proofs and diagnostics also clarify the limitations: local search has no general
approximation bound, and generated action gaps are not global certificates when the
underlying response problems are NP-hard.

The payoff analysis separates the effects of observability from those of target utilities.
Under strong tie-breaking, a mixed SSE always gives the defender at least its mixed Nash
payoff.
Attacks on a single route give the attacker a minimax characterization even with
heterogeneous target values. A common ratio between defender and attacker payoff differences
extends that characterization to coordinated attacks on multiple routes and bounds both
payoff differences. In the weighted zero-sum case, mixed Nash and mixed Stackelberg payoffs
coincide, and the value of defender randomization equals the corresponding reduction in
attacker payoff.

\appendix

\section{Proof of Theorem~\ref{thm:defender_pure_se_hard}}
\label{app:defender-nph-proof}

\begin{proof}
We give a reduction, computable in polynomial time, from the decision version of \textsc{Clique}
\cite{karp1972reducibility}. An instance consists of an undirected graph
$H=(V_H,E_H)$ and an integer $r$. The question is whether $H$ contains a set of $r$
vertices with every pair adjacent. Let $n=|V_H|$, assume $1\le r\le n$, and let $d_H(v)$
denote the degree of vertex $v$ in $H$.

\paragraph{Construction.}
Set $M=2n+1$. Build a monitored graph $G_H$ with one source vertex $s$ and one target
vertex $t_v$ for every $v\in V_H$. For each input edge $\{u,v\}\in E_H$, add the edge
$\{t_u,t_v\}$ between the corresponding target vertices. For every $v\in V_H$, add
$M-d_H(v)$ private intermediate vertices
\[
 x_{v,1},\ldots,x_{v,M-d_H(v)}.
\]
For each private vertex $x_{v,j}$, add edges $\{s,x_{v,j}\}$ and
$\{x_{v,j},t_v\}$. Thus each private vertex forms a separate path of length two from $s$ to
$t_v$. There are no other vertices or edges.

Every edge supports one perfect sensing action of unit cost, zero false alarm probability,
and zero server load. The only active feasibility constraint is a cardinality budget
\begin{equation}
 B_D=rM-r(r-1).
 \label{eq:clique-budget}
\end{equation}
Every target has payoffs
\[
 U_A^u=1,\qquad U_A^d=-1,
 \qquad
 U_D^u=-1,\qquad U_D^d=1.
\]
The construction contains
$1+n+\sum_v(M-d_H(v))=O(n^2)$ vertices and $O(n^2+|E_H|)$ edges, and all numerical
parameters require polynomial encoding length.

Call a target $t_v$ \emph{fully covered} if every route from $s$ to $t_v$ contains at
least one selected sensor. Because the sensors are perfect, a fully covered target has
maximum route non-detection probability zero and best route attacker utility $-1$, so the
attacker omits it. If a target is not fully covered, a route with no sensed edge remains
undetected with probability one and gives attacker utility $1$, so every attacker best
response includes an attack on that target. The target decisions are independent and strict,
so there are no attacker ties that affect payoffs. If exactly $\tau$ targets are fully covered,
the attacker attacks the other $n-\tau$ targets, each of which gives defender utility $-1$.
The defender value is
therefore
\begin{equation}
 -(n-\tau).
 \label{eq:clique-defender-value}
\end{equation}
Maximizing defender utility is equivalent to maximizing the number of fully covered targets.

\paragraph{A cut lower bound.}
For any vertex set $U\subseteq V(G_H)$ with $s\notin U$, define
\[
 W(U)=\{v\in V_H:t_v\in U\}.
\]
Let $E_H(W)$ be the set of input edges with both endpoints in $W$, and let
$\delta_H(W)$ be the set of input edges with exactly one endpoint in $W$. Let
$\delta_{G_H}(U)$ be the cut in the constructed graph.

Fix $v\in W(U)$. For each private path
$s-x_{v,j}-t_v$, the source $s$ lies outside $U$ and the target $t_v$ lies inside $U$.
Whether $x_{v,j}$ lies inside or outside $U$, at least one of the two path edges crosses
$\delta_{G_H}(U)$. These crossing edges are distinct across private paths. Hence the
private paths contribute at least
$\sum_{v\in W(U)}(M-d_H(v))$ cut edges. Every target--target edge corresponding to an input
edge in $\delta_H(W(U))$ also crosses the constructed cut. Additional cut edges can only
increase its size, so
\begin{align}
 |\delta_{G_H}(U)|
 &\ge \sum_{v\in W(U)}(M-d_H(v))
      +|\delta_H(W(U))| \\
 &=M|W(U)|-\sum_{v\in W(U)}d_H(v)
      +|\delta_H(W(U))|.
 \label{eq:clique-cut-first}
\end{align}
The degree identity
\[
 \sum_{v\in W}d_H(v)=2|E_H(W)|+|\delta_H(W)|
\]
then gives
\begin{equation}
 |\delta_{G_H}(U)|
 \ge M|W(U)|-2|E_H(W(U))|.
 \label{eq:clique_cut_bound}
\end{equation}

\paragraph{If $H$ has an $r$-clique, the defender covers $r$ targets.}
Suppose $C\subseteq V_H$ is an $r$-clique. Let
$U_C=\{t_v:v\in C\}$ and select one sensor on every edge of
$\delta_{G_H}(U_C)$. Every source--target route to a target in $U_C$ must cross this cut,
so all $r$ corresponding targets are fully covered. The cut contains one edge from each
private path incident to a target in $C$, together with the target--target edges leaving
$C$. Its size is
\begin{align*}
 |\delta_{G_H}(U_C)|
 &=\sum_{v\in C}(M-d_H(v))+|\delta_H(C)|\\
 &=rM-2|E_H(C)|\\
 &=rM-2\binom r2\\
 &=rM-r(r-1)=B_D.
\end{align*}
Thus the selected sensors are feasible and give defender utility at least $-(n-r)$ by
\eqref{eq:clique-defender-value}.

\paragraph{If the defender covers at least $r$ targets, $H$ has an $r$-clique.}
Conversely, suppose a set $\mathcal Y$ of at most $B_D$ selected edges fully covers at least $r$
targets. Remove the edges in $\mathcal Y$ from $G_H$, and let $U$ be the union of all connected
components that do not contain $s$. Every fully covered target is disconnected from $s$ and
therefore lies in $U$. Hence
\[
 \tau=|W(U)|\ge r.
\]
Every edge crossing $\delta_{G_H}(U)$ must have been removed; otherwise it would connect a
component in $U$ to the component containing $s$. Consequently,
\begin{equation}
 |\delta_{G_H}(U)|\le |\mathcal Y|\le B_D.
 \label{eq:clique-cut-upper}
\end{equation}
A set of $\tau$ vertices induces at most $\binom{\tau}{2}$ edges, so
\eqref{eq:clique_cut_bound} gives
\begin{equation}
 |\delta_{G_H}(U)|
 \ge \tau M-\tau(\tau-1).
 \label{eq:clique-q-bound}
\end{equation}
Define $f(x)=xM-x(x-1)$. For every integer $x<n$,
\[
 f(x+1)-f(x)=M-2x=2n+1-2x>0.
\]
Thus $f$ is strictly increasing on $\{1,\ldots,n\}$. If $\tau>r$, then
\eqref{eq:clique-q-bound} implies
$|\delta_{G_H}(U)|\ge f(\tau)>f(r)=B_D$, contradicting
\eqref{eq:clique-cut-upper}. Therefore $\tau=r$.

Substituting $\tau=|W(U)|=r$ into \eqref{eq:clique_cut_bound} and using
\eqref{eq:clique-cut-upper} yields
\begin{align*}
 rM-2|E_H(W(U))|
 &\le |\delta_{G_H}(U)|\\
 &\le rM-r(r-1).
\end{align*}
Hence
\[
 |E_H(W(U))|\ge\frac{r(r-1)}2=\binom r2.
\]
No $r$-vertex set can induce more than $\binom r2$ edges, so equality holds and every pair
of vertices in $W(U)$ is adjacent. Thus $W(U)$ is an $r$-clique.

We have shown that $H$ contains an $r$-clique if and only if the defender can attain
utility at least $-(n-r)$. Therefore deciding whether the optimal pure Stackelberg value
reaches that threshold is NP-hard. The attacker decisions are strict in the construction,
so strong and pessimistic tie-breaking give the same value. This proves the theorem under
both rules.
\end{proof}

\section{Proof of Theorem~\ref{thm:mixed_existence}}
\label{app:mixed-existence-proof}

\begin{proof}
We first prove existence under strong tie-breaking. For each attacker action
$a\in\mathcal A$, define its best response region in the defender simplex by
\begin{equation}
 P_a=
 \left\{\sigma\in\Delta(\mathcal X):
 \bar U_A(\sigma,a)\ge\bar U_A(\sigma,a'),
 \ \forall a'\in\mathcal A\right\}.
 \label{eq:best-response-polytope}
\end{equation}
Each inequality is linear and weak, so $P_a$ is a closed polytope. It is contained in the
compact simplex $\Delta(\mathcal X)$ and is therefore compact. Some $P_a$ may be empty, but
at least one is nonempty: for every defender mixture, the finite attacker action set has a
best response, so the union of the $P_a$ covers the simplex.

For every $a$ with $P_a\ne\emptyset$, the defender payoff
$\bar U_D(\sigma,a)$ is linear and continuous in $\sigma$. It therefore attains a maximum
on compact $P_a$. Choose
\begin{equation}
 \sigma_a\in\argmax_{\sigma\in P_a}\bar U_D(\sigma,a).
 \label{eq:best-mixture-for-response}
\end{equation}
There are finitely many attacker actions, so among the nonempty regions choose
\begin{equation}
 a^\star\in\argmax_{a:P_a\ne\emptyset}
 \bar U_D(\sigma_a,a).
 \label{eq:best-response-region}
\end{equation}
Because $\sigma_{a^\star}\in P_{a^\star}$, the action $a^\star$ is an attacker best
response to $\sigma_{a^\star}$.

Now take any defender mixture $\sigma$ and any
$a\in\BR_A(\sigma)$. Then $\sigma\in P_a$, so by
\eqref{eq:best-mixture-for-response},
\[
 \bar U_D(\sigma,a)
 \le \bar U_D(\sigma_a,a)
 \le \bar U_D(\sigma_{a^\star},a^\star).
\]
Thus no feasible pair consisting of a defender mixture and one of its attacker best
responses gives the defender more than
$(\sigma_{a^\star},a^\star)$. This pair maximizes the strong Stackelberg objective and is a
mixed SSE\@.

We next construct an instance in which the pessimistic mixed problem does not attain its
supremum. There is one target $t$ with one route $r$. The defender has two feasible
configurations. Configuration $z_0$ leaves the route unmonitored, so $q_r(z_0)=1$.
Configuration $z_1$ detects with certainty, so $q_r(z_1)=0$. Set
\[
 U_D^u(t)=-1,
 \quad U_D^d(t)=3,
 \quad U_A^u(t)=1,
 \quad U_A^d(t)=-1.
\]
The attacker may use $a_r=\{r\}$ or the empty action. The pure payoff table is
\begin{equation*}
\begin{array}{c|cc}
 & a_r & \emptyset\\ \hline
 z_0 & (-1,1) & (0,0)\\
 z_1 & (3,-1) & (0,0).
\end{array}
\end{equation*}

Let $p=\sigma(z_1)$, so $\sigma(z_0)=1-p$. If the attacker uses the route, its expected
utility is
\begin{align*}
 \bar U_A(\sigma,a_r)
 &=(1-p)(1)+p(-1)=1-2p,
\end{align*}
and the defender's expected utility is
\begin{align*}
 \bar U_D(\sigma,a_r)
 &=(1-p)(-1)+p(3)=-1+4p.
\end{align*}
The empty action gives both players zero for every $p$.

If $p<1/2$, then $1-2p>0$, so $a_r$ is the unique attacker best response and the
pessimistic defender value is $-1+4p$. If $p>1/2$, then $1-2p<0$, so the empty action is
the unique best response and the defender value is zero. At $p=1/2$, both attacker actions
are best responses. They give the defender utilities $1$ and $0$, respectively, so
pessimistic tie-breaking selects the empty action and gives value zero. Therefore
\begin{equation}
 V_D^-(p)=
 \begin{cases}
 -1+4p, & 0\le p<1/2,\\
 0, & 1/2\le p\le1.
 \end{cases}
 \label{eq:pessimistic-nonattainment-value}
\end{equation}
As $p$ approaches $1/2$ from below, $V_D^-(p)$ approaches $1$. No $p$ attains value $1$:
all $p<1/2$ give strictly less than $1$, and all $p\ge1/2$ give zero. Hence the pessimistic
supremum is $1$ but is not attained. This completes both parts of the theorem.
\end{proof}

\section{Proof of Theorem~\ref{thm:mixed_oracle_hard}}
\label{app:mixed-oracle-hard-proof}

\begin{proof}
We prove the three claims separately. All reductions use graphs of polynomial size and
rational parameters.

\paragraph{Part (i): attacker best response via \textsc{Max-Cut}.}
An instance of \textsc{Max-Cut} is an undirected graph $H=(V_H,E_H)$. A cut assigns each
vertex a binary label and counts the edges whose endpoints receive different labels.
Computing a maximum cut is NP-hard \cite{karp1972reducibility}. Let
$V_H=\{1,\ldots,n\}$ and $m_H=|E_H|$. We may assume $m_H\ge1$, since an edgeless instance is
trivial.

Construct a chain of $n$ diamond gadgets. Introduce articulation vertices
$x_0,x_1,\ldots,x_n$, with source $s=x_0$ and the single target $t=x_n$. For every input
vertex $i$ and bit $b\in\{0,1\}$, introduce a private branch vertex $y_i^b$ and edges
\[
 \{x_{i-1},y_i^b\},\qquad \{y_i^b,x_i\}.
\]
The two branches of length two form the $i$th diamond. Every simple $s$--$t$ route must pass
through the articulation vertices in order and choose exactly one branch in every diamond.
Thus each bit vector $\mathbf b=(b_1,\ldots,b_n)\in\{0,1\}^n$ uniquely determines a route $r_{\mathbf b}$,
and every such route determines one bit vector.

Designate the first edge on each branch,
$g_i^b=\{x_{i-1},y_i^b\}$, as an edge that supports sensing. It has one perfect sensing
action with unit cost, zero false alarm probability, and zero server load. Set the defender
budget to two, so every configuration used below is feasible.

For each input edge $\{i,j\}\in E_H$, include two pure configurations in the support of the
mixed defender strategy. Configuration $z_{ij}^{01}$ perfectly senses gates $g_i^1$ and
$g_j^0$ and leaves every other gate unsensed. Route $r_{\mathbf b}$ remains undetected under this
configuration if and only if $(b_i,b_j)=(0,1)$. Configuration $z_{ij}^{10}$ senses
$g_i^0$ and $g_j^1$, so $r_{\mathbf b}$ remains undetected if and only if
$(b_i,b_j)=(1,0)$. Let $\sigma$ be uniform over these $2m_H$ explicitly listed
configurations.

For route $r_{\mathbf b}$, the non-detection probability under each pure configuration is either zero
or one. An input edge with $b_i=b_j$ contributes no orientation under which the route remains
undetected. An input edge with $b_i\ne b_j$ contributes exactly one of its two orientations.
Therefore
\begin{equation}
 \bar q_{r_{\mathbf b}}(\sigma)
 =\frac{|\{\{i,j\}\in E_H:b_i\ne b_j\}|}{2m_H}.
 \label{eq:max-cut-survival}
\end{equation}
Set the single target's attacker payoffs to
\[
 U_A^u=2m_H,
 \qquad
 U_A^d=-1.
\]
The expected utility of route $r_{\mathbf b}$ is
\begin{equation}
 \bar U_A(\sigma,r_{\mathbf b})
 =-1+(2m_H+1)\bar q_{r_{\mathbf b}}(\sigma),
 \label{eq:max-cut-utility}
\end{equation}
which is strictly increasing in the cut size. Because $m_H\ge1$, some cut contains at least
one edge, and its route utility is
$-1+(2m_H+1)/(2m_H)=1/(2m_H)>0$. Thus an optimal route strictly dominates the empty action.
Decoding the branch choice of an exact attacker best response yields a maximum cut of $H$.
The graph has $O(n)$ vertices and edges, and the mixed strategy has support $2m_H$, so the
reduction is polynomial. Computing an exact attacker best response is NP-hard even with one
target and hence at most one attacked route.

\paragraph{Part (ii): pure defender best response via \textsc{Maximum Coverage}.}
An instance of \textsc{Maximum Coverage} consists of a universe
$\Omega=\{1,\ldots,N\}$, subsets $S_1,\ldots,S_L\subseteq\Omega$, and a budget
$B_C$. The objective is to select at most $B_C$ subsets whose union has maximum cardinality.
The problem is NP-hard because deciding whether $B_C$ selected sets can cover all elements is
the NP-complete \textsc{Set Cover} decision problem \cite{karp1972reducibility}.

Create common layer vertices $v_0,v_1,\ldots,v_L$. For each set $S_j$, add a direct gate
edge that supports sensing,
\[
 e_j=\{v_{j-1},v_j\}.
\]
For every element $u\notin S_j$, add a private bypass vertex $b_{u,j}$ and edges
$\{v_{j-1},b_{u,j}\}$ and $\{b_{u,j},v_j\}$. For every element $u\in\Omega$, add a
private source $s_u$ connected to $v_0$ and a private target $t_u$ connected to $v_L$.
Define a prescribed route $r_u$ from $s_u$ to $t_u$ as follows: in layer $j$, it traverses
the gate edge $e_j$ if $u\in S_j$ and traverses its private bypass if $u\notin S_j$.
The unique source and target edges make the prescribed routes distinct even if two elements
have the same membership pattern.

Only the gate edges $e_j$ support sensing. Each has one perfect sensing action of unit cost,
and the defender budget is $B_C$. Set every target's defender payoffs to
$U_D^u=-1$ and $U_D^d=1$. Let the mixed attacker strategy $\mu$ be uniform over the
explicitly listed singleton actions $\{r_u\}$, one for each element.

Selecting sensors on a set of gates indexed by $J\subseteq\{1,\ldots,L\}$ detects route
$r_u$ with certainty if and only if $u\in\bigcup_{j\in J}S_j$. A covered element therefore
contributes defender utility $1$, while an uncovered element contributes $-1$. If $C(J)$ is
the number of covered elements, the expected defender utility against $\mu$ is
\begin{equation}
 \bar U_D(J,\mu)
 =\frac{C(J)-(N-C(J))}{N}
 =\frac{2C(J)}N-1.
 \label{eq:max-coverage-defender-utility}
\end{equation}
This expression is strictly increasing in $C(J)$, and the defender feasibility condition is
exactly $|J|\le B_C$. Hence any pure defender best response identifies a maximum coverage
selection. The graph and the explicit attacker support are polynomial in the input size,
proving NP-hardness.

\paragraph{Part (iii): optimal mixed SSE via $0$--$1$ \textsc{Knapsack}.}
An instance of $0$--$1$ \textsc{Knapsack} has items $i=1,\ldots,n$ with positive integer
costs $c_i$, positive values $v_i$, and budget $B$. The objective is to choose a subset of
total cost at most $B$ and maximum total value; this problem is NP-hard
\cite{karp1972reducibility}.

Construct a star with common source $s$, one target $t_i$ per item, and one edge
$e_i=\{s,t_i\}$. Edge $e_i$ supports a single sensing action with cost $c_i$, detection
probability $1/2$, zero false alarm probability, and zero server load. The defender budget
is $B$. The unique route to target $t_i$ is $r_i=e_i$. Set
\begin{equation}
 U_A^u(t_i)=1,
 \qquad U_A^d(t_i)=-\frac14,
 \qquad
 U_D^u(t_i)=-v_i,
 \qquad U_D^d(t_i)=v_i.
 \label{eq:knapsack-payoffs}
\end{equation}

If $e_i$ is unsensed, route $r_i$ remains undetected with probability one and gives attacker
utility $1$. If $e_i$ is sensed, it remains undetected with probability $1/2$ and gives
attacker utility
\[
 \frac12(1)+\frac12\left(-\frac14\right)=\frac38>0.
\]
For any mixed defender strategy, the expected route utility lies between $3/8$ and $1$ and
is therefore strictly positive. Since there is one route per target and utilities are
additive, the unique attacker best response to every pure or mixed defender strategy is the
action containing all routes $\{r_1,\ldots,r_n\}$. Tie-breaking plays no role.

For the defender, an unsensed item route gives utility $-v_i$, while a sensed item route
gives
\[
 \frac12(-v_i)+\frac12(v_i)=0.
\]
Thus sensing item edge $e_i$ raises defender utility by exactly $v_i$. For a pure feasible
subset $J$, the defender value is
\begin{equation}
 -\sum_{i=1}^n v_i+\sum_{i\in J}v_i,
 \qquad
 \sum_{i\in J}c_i\le B.
 \label{eq:knapsack-defender-value}
\end{equation}
Maximizing \eqref{eq:knapsack-defender-value} is exactly the knapsack optimization problem,
up to the additive constant $-\sum_i v_i$.

Finally, randomization cannot improve this value. Every mixed defender strategy is a
distribution over feasible pure subsets, and the attacker response is the same for every
realization and every distribution. Its defender payoff is therefore a convex combination
of the pure values in \eqref{eq:knapsack-defender-value}, which cannot exceed their maximum.
Hence an optimal mixed SSE has the optimal knapsack value and may be chosen pure. Computing
an optimal mixed SSE is NP-hard.

The three reductions establish parts (i)--(iii).
\end{proof}

\section{Proof of Lemma~\ref{lem:single_route-bound}}
\label{app:single-route-proof}

\begin{proof}
Let $(\sigma^{\mathrm{NE}},\mu^{\mathrm{NE}})$ be a mixed Nash equilibrium, and let
\[
 v_A=\bar U_A(\sigma^{\mathrm{NE}},\mu^{\mathrm{NE}})
\]
be the attacker equilibrium payoff. We consider separately whether the empty action belongs
to the support of $\mu^{\mathrm{NE}}$.

\paragraph{Case 1: the empty action is not in the attacker support.}
Every supported attacker action is then a singleton route. Write
$\mu^{\mathrm{NE}}(r)$ for the probability assigned to action $\{r\}$. Define
\begin{equation}
 \kappa=
 \sum_{r\in\mathcal R}
 \mu^{\mathrm{NE}}(r)
 \frac{\Delta_D(t(r))}{\Delta_A(t(r))}.
 \label{eq:single-route-kappa}
\end{equation}
Every term is nonnegative, at least one route has positive probability, and both payoff
differences are positive, so $\kappa>0$. Define a reweighted route mixture
\begin{equation}
 \widehat\mu(r)=
 \frac{
 \mu^{\mathrm{NE}}(r)\Delta_D(t(r))/\Delta_A(t(r))
 }{\kappa}.
 \label{eq:attacker_reweighting}
\end{equation}
The weights are nonnegative and sum to one, so $\widehat\mu$ is a valid attacker mixture
with support contained in $\supp(\mu^{\mathrm{NE}})$.

For a defender mixture $\sigma$, let
$\bar q_r(\sigma)=\sum_z\sigma(z)q_r(z)$. From
\eqref{eq:defender_route_utility} and \eqref{eq:attacker_route_utility}, the change in
defender payoff against $\mu^{\mathrm{NE}}$ between any two defender mixtures
$\sigma_1$ and $\sigma_2$ is
\begin{align}
 &\bar U_D(\sigma_1,\mu^{\mathrm{NE}})
 -\bar U_D(\sigma_2,\mu^{\mathrm{NE}})\nonumber\\
 &\quad=-\sum_r\mu^{\mathrm{NE}}(r)\Delta_D(t(r))
 \bigl[\bar q_r(\sigma_1)-\bar q_r(\sigma_2)\bigr].
 \label{eq:single-route-defender-difference}
\end{align}
The corresponding change in attacker payoff against $\widehat\mu$ is
\begin{align}
 &\bar U_A(\sigma_1,\widehat\mu)
 -\bar U_A(\sigma_2,\widehat\mu)\nonumber\\
 &\quad=\sum_r\widehat\mu(r)\Delta_A(t(r))
 \bigl[\bar q_r(\sigma_1)-\bar q_r(\sigma_2)\bigr]\nonumber\\
 &\quad=\frac1\kappa
 \sum_r\mu^{\mathrm{NE}}(r)\Delta_D(t(r))
 \bigl[\bar q_r(\sigma_1)-\bar q_r(\sigma_2)\bigr].
 \label{eq:single-route-attacker-difference}
\end{align}
Combining the two equations gives
\begin{equation}
 \bar U_D(\sigma_1,\mu^{\mathrm{NE}})
 -\bar U_D(\sigma_2,\mu^{\mathrm{NE}})
 =-\kappa\left[
 \bar U_A(\sigma_1,\widehat\mu)
 -\bar U_A(\sigma_2,\widehat\mu)
 \right].
 \label{eq:single-route-difference-identity}
\end{equation}

The defender Nash condition says that $\sigma^{\mathrm{NE}}$ maximizes defender utility
against $\mu^{\mathrm{NE}}$. Applying
\eqref{eq:single-route-difference-identity} with
$\sigma_1=\sigma^{\mathrm{NE}}$ and arbitrary $\sigma_2=\sigma$ shows that
\begin{equation}
 \bar U_A(\sigma^{\mathrm{NE}},\widehat\mu)
 \le \bar U_A(\sigma,\widehat\mu),
 \qquad \forall\sigma\in\Delta(\mathcal X).
 \label{eq:single-route-defender-minimizes}
\end{equation}
Thus $\sigma^{\mathrm{NE}}$ minimizes attacker payoff against $\widehat\mu$.

Every route in $\supp(\mu^{\mathrm{NE}})$ is an attacker best response to
$\sigma^{\mathrm{NE}}$ and gives the common payoff $v_A$. The reweighted mixture has the
same support subset, so it is also a best response and
\begin{equation}
 \bar U_A(\sigma^{\mathrm{NE}},\widehat\mu)=v_A.
 \label{eq:single-route-reweighted-value}
\end{equation}
Equations \eqref{eq:single-route-defender-minimizes} and
\eqref{eq:single-route-reweighted-value}, together with the fact that
$\widehat\mu$ maximizes attacker payoff against $\sigma^{\mathrm{NE}}$, make
$(\sigma^{\mathrm{NE}},\widehat\mu)$ a saddle point of the zero-sum game whose payoff to the
attacker is $\bar U_A$. Therefore
\begin{equation}
 v_A=
 \min_{\sigma'\in\Delta(\mathcal X)}
 \max_{a'\in\mathcal A}\bar U_A(\sigma',a').
 \label{eq:single-route-minimax-equality}
\end{equation}
For any defender mixture $\sigma$ and any best response $a\in\BR_A(\sigma)$,
\[
 \bar U_A(\sigma,a)
 =\max_{a'\in\mathcal A}\bar U_A(\sigma,a')
 \ge\min_{\sigma'}\max_{a'}\bar U_A(\sigma',a')
 =v_A.
\]
This proves the lemma in the first case.

\paragraph{Case 2: the empty action has positive Nash probability.}
Every pure action in the support of a mixed best response gives the same payoff. The empty
action always gives zero, so $v_A=0$. Since it is a best response to
$\sigma^{\mathrm{NE}}$, no attacker action gives more than zero at that defender mixture;
hence
\[
 \max_{a\in\mathcal A}\bar U_A(\sigma^{\mathrm{NE}},a)=0.
\]
For every defender mixture $\sigma$, the attacker can choose the empty action and obtain
zero, so
\[
 \max_{a\in\mathcal A}\bar U_A(\sigma,a)\ge0.
\]
Consequently,
\[
 \min_{\sigma}\max_a\bar U_A(\sigma,a)=0=v_A.
\]
Any attacker best response to any $\sigma$ attains the corresponding maximum and therefore
has payoff at least zero. This again yields \eqref{eq:single_route_bound} and completes the
proof.
\end{proof}

\clearpage
\begin{landscape}
\section{Detailed Evaluation Tables}
\label{app:detailed-evaluation}

This appendix reports the aggregates underlying Section~\ref{sec:evaluation}.
The tables retain the evaluator grouping by topology, game, candidate source, candidate
count, and tie-breaking rule. Values marked ``--'' were not available because exhaustive
enumeration was not performed. The candidate path cap was never reached, so that redundant
column is omitted. As in Section~\ref{sec:evaluation}, every reported value with magnitude
below $10^{-12}$ is displayed as $0$.

\scriptsize
\setlength{\tabcolsep}{3pt}
\begin{longtable}{@{}lrrrr@{}}
\caption{Exact checks for the Dijkstra response to an observed pure configuration.}\label{tab:detail-fixed-response}\\
\toprule
Topology & Runs & Exact checks & Within tolerance (\%) & Search \\
\midrule
\endfirsthead
\multicolumn{5}{l}{\tablename~\thetable\ continued from the previous page}\\
\toprule
Topology & Runs & Exact checks & Within tolerance (\%) & Search \\
\midrule
\endhead
\midrule
\multicolumn{5}{r}{Continued on the next page}\\
\endfoot
\bottomrule
\endlastfoot
Diamond-4 & 100 & 100/100 & 100.0 & Dijkstra \\
Diamond-7 & 100 & 100/100 & 100.0 & Dijkstra \\
RGG-6 & 100 & 100/100 & 100.0 & Dijkstra \\
RGG-7 & 100 & 100/100 & 100.0 & Dijkstra \\
RGG-8 & 100 & 100/100 & 100.0 & Dijkstra \\
\end{longtable}
\end{landscape}

\begin{landscape}
\scriptsize
\setlength{\tabcolsep}{3pt}
\begin{longtable}{@{}lllrrrrr@{}}
\caption{Route generation against mixed defender strategies on instances with exhaustive
attacker comparisons.}\label{tab:detail-mixed-response}\\
\toprule
Topology & Game & Candidate source & $K_{\mathrm{cand}}$ & Runs & Exact (\%) & Mean route gap & Max route gap \\
\midrule
\endfirsthead
\multicolumn{8}{l}{\tablename~\thetable\ continued from the previous page}\\
\toprule
Topology & Game & Candidate source & $K_{\mathrm{cand}}$ & Runs & Exact (\%) & Mean route gap & Max route gap \\
\midrule
\endhead
\midrule
\multicolumn{8}{r}{Continued on the next page}\\
\endfoot
\bottomrule
\endlastfoot
Diamond-4 & Nash & Support & -- & 300 & 100.0 & 0 & 0 \\
Diamond-4 & Nash & Surrogate & 1 & 100 & 100.0 & 0 & 0 \\
Diamond-4 & Nash & Surrogate & 4 & 100 & 100.0 & 0 & 0 \\
Diamond-4 & Nash & Surrogate & 8 & 100 & 100.0 & 0 & 0 \\
Diamond-4 & Nash & Combined & 1 & 100 & 100.0 & 0 & 0 \\
Diamond-4 & Nash & Combined & 4 & 100 & 100.0 & 0 & 0 \\
Diamond-4 & Nash & Combined & 8 & 100 & 100.0 & 0 & 0 \\
Diamond-4 & Mixed SSE & Support & -- & 300 & 100.0 & 0 & 0 \\
Diamond-4 & Mixed SSE & Surrogate & 1 & 100 & 100.0 & 0 & 0 \\
Diamond-4 & Mixed SSE & Surrogate & 4 & 100 & 100.0 & 0 & 0 \\
Diamond-4 & Mixed SSE & Surrogate & 8 & 100 & 100.0 & 0 & 0 \\
Diamond-4 & Mixed SSE & Combined & 1 & 100 & 100.0 & 0 & 0 \\
Diamond-4 & Mixed SSE & Combined & 4 & 100 & 100.0 & 0 & 0 \\
Diamond-4 & Mixed SSE & Combined & 8 & 100 & 100.0 & 0 & 0 \\
Diamond-7 & Nash & Support & -- & 300 & 26.0 & 0.01196 & 0.08003 \\
Diamond-7 & Nash & Surrogate & 1 & 100 & 99.0 & 0.0001277 & 0.01277 \\
Diamond-7 & Nash & Surrogate & 4 & 100 & 100.0 & 0 & 0 \\
Diamond-7 & Nash & Surrogate & 8 & 100 & 100.0 & 0 & 0 \\
Diamond-7 & Nash & Combined & 1 & 100 & 100.0 & 0 & 0 \\
Diamond-7 & Nash & Combined & 4 & 100 & 100.0 & 0 & 0 \\
Diamond-7 & Nash & Combined & 8 & 100 & 100.0 & 0 & 0 \\
Diamond-7 & Mixed SSE & Support & -- & 300 & 39.0 & 0.009809 & 0.08058 \\
Diamond-7 & Mixed SSE & Surrogate & 1 & 100 & 98.0 & 0.0003488 & 0.02818 \\
Diamond-7 & Mixed SSE & Surrogate & 4 & 100 & 100.0 & 0 & 0 \\
Diamond-7 & Mixed SSE & Surrogate & 8 & 100 & 100.0 & 0 & 0 \\
Diamond-7 & Mixed SSE & Combined & 1 & 100 & 100.0 & 0 & 0 \\
Diamond-7 & Mixed SSE & Combined & 4 & 100 & 100.0 & 0 & 0 \\
Diamond-7 & Mixed SSE & Combined & 8 & 100 & 100.0 & 0 & 0 \\
RGG-6 & Nash & Support & -- & 300 & 52.0 & 0.01792 & 0.1297 \\
RGG-6 & Nash & Surrogate & 1 & 100 & 94.0 & 0.0008331 & 0.01468 \\
RGG-6 & Nash & Surrogate & 4 & 100 & 99.0 & 0.0001421 & 0.01421 \\
RGG-6 & Nash & Surrogate & 8 & 100 & 100.0 & 0 & 0 \\
RGG-6 & Nash & Combined & 1 & 100 & 94.0 & 0.000768 & 0.02206 \\
RGG-6 & Nash & Combined & 4 & 100 & 100.0 & 0 & 0 \\
RGG-6 & Nash & Combined & 8 & 100 & 100.0 & 0 & 0 \\
RGG-6 & Mixed SSE & Support & -- & 300 & 47.0 & 0.01724 & 0.119 \\
RGG-6 & Mixed SSE & Surrogate & 1 & 100 & 93.0 & 0.0008019 & 0.01468 \\
RGG-6 & Mixed SSE & Surrogate & 4 & 100 & 100.0 & 0 & 0 \\
RGG-6 & Mixed SSE & Surrogate & 8 & 100 & 100.0 & 0 & 0 \\
RGG-6 & Mixed SSE & Combined & 1 & 100 & 95.0 & 0.0005901 & 0.0142 \\
RGG-6 & Mixed SSE & Combined & 4 & 100 & 100.0 & 0 & 0 \\
RGG-6 & Mixed SSE & Combined & 8 & 100 & 100.0 & 0 & 0 \\
RGG-7 & Nash & Support & -- & 300 & 47.0 & 0.02436 & 0.1056 \\
RGG-7 & Nash & Surrogate & 1 & 100 & 88.0 & 0.00144 & 0.04533 \\
RGG-7 & Nash & Surrogate & 4 & 100 & 93.0 & 0.000193 & 0.007716 \\
RGG-7 & Nash & Surrogate & 8 & 100 & 99.0 & $2.756\times10^{-6}$ & 0.0002756 \\
RGG-7 & Nash & Combined & 1 & 100 & 90.0 & 0.001432 & 0.04533 \\
RGG-7 & Nash & Combined & 4 & 100 & 95.0 & 0.0001131 & 0.004388 \\
RGG-7 & Nash & Combined & 8 & 100 & 100.0 & 0 & 0 \\
RGG-7 & Mixed SSE & Support & -- & 300 & 44.0 & 0.02182 & 0.1128 \\
RGG-7 & Mixed SSE & Surrogate & 1 & 100 & 90.0 & 0.001153 & 0.03095 \\
RGG-7 & Mixed SSE & Surrogate & 4 & 100 & 96.0 & 0.0002893 & 0.01784 \\
RGG-7 & Mixed SSE & Surrogate & 8 & 100 & 100.0 & 0 & 0 \\
RGG-7 & Mixed SSE & Combined & 1 & 100 & 94.0 & 0.0006107 & 0.03095 \\
RGG-7 & Mixed SSE & Combined & 4 & 100 & 98.0 & 0.0001869 & 0.01784 \\
RGG-7 & Mixed SSE & Combined & 8 & 100 & 100.0 & 0 & 0 \\
RGG-8 & Nash & Support & -- & 300 & 55.0 & 0.02226 & 0.2543 \\
RGG-8 & Nash & Surrogate & 1 & 100 & 81.0 & 0.00441 & 0.07968 \\
RGG-8 & Nash & Surrogate & 4 & 100 & 87.0 & 0.002687 & 0.05024 \\
RGG-8 & Nash & Surrogate & 8 & 100 & 96.0 & 0.001016 & 0.04166 \\
RGG-8 & Nash & Combined & 1 & 100 & 85.0 & 0.002296 & 0.07968 \\
RGG-8 & Nash & Combined & 4 & 100 & 89.0 & 0.001096 & 0.03307 \\
RGG-8 & Nash & Combined & 8 & 100 & 96.0 & 0.0002627 & 0.01262 \\
RGG-8 & Mixed SSE & Support & -- & 300 & 28.0 & 0.02594 & 0.213 \\
RGG-8 & Mixed SSE & Surrogate & 1 & 100 & 86.0 & 0.004725 & 0.2162 \\
RGG-8 & Mixed SSE & Surrogate & 4 & 100 & 89.0 & 0.002316 & 0.07554 \\
RGG-8 & Mixed SSE & Surrogate & 8 & 100 & 91.0 & 0.001666 & 0.07554 \\
RGG-8 & Mixed SSE & Combined & 1 & 100 & 92.0 & 0.001098 & 0.04319 \\
RGG-8 & Mixed SSE & Combined & 4 & 100 & 93.0 & 0.001021 & 0.04319 \\
RGG-8 & Mixed SSE & Combined & 8 & 100 & 96.0 & 0.0003409 & 0.01761 \\
\end{longtable}
\end{landscape}

\begin{landscape}
\scriptsize
\setlength{\tabcolsep}{3pt}
\begin{longtable}{@{}lllrrrrrr@{}}
\caption{Optimization over pure configurations by topology, tie-breaking rule, and method.
Gap columns are percentages of the enumerated defender payoff range; runtime is in
milliseconds.}\label{tab:detail-radar}\\
\toprule
Topology & Tie rule & Method & Feasible (\%) & Exact (\%) & Mean gap (\%) & Max gap (\%) & Time (ms) & Evals \\
\midrule
\endfirsthead
\multicolumn{9}{l}{\tablename~\thetable\ continued from the previous page}\\
\toprule
Topology & Tie rule & Method & Feasible (\%) & Exact (\%) & Mean gap (\%) & Max gap (\%) & Time (ms) & Evals \\
\midrule
\endhead
\midrule
\multicolumn{9}{r}{Continued on the next page}\\
\endfoot
\bottomrule
\endlastfoot
Diamond-4 & Pessimistic & Exact & 100.0 & 100.0 & 0 & 0 & 0.04092 & 33.57 \\
Diamond-4 & Pessimistic & \texttt{RADAR} & 100.0 & 82.1 & 0.2207 & 3.338 & 0.05098 & 41.79 \\
Diamond-4 & Pessimistic & RAD & 100.0 & 67.0 & 0.4977 & 5.421 & 0.03597 & 32 \\
Diamond-4 & Pessimistic & Greedy & 100.0 & 0.0 & 35.81 & 61.5 & 0.00658 & 5 \\
Diamond-4 & Strong & Exact & 100.0 & 100.0 & 0 & 0 & 0.04654 & 33.57 \\
Diamond-4 & Strong & \texttt{RADAR} & 100.0 & 82.1 & 0.2207 & 3.338 & 0.05523 & 41.79 \\
Diamond-4 & Strong & RAD & 100.0 & 67.0 & 0.4977 & 5.421 & 0.03495 & 32 \\
Diamond-4 & Strong & Greedy & 100.0 & 0.0 & 35.81 & 61.5 & 0.00637 & 5 \\
Diamond-7 & Pessimistic & Exact & 100.0 & 100.0 & 0 & 0 & 4.277 & 3657 \\
Diamond-7 & Pessimistic & \texttt{RADAR} & 100.0 & 22.7 & 2.023 & 9.673 & 0.2398 & 182.3 \\
Diamond-7 & Pessimistic & RAD & 100.0 & 3.7 & 4.911 & 10.66 & 0.1314 & 116 \\
Diamond-7 & Pessimistic & Greedy & 100.0 & 0.0 & 74.04 & 77.99 & 0.01285 & 9 \\
Diamond-7 & Strong & Exact & 100.0 & 100.0 & 0 & 0 & 4.399 & 3657 \\
Diamond-7 & Strong & \texttt{RADAR} & 100.0 & 22.7 & 2.023 & 9.673 & 0.2548 & 182.3 \\
Diamond-7 & Strong & RAD & 100.0 & 3.7 & 4.911 & 10.66 & 0.1313 & 116 \\
Diamond-7 & Strong & Greedy & 100.0 & 0.0 & 74.04 & 77.99 & 0.01346 & 9 \\
RGG-6 & Pessimistic & Exact & 100.0 & 100.0 & 0 & 0 & 2.484 & 2212 \\
RGG-6 & Pessimistic & \texttt{RADAR} & 100.0 & 55.1 & 7.112 & 40.69 & 0.2845 & 226.1 \\
RGG-6 & Pessimistic & RAD & 100.0 & 23.0 & 12.48 & 62.27 & 0.1836 & 163.4 \\
RGG-6 & Pessimistic & Greedy & 100.0 & 23.0 & 42.91 & 76.77 & 0.02213 & 17.12 \\
RGG-6 & Strong & Exact & 100.0 & 100.0 & 0 & 0 & 2.492 & 2212 \\
RGG-6 & Strong & \texttt{RADAR} & 100.0 & 55.1 & 7.112 & 40.69 & 0.2963 & 226.1 \\
RGG-6 & Strong & RAD & 100.0 & 23.0 & 12.48 & 62.27 & 0.1819 & 163.4 \\
RGG-6 & Strong & Greedy & 100.0 & 23.0 & 42.91 & 76.77 & 0.02264 & 17.12 \\
RGG-7 & Pessimistic & Exact & 100.0 & 100.0 & 0 & 0 & 19.7 & 16{,}850 \\
RGG-7 & Pessimistic & \texttt{RADAR} & 100.0 & 43.4 & 5.066 & 45.53 & 0.4789 & 348.7 \\
RGG-7 & Pessimistic & RAD & 100.0 & 10.6 & 14.08 & 49.04 & 0.2729 & 236.6 \\
RGG-7 & Pessimistic & Greedy & 100.0 & 27.7 & 39.22 & 77.96 & 0.03716 & 26.03 \\
RGG-7 & Strong & Exact & 100.0 & 100.0 & 0 & 0 & 19.69 & 16{,}850 \\
RGG-7 & Strong & \texttt{RADAR} & 100.0 & 43.4 & 5.066 & 45.53 & 0.4942 & 348.7 \\
RGG-7 & Strong & RAD & 100.0 & 10.6 & 14.08 & 49.04 & 0.2747 & 236.6 \\
RGG-7 & Strong & Greedy & 100.0 & 27.7 & 39.22 & 77.96 & 0.03668 & 26.03 \\
RGG-8 & Pessimistic & Exact & 100.0 & 100.0 & 0 & 0 & 68.6 & 56{,}810 \\
RGG-8 & Pessimistic & \texttt{RADAR} & 100.0 & 33.1 & 3.777 & 48.36 & 0.5889 & 412.1 \\
RGG-8 & Pessimistic & RAD & 100.0 & 11.3 & 11.6 & 75.47 & 0.3231 & 275.9 \\
RGG-8 & Pessimistic & Greedy & 100.0 & 21.4 & 22.28 & 77.64 & 0.05399 & 38.32 \\
RGG-8 & Strong & Exact & 100.0 & 100.0 & 0 & 0 & 68.63 & 56{,}810 \\
RGG-8 & Strong & \texttt{RADAR} & 100.0 & 33.1 & 3.777 & 48.36 & 0.5976 & 412.1 \\
RGG-8 & Strong & RAD & 100.0 & 11.3 & 11.6 & 75.47 & 0.3233 & 275.9 \\
RGG-8 & Strong & Greedy & 100.0 & 21.4 & 22.28 & 77.64 & 0.05393 & 38.32 \\
\end{longtable}
\end{landscape}

\begin{landscape}
\scriptsize
\setlength{\tabcolsep}{3pt}
\begin{longtable}{@{}lllrrrrrrrr@{}}
\caption{\texttt{RGR} on enumerable games. Defender differences are percentages of the
enumerated defender payoff range.}\label{tab:detail-rgr-enumerable}\\
\toprule
Topology & Game & Candidate source & $K_{\mathrm{cand}}$ & \shortstack{Completed\\(\%)} & \shortstack{Mean\\$100\delta_D$} & \shortstack{Mean\\$100|\delta_D|$} & \shortstack{Max\\$100|\delta_D|$} & \shortstack{Mean route\\gap} & \shortstack{Mean generated\\action gap} & Runs \\
\midrule
\endfirsthead
\multicolumn{11}{l}{\tablename~\thetable\ continued from the previous page}\\
\toprule
Topology & Game & Candidate source & $K_{\mathrm{cand}}$ & \shortstack{Completed\\(\%)} & \shortstack{Mean\\$100\delta_D$} & \shortstack{Mean\\$100|\delta_D|$} & \shortstack{Max\\$100|\delta_D|$} & \shortstack{Mean route\\gap} & \shortstack{Mean generated\\action gap} & Runs \\
\midrule
\endhead
\midrule
\multicolumn{11}{r}{Continued on the next page}\\
\endfoot
\bottomrule
\endlastfoot
Diamond-4 & Nash & Support & -- & 100.0 & 2.058 & 2.058 & 4.34 & 0 & 0 & 300 \\
Diamond-4 & Nash & Surrogate & 1 & 100.0 & 2.058 & 2.058 & 4.34 & 0 & 0 & 100 \\
Diamond-4 & Nash & Surrogate & 4 & 100.0 & 2.058 & 2.058 & 4.34 & 0 & 0 & 100 \\
Diamond-4 & Nash & Surrogate & 8 & 100.0 & 2.058 & 2.058 & 4.34 & 0 & 0 & 100 \\
Diamond-4 & Nash & Combined & 1 & 100.0 & 2.058 & 2.058 & 4.34 & 0 & 0 & 100 \\
Diamond-4 & Nash & Combined & 4 & 100.0 & 2.058 & 2.058 & 4.34 & 0 & 0 & 100 \\
Diamond-4 & Nash & Combined & 8 & 100.0 & 2.058 & 2.058 & 4.34 & 0 & 0 & 100 \\
Diamond-4 & Mixed SSE & Support & -- & 100.0 & 2.058 & 2.058 & 4.34 & 0 & 0 & 300 \\
Diamond-4 & Mixed SSE & Surrogate & 1 & 100.0 & 2.058 & 2.058 & 4.34 & 0 & 0 & 100 \\
Diamond-4 & Mixed SSE & Surrogate & 4 & 100.0 & 2.058 & 2.058 & 4.34 & 0 & 0 & 100 \\
Diamond-4 & Mixed SSE & Surrogate & 8 & 100.0 & 2.058 & 2.058 & 4.34 & 0 & 0 & 100 \\
Diamond-4 & Mixed SSE & Combined & 1 & 100.0 & 2.058 & 2.058 & 4.34 & 0 & 0 & 100 \\
Diamond-4 & Mixed SSE & Combined & 4 & 100.0 & 2.058 & 2.058 & 4.34 & 0 & 0 & 100 \\
Diamond-4 & Mixed SSE & Combined & 8 & 100.0 & 2.058 & 2.058 & 4.34 & 0 & 0 & 100 \\
Diamond-7 & Nash & Support & -- & 100.0 & 0 & 0 & 0 & 0.01196 & 0 & 300 \\
Diamond-7 & Nash & Surrogate & 1 & 100.0 & 0 & 0 & 0 & 0.0001277 & 0 & 100 \\
Diamond-7 & Nash & Surrogate & 4 & 100.0 & 0 & 0 & 0 & 0 & 0 & 100 \\
Diamond-7 & Nash & Surrogate & 8 & 100.0 & 0 & 0 & 0 & 0 & 0 & 100 \\
Diamond-7 & Nash & Combined & 1 & 100.0 & 0 & 0 & 0 & 0 & 0 & 100 \\
Diamond-7 & Nash & Combined & 4 & 100.0 & 0 & 0 & 0 & 0 & 0 & 100 \\
Diamond-7 & Nash & Combined & 8 & 100.0 & 0 & 0 & 0 & 0 & 0 & 100 \\
Diamond-7 & Mixed SSE & Support & -- & 100.0 & 0 & 0 & 0 & 0.009809 & 0 & 300 \\
Diamond-7 & Mixed SSE & Surrogate & 1 & 100.0 & 0 & 0 & 0 & 0.0003488 & 0 & 100 \\
Diamond-7 & Mixed SSE & Surrogate & 4 & 100.0 & 0 & 0 & 0 & 0 & 0 & 100 \\
Diamond-7 & Mixed SSE & Surrogate & 8 & 100.0 & 0 & 0 & 0 & 0 & 0 & 100 \\
Diamond-7 & Mixed SSE & Combined & 1 & 100.0 & 0 & 0 & 0 & 0 & 0 & 100 \\
Diamond-7 & Mixed SSE & Combined & 4 & 100.0 & 0 & 0 & 0 & 0 & 0 & 100 \\
Diamond-7 & Mixed SSE & Combined & 8 & 100.0 & 0 & 0 & 0 & 0 & 0 & 100 \\
RGG-6 & Nash & Support & -- & 100.0 & -0.0939 & 0.1683 & 1.569 & 0.01792 & 0 & 300 \\
RGG-6 & Nash & Surrogate & 1 & 100.0 & 0.054 & 0.0586 & 0.9174 & 0.0008331 & 0 & 100 \\
RGG-6 & Nash & Surrogate & 4 & 100.0 & 0.0478 & 0.0478 & 0.9174 & 0.0001421 & 0 & 100 \\
RGG-6 & Nash & Surrogate & 8 & 100.0 & 0.0402 & 0.0402 & 0.8186 & 0 & 0 & 100 \\
RGG-6 & Nash & Combined & 1 & 100.0 & 0.0421 & 0.0447 & 0.9174 & 0.000768 & 0 & 100 \\
RGG-6 & Nash & Combined & 4 & 100.0 & 0.0382 & 0.0382 & 0.9174 & 0 & 0 & 100 \\
RGG-6 & Nash & Combined & 8 & 100.0 & 0.04 & 0.04 & 0.8186 & 0 & 0 & 100 \\
RGG-6 & Mixed SSE & Support & -- & 100.0 & -0.1133 & 0.1658 & 1.569 & 0.01724 & 0 & 300 \\
RGG-6 & Mixed SSE & Surrogate & 1 & 100.0 & 0.0311 & 0.0357 & 0.696 & 0.0008019 & 0 & 100 \\
RGG-6 & Mixed SSE & Surrogate & 4 & 100.0 & 0.0441 & 0.0441 & 0.8186 & 0 & 0 & 100 \\
RGG-6 & Mixed SSE & Surrogate & 8 & 100.0 & 0.034 & 0.034 & 0.8186 & 0 & 0 & 100 \\
RGG-6 & Mixed SSE & Combined & 1 & 100.0 & 0.0323 & 0.0349 & 0.696 & 0.0005901 & 0 & 100 \\
RGG-6 & Mixed SSE & Combined & 4 & 100.0 & 0.0376 & 0.0376 & 0.8186 & 0 & 0 & 100 \\
RGG-6 & Mixed SSE & Combined & 8 & 100.0 & 0.0406 & 0.0406 & 0.8186 & 0 & 0 & 100 \\
RGG-7 & Nash & Support & -- & 100.0 & -0.0713 & 0.3994 & 15.14 & 0.02436 & 0 & 300 \\
RGG-7 & Nash & Surrogate & 1 & 100.0 & -0.1005 & 0.2133 & 15.69 & 0.00144 & 0 & 100 \\
RGG-7 & Nash & Surrogate & 4 & 100.0 & -0.0975 & 0.2162 & 15.69 & 0.000193 & 0 & 100 \\
RGG-7 & Nash & Surrogate & 8 & 100.0 & -0.0836 & 0.2301 & 15.69 & $2.756\times10^{-6}$ & 0 & 100 \\
RGG-7 & Nash & Combined & 1 & 100.0 & -0.0982 & 0.2155 & 15.69 & 0.001432 & 0 & 100 \\
RGG-7 & Nash & Combined & 4 & 100.0 & -0.0972 & 0.2165 & 15.69 & 0.0001131 & 0 & 100 \\
RGG-7 & Nash & Combined & 8 & 100.0 & -0.083 & 0.2308 & 15.69 & 0 & 0 & 100 \\
RGG-7 & Mixed SSE & Support & -- & 100.0 & -0.0505 & 0.1101 & 1.44 & 0.02182 & 0 & 300 \\
RGG-7 & Mixed SSE & Surrogate & 1 & 100.0 & 0.0577 & 0.0579 & 0.6459 & 0.001153 & 0 & 100 \\
RGG-7 & Mixed SSE & Surrogate & 4 & 100.0 & 0.0672 & 0.0672 & 1.231 & 0.0002893 & 0 & 100 \\
RGG-7 & Mixed SSE & Surrogate & 8 & 100.0 & 0.0633 & 0.0633 & 0.6459 & 0 & 0 & 100 \\
RGG-7 & Mixed SSE & Combined & 1 & 100.0 & 0.0593 & 0.0593 & 0.6459 & 0.0006107 & 0 & 100 \\
RGG-7 & Mixed SSE & Combined & 4 & 100.0 & 0.0621 & 0.0621 & 0.6459 & 0.0001869 & 0 & 100 \\
RGG-7 & Mixed SSE & Combined & 8 & 100.0 & 0.0644 & 0.0644 & 0.6459 & 0 & 0 & 100 \\
RGG-8 & Nash & Support & -- & 100.0 & 0 & 0 & 0 & 0.02226 & 0 & 300 \\
RGG-8 & Nash & Surrogate & 1 & 100.0 & 0.0062 & 0.0062 & 0.6211 & 0.00441 & 0 & 100 \\
RGG-8 & Nash & Surrogate & 4 & 100.0 & 0 & 0 & 0 & 0.002687 & 0 & 100 \\
RGG-8 & Nash & Surrogate & 8 & 100.0 & 0 & 0 & 0 & 0.001016 & 0 & 100 \\
RGG-8 & Nash & Combined & 1 & 100.0 & 0.0062 & 0.0062 & 0.6211 & 0.002296 & 0 & 100 \\
RGG-8 & Nash & Combined & 4 & 100.0 & 0 & 0 & 0 & 0.001096 & 0 & 100 \\
RGG-8 & Nash & Combined & 8 & 100.0 & 0 & 0 & 0 & 0.0002627 & 0 & 100 \\
RGG-8 & Mixed SSE & Support & -- & 100.0 & 0 & 0 & 0 & 0.02594 & 0 & 300 \\
RGG-8 & Mixed SSE & Surrogate & 1 & 100.0 & 0 & 0 & 0 & 0.004725 & 0 & 100 \\
RGG-8 & Mixed SSE & Surrogate & 4 & 100.0 & 0 & 0 & 0 & 0.002316 & 0 & 100 \\
RGG-8 & Mixed SSE & Surrogate & 8 & 100.0 & 0 & 0 & 0 & 0.001666 & 0 & 100 \\
RGG-8 & Mixed SSE & Combined & 1 & 100.0 & 0 & 0 & 0 & 0.001098 & 0 & 100 \\
RGG-8 & Mixed SSE & Combined & 4 & 100.0 & 0 & 0 & 0 & 0.001021 & 0 & 100 \\
RGG-8 & Mixed SSE & Combined & 8 & 100.0 & 0 & 0 & 0 & 0.0003409 & 0 & 100 \\
\end{longtable}
\end{landscape}

\begin{landscape}
\scriptsize
\setlength{\tabcolsep}{3pt}
\begin{longtable}{@{}lllrrrr@{}}
\caption{\texttt{RGR} on larger games without exhaustive defender payoff comparisons.}\label{tab:detail-rgr-large}\\
\toprule
Topology & Game & Candidate source & $K_{\mathrm{cand}}$ & Runs & \shortstack{Completed\\(\%)} & \shortstack{Mean generated\\action gap} \\
\midrule
\endfirsthead
\multicolumn{7}{l}{\tablename~\thetable\ continued from the previous page}\\
\toprule
Topology & Game & Candidate source & $K_{\mathrm{cand}}$ & Runs & \shortstack{Completed\\(\%)} & \shortstack{Mean generated\\action gap} \\
\midrule
\endhead
\midrule
\multicolumn{7}{r}{Continued on the next page}\\
\endfoot
\bottomrule
\endlastfoot
Diamond-10 & Nash & Support & -- & 300 & 100.0 & 0 \\
Diamond-10 & Nash & Surrogate & 1 & 100 & 100.0 & 0 \\
Diamond-10 & Nash & Surrogate & 4 & 100 & 100.0 & 0 \\
Diamond-10 & Nash & Surrogate & 8 & 100 & 100.0 & 0 \\
Diamond-10 & Nash & Combined & 1 & 100 & 100.0 & 0 \\
Diamond-10 & Nash & Combined & 4 & 100 & 100.0 & 0 \\
Diamond-10 & Nash & Combined & 8 & 100 & 100.0 & 0 \\
Diamond-10 & Mixed SSE & Support & -- & 300 & 100.0 & 0 \\
Diamond-10 & Mixed SSE & Surrogate & 1 & 100 & 100.0 & 0 \\
Diamond-10 & Mixed SSE & Surrogate & 4 & 100 & 100.0 & 0 \\
Diamond-10 & Mixed SSE & Surrogate & 8 & 100 & 100.0 & 0 \\
Diamond-10 & Mixed SSE & Combined & 1 & 100 & 100.0 & 0 \\
Diamond-10 & Mixed SSE & Combined & 4 & 100 & 100.0 & 0 \\
Diamond-10 & Mixed SSE & Combined & 8 & 100 & 100.0 & 0 \\
RGG-9 & Nash & Support & -- & 300 & 100.0 & 0 \\
RGG-9 & Nash & Surrogate & 1 & 100 & 100.0 & 0 \\
RGG-9 & Nash & Surrogate & 4 & 100 & 100.0 & 0 \\
RGG-9 & Nash & Surrogate & 8 & 100 & 100.0 & 0 \\
RGG-9 & Nash & Combined & 1 & 100 & 100.0 & 0 \\
RGG-9 & Nash & Combined & 4 & 100 & 100.0 & 0 \\
RGG-9 & Nash & Combined & 8 & 100 & 100.0 & 0.001059 \\
RGG-9 & Mixed SSE & Support & -- & 300 & 100.0 & 0 \\
RGG-9 & Mixed SSE & Surrogate & 1 & 100 & 100.0 & 0.0006315 \\
RGG-9 & Mixed SSE & Surrogate & 4 & 100 & 100.0 & 0 \\
RGG-9 & Mixed SSE & Surrogate & 8 & 100 & 100.0 & 0 \\
RGG-9 & Mixed SSE & Combined & 1 & 100 & 100.0 & 0 \\
RGG-9 & Mixed SSE & Combined & 4 & 100 & 100.0 & 0.0007769 \\
RGG-9 & Mixed SSE & Combined & 8 & 100 & 100.0 & 0.0008968 \\
RGG-10 & Nash & Support & -- & 300 & 91.0 & 0.1803 \\
RGG-10 & Nash & Surrogate & 1 & 100 & 88.0 & 0.2955 \\
RGG-10 & Nash & Surrogate & 4 & 100 & 84.0 & 0.2212 \\
RGG-10 & Nash & Surrogate & 8 & 100 & 87.0 & 0.252 \\
RGG-10 & Nash & Combined & 1 & 100 & 88.0 & 0.2029 \\
RGG-10 & Nash & Combined & 4 & 100 & 86.0 & 0.2273 \\
RGG-10 & Nash & Combined & 8 & 100 & 87.0 & 0.2472 \\
RGG-10 & Mixed SSE & Support & -- & 300 & 97.0 & 0.02844 \\
RGG-10 & Mixed SSE & Surrogate & 1 & 100 & 100.0 & 0 \\
RGG-10 & Mixed SSE & Surrogate & 4 & 100 & 100.0 & 0 \\
RGG-10 & Mixed SSE & Surrogate & 8 & 100 & 100.0 & $1.789\times10^{-5}$ \\
RGG-10 & Mixed SSE & Combined & 1 & 100 & 99.0 & 0.004716 \\
RGG-10 & Mixed SSE & Combined & 4 & 100 & 99.0 & 0.004699 \\
RGG-10 & Mixed SSE & Combined & 8 & 100 & 100.0 & $9.791\times10^{-5}$ \\
\end{longtable}
\end{landscape}

\begin{landscape}
\scriptsize
\setlength{\tabcolsep}{1.5pt}
\begin{longtable}{@{}
>{\raggedright\arraybackslash}p{0.10\linewidth}
>{\raggedright\arraybackslash}p{0.29\linewidth}
>{\raggedright\arraybackslash}p{0.29\linewidth}
>{\centering\arraybackslash}p{0.07\linewidth}
>{\centering\arraybackslash}p{0.07\linewidth}
>{\centering\arraybackslash}p{0.07\linewidth}
>{\centering\arraybackslash}p{0.07\linewidth}@{}}
\caption{Complete numerical checks of the predicted Nash--Stackelberg payoff relationships.
Mean and maximum entries are residuals, and ``Bound'' is the allowed residual.}\label{tab:detail-relations}\\
\toprule
Profile & Evaluator check & Prediction & Passed (\%) & Mean & Max & Bound \\
\midrule
\endfirsthead
\multicolumn{7}{l}{\tablename~\thetable\ continued from the previous page}\\
\toprule
Profile & Evaluator check & Prediction & Passed (\%) & Mean & Max & Bound \\
\midrule
\endhead
\midrule
\multicolumn{7}{r}{Continued on the next page}\\
\endfoot
\bottomrule
\endlastfoot
Common ratio & \nolinkurl{common-ratio-attacker-gap-bounds} & $0\le U_A(\text{mixed SSE})-U_A(\text{mixed NE})\le$ action constant range$/\lambda$ & 100.0 & 0 & 0 & 0 \\
Common ratio & \nolinkurl{common-ratio-defender-gap-bounds} & $0\le U_D(\text{mixed SSE})-U_D(\text{mixed NE})\le$ action constant range & 100.0 & 0 & 0 & 0 \\
Common ratio & \nolinkurl{common-ratio-NE-minimax-value} & $U_A(\text{mixed NE})$ equals the attacker minimax value & 100.0 & 0 & 0 & 0 \\
Common ratio & \nolinkurl{mixed-SSE-defender-dominates-NE} & $U_D(\text{mixed SSE})\ge U_D(\text{mixed NE})$ & 100.0 & 0 & 0 & 0 \\
Common ratio & \nolinkurl{pure-SSE-is-NE-defender-deviation} & Pure SSE outcome has no profitable defender deviation & 0.0 & 8 & 8 & 0 \\
Common ratio & \nolinkurl{strategic-zero-sum-identity} & $U_D+\lambda U_A=$ action constant & 100.0 & 0 & 0 & 0 \\
Heterogeneous & \nolinkurl{mixed-SSE-defender-dominates-NE} & $U_D(\text{mixed SSE})\ge U_D(\text{mixed NE})$ & 100.0 & 0 & 0 & 0 \\
Heterogeneous & \nolinkurl{pure-SSE-is-NE-defender-deviation} & Pure SSE outcome has no profitable defender deviation & 0.0 & 5 & 5 & 0 \\
Heterogeneous & \nolinkurl{single-route-mixed-SSE-attacker-bound} & $U_A(\text{mixed SSE})\ge U_A(\text{mixed NE})$ & 100.0 & 0 & 0 & 0 \\
Heterogeneous & \nolinkurl{single-route-NE-minimax-value} & $U_A(\text{mixed NE})$ equals the attacker minimax value & 100.0 & 0 & 0 & 0 \\
Heterogeneous & \nolinkurl{single-route-pure-SSE-attacker-bound} & $U_A(\text{pure SSE})\ge U_A(\text{mixed NE})$ & 100.0 & 0 & 0 & 0 \\
Weighted zero sum & \nolinkurl{mixed-SSE-defender-dominates-NE} & $U_D(\text{mixed SSE})\ge U_D(\text{mixed NE})$ & 100.0 & 0 & 0 & 0 \\
Weighted zero sum & \nolinkurl{pure-SSE-is-NE-defender-deviation} & Pure SSE outcome has no profitable defender deviation & 0.0 & 8 & 8 & 0 \\
Weighted zero sum & \nolinkurl{randomization-gain-identity} & $\Delta U_D=\lambda[U_A(\text{pure SSE})-U_A(\text{mixed SSE})]$ & 100.0 & 0 & 0 & 0 \\
Weighted zero sum & \nolinkurl{weighted-zero-sum-NE-minimax-value} & $U_A(\text{mixed NE})$ equals the attacker minimax value & 100.0 & 0 & 0 & 0 \\
Weighted zero sum & \nolinkurl{weighted-zero-sum-payoff-identity} & $U_D+\lambda U_A=0$ & 100.0 & 0 & 0 & 0 \\
Weighted zero sum & \nolinkurl{weighted-zero-sum-pure-mixed-attacker-bound} & $U_A(\text{mixed SSE})\le U_A(\text{pure SSE})$ & 100.0 & 0 & 0 & 0 \\
Weighted zero sum & \nolinkurl{weighted-zero-sum-pure-mixed-defender-bound} & $U_D(\text{pure SSE})\le U_D(\text{mixed SSE})$ & 100.0 & 0 & 0 & 0 \\
Weighted zero sum & \nolinkurl{weighted-zero-sum-SSE-NE-attacker-equality} & $U_A(\text{mixed SSE})=U_A(\text{mixed NE})$ & 100.0 & 0 & 0 & 0 \\
Weighted zero sum & \nolinkurl{weighted-zero-sum-SSE-NE-defender-equality} & $U_D(\text{mixed SSE})=U_D(\text{mixed NE})$ & 100.0 & 0 & 0 & 0 \\
\end{longtable}
\end{landscape}

\clearpage
\bibliographystyle{IEEEtran}
\bibliography{references}
\end{document}